
\newif\ifjournal\journaltrue 


\documentclass{LMCS}

\def\dOi{9(3:11)2013}
\lmcsheading%
{\dOi}
{1--39}
{}
{}
{Jan.~29, 2012}
{Sep.~\phantom09, 2013}
{}

\usepackage{enumerate}

\theoremstyle{plain}

\usepackage{amssymb,amsfonts,latexsym,alltt,xspace}
\newcommand{\cal}{\mathcal}
\newcommand{\citeT}[2]{#1~\cite{#2}}
\newcommand{\citep}[1]{\cite{#1}}

\usepackage{mathptmx}

\usepackage{eucal}

\usepackage{graphicx}
\usepackage{url}



\usepackage{color}
\definecolor{dkblue}{rgb}{0,0.1,0.5}
\definecolor{dkgreen}{rgb}{0,0.4,0}
\definecolor{dkred}{rgb}{0.6,0,0}
\definecolor{linkColor}{rgb}{0,0,0.5}

\usepackage[
  colorlinks=true,
  linkcolor=linkColor,
  citecolor=linkColor,
  pagecolor=linkColor,
  urlcolor=linkColor,
  pdftitle={Measure Transformer Semantics for Bayesian Machine Learning},
  pdfsubject={},
  pdfauthor={Johannes Borgstrom, Andrew D. Gordon, Michael Greenberg, James Margetson, Jurgen Van Gael}
  pdfkeywords={}
]{hyperref}

\newcommand\maybecolor[1]{\color{#1}}

\usepackage{url}
\usepackage{listings}
\usepackage{version}
\usepackage{extarrows}

\newenvironment{prog}{\begin{array}[t]{@{}l@{}}}{\end{array}}



\excludeversion{HIDE}
\setlength{\marginparwidth}{1in}
\newcommand{\displaycomment}[1]{}
\excludeversion{DRAFT}

\includeversion{FULL}
\excludeversion{SHORT}
\newcommand{\MySubsection}[1]{\subsection{#1}}
\newcommand{\MySubSubsection}[1]{\subsubsection*{#1}}

\newcommand{\Note}[3][dkgreen]{{\color{#1}{\bf #2: }#3}}
\newcommand{\todonote}[2]{
\displaycomment{\Note[dkred]{#1}{#2}}%
\typeout{TODO #1: (\thepage) #2}}
\newcommand{\Remark}[2]{
\displaycomment{\Note{(#1}{#2)}}}
\newcommand{\ADG}[1]{\todonote{Andy}{#1}}
\newcommand{\adg}[1]{\Remark{Andy}{#1}}
\newcommand{\JB}[1]{\todonote{JB}{#1}}
\newcommand{\jb}[1]{\Remark{JB}{#1}}

\newcommand{\mmg}[1]{\Remark{MMG}{#1}}

\newcommand{\fun}{Fun\xspace}
\newcommand{\imp}{Imp\xspace}

\newcommand{\bfun}{Bernoulli \fun}
\newcommand{\fsharp}{F\#\xspace}
\newcommand{\csoft}{Csoft\xspace}


\def\lstCaml{\lstset{language=[Objective]Caml,
  morekeywords=[1]{type,val,fun,let,in,ref,of,try,if,then,else,match,with,do,open,module,member,rec,assume,assert,init,sample,random,marginal,observe,weight,guard,nil,yield,unit,bool,int,real,float,local,fail},
  morekeywords=[2]{public,interface,class},
  morekeywords=[3]{},%
  morekeywords=[4]{bn,fn,fv,dom,env,clauses},%
  morestring=[b]",
  sensitive=true,%
  columns=[l]fullflexible,
  texcl=true,
  mathescape=true,
  identifierstyle={\sffamily\small\maybecolor{dkgreen}},
  keywordstyle=[1]{\bfseries\maybecolor{dkblue}},
  keywordstyle=[2]{\bfseries\maybecolor{dkblue}},
  keywordstyle=[3]{\bfseries\maybecolor{black}},
  keywordstyle=[4]{\rmfamily\itshape},
  morecomment=[s]{(*}{*)},
  morecomment=[is]{(*---\ }{*)},
  morecomment=*[l][identifierstyle]{//},
  rangeprefix=(*---\ ,
  includerangemarker=false,
  stringstyle=\ttfamily,
  commentstyle=\rmfamily\itshape,
  showspaces=false,
  showstringspaces=false,
  literate={/\\}{$\wedge\,$}{2} {\\/}{$\vee\,$}{2}
  {>>>}{$\Athen[]\,$}{1}
  {|->}{$\mapsto$}{1}
  {->}{$\rightarrow\,$}{1} {=>}{$\Rightarrow\,$}{1}
  {<-}{$\leftarrow\,$}{1}
  {<=>}{$\Leftrightarrow\,$}{3} {'a}{{\small$\alpha\,$}}{1}
  {'b}{{\small$\beta\,$}}{1} {'c}{{\small$\gamma\,$}}{1}
  {'d}{{\small$\delta\,$}}{1} {'e}{{\small$\epsilon\,$}}{1}
  {'f}{{\small$\phi\,$}}{1} {=def}{$\deq\ $}{3}
  {~}{$\neg$}{1},
  tabsize=2,
  breaklines=true}}

\lstCaml
\let\ls\lstinline
\newcommand{\kw}[1]{\mbox{\normalfont\lstinline{#1}}}

\def\lstsnippet#1#2#3{\lstinputlisting[linerange=#2-#3]{#1}}
\def\lstfrag#1/#2.{\lstsnippet{#1}{#2Begin}{#2End}}

\definecolor{gray}{rgb}{0.8,0.8,0.8}
\lstnewenvironment{rcfcode}
  {\lstset{frame=single,frameround=tttt,rulecolor=\color{gray},xrightmargin=7pt,xleftmargin=7pt}}
  {}

\newcommand{\GAP}{1ex}

\makeatletter
  \newcommand{\addToLabel}[1]{%
    \protected@edef\@currentlabel{\@currentlabel#1}%
  }
\makeatother


\newcommand{\QED}{}
\newenvironment{restate}[1]%
{\begin{trivlist}\item[]{\normalsize\bf Restatement of #1}\hspace*{4mm}\it}%
  {\end{trivlist}}


\newcommand{\hbra}{
  \hbox to \columnwidth{\vrule width0.3mm height 1.8mm depth-0.3mm
    \leaders\hrule height1.8mm depth-1.5mm\hfill
    \vrule width0.3mm height 1.8mm depth-0.3mm}}
\newcommand{\hket}{
  \hbox to \columnwidth{\vrule width0.3mm height1.5mm
    \leaders\hrule height0.3mm\hfill
    \vrule width0.3mm height1.5mm}}


\newcommand{\ratio}{.35}

\newenvironment{display}[2][\ratio]{
  \begin{tabbing}
    \hspace{0.1em} \= \hspace{1.5em} \= \hspace{#1\linewidth-3.2em} \= \hspace{1.5em} \= \kill
    \textbf{#2}\\[-.8ex]
    \hbra\\[-.8ex]
  }
  {\\[-.8ex]\hket
  \end{tabbing}}

\newcommand{\entry}[2]{\>\>$#1$\>\>#2}
\newcommand{\clause}[3][]{\>$#2$\>#1\>#3}
\newcommand{\category}[2]{\clause{#1::=}{#2}}
\newcommand{\smallCategory}[3]{\clause{#1::={#2}}{#3}}
\newcommand{\extendCategory}[2]{\clause{#1::=\;\dots\mid}{#2}}

\newcommand{\ruleclause}{\>}

 \lstnewenvironment{displaylisting}[1]
   {~\\\noindent\textbf{#1}\\[-.8ex]\hbra\vspace{-2ex}}
   {\vspace{-2ex}\hket\vspace{1ex}}


\newcounter{rule}

\newcommand{\staterule}[4][]{%
  \refstepcounter{rule}%
  \addToLabel{(\textsc{#2})}\label{#2}%
  $\begin{array}[b]{@{}l@{}}%
    \mbox{(\textsc{#2})#1}\\%
    \begin{array}{@{}c@{}}
      #3\\
      \hline      \raisebox{0ex}[2.5ex]{\strut}#4%
    \end{array}
  \end{array}$}


\newcommand{\Sref}[1]{Section~\ref{#1}}


\newcommand{\Set}[1]{\{#1\}}                    
\newcommand{\deq}{\triangleq}

\newcommand{\Substfun} [2] {\tiny{{}^{#1}} \!\!/\! _{#2}}
\renewcommand{\Substfun} [2] {\raisebox{.5ex}{\ensuremath{\scriptstyle{#1}}}\!/\! _{#2}}
\newcommand{\Subst}    [2] {\left\{ \Substfun{#1}{#2} \right\}}
\newcommand{\SubstSeq} [3][n]{\left\{ \Substfun{#2_1}{#3_1} \cdots\Substfun{#2_{#1}}{#3_{#1}}\right\}}

\renewcommand{\implies}{\Rightarrow}

\newcommand{\ol}[1]{\overline{#1}}

\newcommand{\qq}[1]{[\hspace{-.25ex}[#1]\hspace{-.25ex}]}

\newcommand{\ty}{:}


\renewcommand{\emptyset}{\varnothing}
\newcommand{\0}{\emptyset}
\newcommand{\nin}{\notin}
\newcommand{\fv}{\operatorname{fv}}               
\newcommand{\dom}{\operatorname{dom}}             
\newcommand{\locs}{\operatorname{locs}}

\newcommand{\tty}{\mathord{:}}
\newcommand{\emptyEnv}{\varepsilon}

\newcommand{\translatorFont}[1]{\mathcal{#1}}
\newcommand{\Alg}[1]{\mathcal{#1}}

\newcommand{\funF}[1]{\kw{#1}}

\newcommand{\typeF}[1]{\funF{#1}}

\newcommand{\graphF}[1]{\mathsf{#1}}
\newcommand{\metaF}[1]{\mathtt{#1}}

\newcommand{\vqq}[2][\;]{\translatorFont{V}\qq{#2}#1}
\newcommand{\tqq}[1]{\translatorFont{T}\qq{#1}}
\newcommand{\pqq}[2][\;]{\translatorFont{J}\qq{#2}#1}
\newcommand{\Pqq}[2][\Sigma']{\translatorFont{J}\qq{#2}^{#1}_\Sigma}

\newcommand{\mAlg}{\Alg{M}}

\newcommand{\realT}{\typeF{real}}
\newcommand{\boolT}{\typeF{bool}}
\newcommand{\intT}{\typeF{int}}
\newcommand{\unitT}{\typeF{unit}}
\newcommand{\State}[1][\Ls]{\metaF{S}\langle #1 \rangle}
\newcommand{\Dist}[1][\Ls]{\metaF{M}\langle \State[#1] \rangle}
\newcommand{\DIST}[1][\;]{\metaF{M}#1}
\newcommand{\Sample}[1]{\kw{random}\ (#1)}
\newcommand{\Vals}[2][\!]{\mathbf{V}_{#1 #2}}

\newcommand{\Int}{\mathbb{Z}}

\newcommand{\Real}{\mathbb{R}}
\newcommand{\Bool}{\mathbb{B}}
\newcommand{\Powerset}[2][]{\mathcal{P}_{#1}(#2)}

\newcommand{\abs}[1]{\lvert #1 \rvert}

\newcommand{\given}[5][]{\mathcal{D}{#2}[#1 #4 #1\lvert\rvert#1 #3=#5 #1]}
\newcommand{\sdot}[1][\,]{{#1 \cdot #1}}
\newcommand{\PROB}{\mathbf{\operatorname{P}}}
\newcommand{\Prob}[2][]{\PROB_{#1}\left[#2\right]}

\newcommand{\PArrP}[3][]{(#2 #1\leadsto #1 #3)}
\newcommand{\PArr}[2][]{#2 #1\leadsto #1}

\newcommand{\typeOf}[1]{\operatorname{ty}(#1)}


%

\newcommand{\Lif}[3]{\kw{if}\ {#1}\ \kw{then}\ {#2}\ \kw{else}\ {#3}}
\newcommand{\Lfor}[2]{\kw{for}\ {#1}\ \kw{do}\ {#2}}
\newcommand{\Lforeach}[4]{\kw{for}\ {#1}\ \kw{in}_{#2}\ {#3} \ensuremath{\rightarrow} {#4}}
\newcommand{\aqq}[1]{\translatorFont{A}\qq{#1}}

\newcommand{\Lunit}{\kw{()}}
\newcommand{\Linit}{\kw{init}}

\newcommand{\sample}{\xleftarrow{{}_s}}
\newcommand{\assign}{\leftarrow}
\newcommand{\impdt}[1]{\translatorFont{I}\qq{#1}}
\newcommand{\impfg}[1]{\translatorFont{G}\qq{#1}}
\newcommand{\impeg}[1]{\translatorFont{E}\qq{#1}}

\newcommand{\Runs}{\Omega}
\newcommand{\run}{\omega}
\newcommand{\RV}{\kw{value}}
\newcommand{\Obs}{\kw{valid}}

\newcommand{\Range}{\mathcal{R}}

%

%

\newcommand{\Padd}[1][\;]{\metaF{add}{#1}}
\newcommand{\Pdrop}[1][\;]{\metaF{drop}{#1}}

\newcommand{\Plookup}[3][\;]{\metaF{lookup}{#1}{#2}{#1}{#3}}

\newcommand{\Punit}{\mbox{\textnormal{\texttt{()}}}}

%
\newcommand{\Athen}[1][\;]{\ensuremath{{#1}{\metaF{>\!\!\!>\!\!\!>}}{#1}}}
\newcommand{\Atuple}[1][\;]{\ensuremath{{#1}{\metaF{*\!\!*\!\!*}}{#1}}}

\newcommand{\Aleft}[1][\;]{\metaF{left}{#1}}
\newcommand{\Aarr}[1][\;]{\metaF{pure}{#1}}
\newcommand{\Aextend}[1][\;]{\metaF{extend}{#1}}
\newcommand{\Achoose}[1][\;]{\metaF{choose}{#1}}
\newcommand{\Aconstrain}[1][\;]{\metaF{observe}{#1}}
\newcommand{\Aobserve}[1][\;]{\Aconstrain[#1]}

%

\newcommand{\Femp}{\0}

\newcommand{\Fsample}[3]{\graphF{Sample}_{#2}(#1,#3)}
\newcommand{\FSELECT}{\graphF{Select}}
\newcommand{\Fselect}[4][n]{\FSELECT_{#1}(#2,#3,{#4}_1,\dots,{#4}_{#1})}
\newcommand{\Fequal}[2]{\graphF{Equal}(#1,#2)}
\newcommand{\Fconstant}[2]{\graphF{Constant}_{#2}(#1)}
\newcommand{\Fop}[4][\otimes]{\graphF{Binop}_{#1}(#2,#3,#4)}
\newcommand{\Fconstrain}[2]{\Fconstant{#1}{0_{#2}}}

\newcommand{\Fgate}[3]{\graphF{Gate}(#1,#2,#3)}
\newcommand{\New}[1]{\graphF{new}\ #1\ \graphF{in}\ }

\begin{document}

\title[Measure Transformer Semantics for Bayesian Machine Learning]{Measure Transformer Semantics for\\Bayesian Machine Learning\rsuper*}
\author[J.~Borgstr{\"o}m]{Johannes Borgstr{\"o}m\rsuper a}
\address{{\lsuper a}Dept. of Information Technology\\ Uppsala University\\ Uppsala, Sweden}
\email{borgstrom@acm.org}

\author[A.~D.~Gordon]{Andrew D.~Gordon\rsuper b}
\address{{\lsuper{b,d}}Microsoft Research\\ Cambridge, UK}{}
\email{adg@microsoft.com, jfdm1@roundwood.org}

\author[M.~Greenberg]{Michael Greenberg\rsuper c}
\address{{\lsuper c}University of Pennsylvania\\ Philadelphia, PA, USA}{}
\email{mgree@seas.upenn.edu}

\author[J.~Margetson]{James Margetson\rsuper d}
\address{\vskip-6 pt}{}

\author[J.~Van Gael]{Jurgen Van Gael\rsuper e}
\address{{\lsuper e}Microsoft FUSE Labs\\ Cambridge, UK}{}
\email{jurgen.vangael@gmail.com}

\keywords{Probabilistic Programming, Model-based Machine Learning, Programming Languages, Denotational Semantics}

\subjclass{D.3.3 [Programming Languages]: Language Constructs and Features,
I.2.5 [Artificial Intelligence]: Programming Languages and Software}

\ACMCCS{[{\bf Theory of computation}]: Semantics and
  reasoning---Program constructs; [{\bf Computing methodologies}]:
  Machine learning---Machine learning approaches}

\titlecomment{{\lsuper*}An abridged version of this paper appears in the
  proceedings of the 20th European Symposium on Programming (ESOP'11),
  part of ETAPS 2011, held in Saarbr{\"u}cken, Germany, March
  26--April 3, 2011. }

\begin{abstract}
The Bayesian approach to machine learning amounts to 
computing posterior distributions of random variables 
from a probabilistic model of how the variables are related 
(that is, a prior distribution) and a set of observations of variables.
There is a trend in machine learning towards expressing Bayesian models as probabilistic programs.
As a foundation for this kind of programming, 
we propose a core functional calculus 
with primitives for sampling prior distributions and observing variables.
We define measure-transformer combinators inspired by theorems in measure theory, 
and use these to give a rigorous semantics to our core calculus.
The original features of our semantics include 
its support for discrete, continuous, and hybrid measures,
and, in particular, for observations of zero-probability events.
We compile our core language to a small imperative language that
is processed by an existing inference engine for factor graphs,
which are data structures that enable many efficient inference algorithms.
This allows efficient approximate inference of posterior marginal distributions,
treating thousands of observations per second for large instances of realistic models.
\end{abstract}
\maketitle


\section{Introduction}


\noindent
In the past 15 years, statistical machine learning has unified many
seemingly unrelated methods through the Bayesian paradigm.
With a solid understanding of the theoretical foundations, advances in algorithms for
inference, and numerous applications, the Bayesian paradigm is now the
state of the art for learning from data.
The theme of this paper is the idea of expressing Bayesian models as probabilistic programs,
which was pioneered by BUGS \cite{GTS94:Bugs}
and is recently gaining in popularity,
witness the following list of probabilistic programming languages:
AutoBayes \cite{Schumann08:AutoBayes},
Alchemy \cite{Domingos:2008:ML:1793956.1793962},
Blaise \cite{blaise},
BLOG \cite{DBLP:conf/ijcai/MilchMRSOK05},
Church~\cite{DBLP:conf/uai/GoodmanMRBT08},
Csoft~\cite{Csoft:WM09},
FACTORIE~\cite{FACTORIE},
Figaro~\cite{DBLP:conf/ilp/Pfeffer10},
HANSEI~\cite{monolingual2009},
HBC~\cite{HBC},
IBAL~\cite{DBLP:conf/ijcai/Pfeffer01},
$\lambda_\circ$~\cite{DBLP:conf/popl/ParkPT05},
Probabilistic cc~\cite{DBLP:conf/popl/GuptaJP99},
PFP~\cite{DBLP:journals/jfp/ErwigK06},
and Probabilistic Scheme~\cite{Radul:2007:RPL:1297081.1297085}.

In particular, we draw inspiration from \csoft \cite{Csoft:WM09}, 
an imperative language where programs denote factor graphs \cite{DBLP:journals/tit/KschischangFL01},
data structures that support efficient inference algorithms \cite{koller09:PGM}.
\csoft is the native language of Infer.NET~\cite{infer.net}, a software library for Bayesian reasoning.
This paper gives an alternative probabilistic semantics 
to languages with features similar to those of \csoft.
%

\subsubsection*{Bayesian Models as Probabilistic Expressions}

Consider a simplified form of TrueSkill \cite{DBLP:conf/nips/HerbrichMG06},
a large-scale online system for ranking computer gamers.
There is a population of players, each assumed to have a skill, which is a
real number that cannot be directly observed.
We observe skills only indirectly via a series of matches.
The problem is to infer the skills of players given the outcomes of the matches.
\begin{FULL}%
  Here is a concrete example: \emph{Alice, Bob, and Cyd are new
    players. In a tournament of three games, Alice beats Bob, Bob
    beats Cyd, and Alice beats Cyd.  What are their skills?}
\end{FULL}%
In a Bayesian setting, we represent our uncertain knowledge of the
skills as continuous probability distributions.  The following
probabilistic expression models the situation by generating probability
distributions for the players' skills, given three played games (observations).

\begin{lstlisting}
// prior distributions, the hypothesis
let skill() = random (Gaussian(10.0,20.0))
let Alice,Bob,Cyd = skill(),skill(),skill()
// observe the evidence
let performance player = random (Gaussian(player,1.0))
observe (performance Alice > performance Bob) //Alice beats Bob
observe (performance Bob   > performance Cyd) //Bob beats Cyd
observe (performance Alice > performance Cyd) //Alice beats Cyd
// return the skills
Alice,Bob,Cyd
\end{lstlisting}
A run of this expression goes as follows.
We sample the skills of the three players from the \emph{prior
  distribution} $\kw{Gaussian}(10.0,20.0)$.
Such a distribution can be pictured as a bell curve centred on the \emph{mean} 10.0,
and gradually tailing off at a rate given by the \emph{variance}, here 20.0.
Sampling from such a distribution is a randomized operation that returns a
real number, most likely close to the mean.
For each match, the run continues by sampling an individual performance for
each of the two players.  Each performance is centred on the skill of a
player, with low variance, making the performance closely correlated with
but not identical to the skill.  We then observe that the winner's
performance is greater than the loser's.
An \emph{observation} $\kw{observe}\ M$ always returns $\kw{()}$,
but represents a constraint that $M$ must be true.
A whole run is valid if all encountered observations are true.
The run terminates by returning the three skills.

A classic computational method to compute an approximate posterior distribution of each of the skills is Monte Carlo sampling~\cite{Mackay03}.
We run the expression many
times, but keep just the valid runs---the ones where the sampled skills and performances are consistent with the observed outcomes.
We then compute the means of the
resulting skills by applying standard statistical formulas.
In the example above, the \emph{posterior distribution} of the
returned skills moves so that the mean of Alice's skill is greater than
 Bob's, which is greater than Cyd's.
%
To the best of our knowledge, all prior inference techniques for probabilistic languages with continuous distributions,
apart from \csoft and recent versions of IBAL~\cite{pfeffer07:ibal}, are
based on nondeterministic inference using some form of Monte Carlo sampling.\adg{added "continuous distributions" because of  work with explicit enumerations eg \cite{DBLP:journals/jfp/ErwigK06}}

Inference algorithms based on factor graphs
\cite{DBLP:journals/tit/KschischangFL01,koller09:PGM} are an efficient
alternative to Monte Carlo sampling.
Factor graphs, used in \csoft, allow deterministic but approximate inference
algorithms, which are known to be significantly more efficient than
sampling methods, where applicable.

Observations with zero probability arise naturally in Bayesian models.
For example, in the model above, a drawn game would be modelled as the performance of two players being observed to be equal.
Since the performances are randomly drawn from a continuous distribution, 
the probability of them actually being equal is zero,
so we would not expect to see \emph{any} valid runs in a Monte Carlo simulation.
(To use Monte Carlo methods, one must instead write that the absolute difference between two drawn performances is
less than some small $\epsilon$.)
However, our semantics based on measure theory makes sense of such observations.
Our semantics is the first for languages 
with continuous or hybrid distributions, such as \fun or \imp, 
that are implemented by deterministic inference via factor graphs.

\subsubsection*{Plan of the Paper}

We propose \fun:
\begin{iteMize}{$\bullet$}
\item \fun is a functional language for Bayesian models with
  primitives for probabilistic sampling and observations (\Sref{sec:prob-fun}).
\item
\fun programs have a rigorous probabilistic semantics as measure transformers (\Sref{sec:coarrows}).
\item
\fun has an efficient implementation: our system compiles \fun to
\imp (\Sref{sec:semantics-fg}), a subset of \csoft, and then relies on Infer.NET (\Sref{sec:experience}).
\item \fun supports array types and array comprehensions in order to
express Bayesian models over large datasets (\Sref{sec:arrays}). 
\end{iteMize}
\noindent
Our main contribution is a framework for finite measure transformer
semantics, which supports discrete measures,
continuous measures, and mixtures of the two,
and also supports observations of zero probability events.
\adg{decided we'd talk about "arrays" instead of lists of collections}

As a substantial application, we supply measure transformer
semantics for \fun and \imp, 
and use the semantics to verify the translations in our compiler.
Theorem~\ref{thm:correspondence-discrete} establishes agreement with
  the discrete semantics of \Sref{sec:prob-fun} for {\bfun}.
Theorem~\ref{thm:fun-to-imp-correct} establishes the correctness of the
compilation from \fun to \imp. 

We designed {\fun} to be a subset of the \fsharp dialect of ML
\cite{SGC07:ExpertFSharp}, for implementation convenience:
\fsharp reflection allows easy access to the abstract syntax of a program.
All the examples in the paper have been executed with our system, described in \Sref{sec:experience}.
%
%
  We end the paper with a description of related work (\Sref{sec:related})
  and some concluding remarks (\Sref{sec:conc}).
\begin{SHORT}
  A companion technical report \cite{BGGMVG11:MeasureTransformerSemantics-tr} includes: detailed proofs;
  extensions of \fun, \imp, and our factor graph notations with array types suitable for inference on large datasets;
  listings of examples including versions of large-scale algorithms;
  and a description, including performance numbers, of our practical implementation of a compiler from \fun to \imp, and a backend based on Infer.NET.
\end{SHORT}

\begin{FULL}%
  Appendix~\ref{app:detailed-proofs} contains proofs omitted from the main body of the paper.
  The technical report version of our paper \cite{fun-esop11} includes additional details, including the code of an {\fsharp} implementation of measure transformers in the discrete case.
\end{FULL}%

\section{Bayesian Models as Probabilistic Expressions}
\label{sec:prob-fun}

\begin{SHORT}
We present a core calculus, \fun, for Bayesian reasoning via
probabilistic functional programming with observations.
\end{SHORT}
\begin{FULL}%
  We introduce the idea of expressing a
  probabilistic model as code in a functional language, \fun, with
  primitives for generating and observing random variables.
  As an illustration, we first consider a subset, \bfun, limited to weighted Boolean choices. 
  We describe in elementary terms an operational semantics for \bfun that allows us
  to compute the conditional probability that the expression yields a
  given value given that the run was valid.
\end{FULL}%

\subsection{Syntax, Informal Semantics, and Bayesian Reading}\label{sec:fun-syntax} 

\adg{we'll use int rather than nat everywhere}
Expressions are strongly typed, with types $t,u$ built up from base scalar types $b$ and pair types.
We let $c$ range over constant data of scalar type, $n$ over integers, and $r$ over real numbers. 
We write $\typeOf{c}=t$ to mean that constant $c$ has type $t$.
For each base type $b$, we define a \emph{zero element} $0_b$ with $0_{\boolT}=\kw{true}$, and let $0_{t*u}=(0_t,0_u)$.
We have arithmetic and Boolean operations $\oplus$ on base types.
\begin{display}[.5]{Types, Constant Data, and Zero Elements:}
\smallCategory{b}{\kw{bool}\mid\kw{int}\mid\kw{real}}{base type}\\
\smallCategory{t,u}{\kw{unit}\mid b\mid (t * u)}{compound type}\\
\>$\typeOf{\Lunit} = \unitT$ 
\quad $\typeOf{\kw{true}} = \typeOf{\kw{false}} = \boolT$ \quad
$\typeOf{n} = \intT$ \quad $\typeOf{r} = \realT$ \\
\>$0_{\kw{bool}} = \kw{true}$ \qquad $0_{\kw{int}}=0$ \qquad $0_{\kw{real}} = 0.0$
\end{display}
\begin{display}{Signatures of Arithmetic and Logical Operators: ${\otimes} : b_1,b_2 \to b_3$}
\>$\&\&, ||, {=} : \kw{bool}, \kw{bool} \to \kw{bool}$\qquad
${>}, {=} : \kw{int}, \kw{int} \to \kw{bool}$ \\
\>$+, -, *, \% : \kw{int}, \kw{int} \to \kw{int}$\qquad
${>} : \kw{real}, \kw{real} \to \kw{bool}$ \qquad
$+, -, * : \kw{real}, \kw{real} \to \kw{real}$
\end{display}
\adg{Also removed integer division and modulus, as they don't actually work in Infer.NET}
We have several standard probability distributions as primitive: $D: t \to
u$ takes parameters in $t$ and yields a random value in $u$.
\begin{FULL}%
The names $x_i$
below only document the meaning of the parameters.
\end{FULL}%
\begin{display}{Signatures of Distributions: 
    $D: (x_1 \ty b_1*\dots*x_n \ty b_n) \to b$}
\>$\kw{Bernoulli} : (\kw{success}:\kw{real}) \to \kw{bool}$\\
\>$\kw{Binomial} : (\kw{trials}:\kw{int}*\kw{success}:\kw{real}) \to \kw{int}$\\
\>$\kw{Poisson} : (\kw{rate}:\kw{real}) \to \kw{int}$\\
\>$\kw{DiscreteUniform} : (\kw{max}:\kw{int}) \to \kw{int}$\\
\>$\kw{Gaussian}: (\kw{mean}:\kw{real}*\kw{variance}:\kw{real}) \to \kw{real}$\\
\>$\kw{Beta}: (\kw{a}:\kw{real}*\kw{b}:\kw{real}) \to \kw{real}$\\
\>$\kw{Gamma}: (\kw{shape}:\kw{real}*\kw{scale}:\kw{real}) \to \kw{real}$
\end{display}
\adg{omitted division on reals, because it's not continuous, and because it's not in Infer.NET;
  equality only on bools (drivable in theory but not in infer.net) and integer; no equality on reals, although derivable given $>$ and negation}
\adg{renamed from \kw{sample} to \kw{random}, as the former seemed like a reference to Monte Carlo methods by ML folks.
We might even consider eliminating the keyword entirely.}
The expressions and values of {\fun} are below.
Expressions are in a limited syntax akin to A-normal form,
with let-expressions for sequential composition.
\begin{display}[.3]{\fun: Values and Expressions}
\smallCategory{V}{{x}\mid{c}\mid(V,V)}{value}\\
\category{M,N}{expression}\\
\entry{V}{value}\\
\entry{V_1 \otimes V_2}{arithmetic or logical operator}\\
\entry{V.1}{left projection from pair}\\
\entry{V.2}{right projection from pair}\\
\entry{\Lif{V}{M_1}{M_2}}{conditional}\\
\entry{\kw{let}\ x=M\ \kw{in}\ N}{let (scope of $x$ is $N$)}\\\
\entry{\Sample{D(V)}}{primitive distribution}\\
\entry{\kw{observe}\ V }{observation}
\end{display}
In the discrete case, \fun has a standard \emph{sampling semantics}~(cf.~\cite{DBLP:conf/popl/ParkPT05}); 
the formal semantics for the general case comes later.
A run of a closed expression $M$ is the process of evaluating $M$ to a value.
The evaluation of most expressions is standard, apart from sampling and observation.

To run $\Sample{D(V)}$, where $V=(c_1,\dots,c_n)$,
choose a value $c$ at random from the distribution $D(c_1,\dots,c_n)$
(independently from earlier random choices)
and return $c$.

To run $\kw{observe}\ V$, always return $\kw{()}$.
We say the observation is \emph{valid} if and only if the value $V$ is some zero element $0_b$.

Due to the presence of sampling, different runs of the same expression may yield more than one value, with differing probabilities.
Let a run be \emph{valid} so long as every encountered observation is valid.
The sampling semantics of an expression is the conditional probability of returning a particular value, given a valid run.
Intuitively, Boolean observations are akin to assume statements in assertion-based program specifications,
where runs of a program are ignored if an assumed formula is false.

\begin{display}{Example: Two Coins, Not Both Tails}
\>\begin{lstlisting}
let heads1 = random (Bernoulli(0.5)) in
let heads2 = random (Bernoulli(0.5)) in
let u = observe (heads1 || heads2) in
(heads1,heads2)
\end{lstlisting}
\end{display}
The subexpression \ls$random (Bernoulli(0.5))$ generates \kw{true} or \kw{false} with equal likelihood.
The whole expression has four distinct runs, each with probability $1/4$, corresponding to the possible combinations of Booleans $\kw{heads1}$ and $\kw{heads2}$.
All these runs are valid, apart from the one where $\kw{heads1}=\kw{false}$ and $\kw{heads2}=\kw{false}$ (representing two tails),
since $\kw{observe} (\kw{false} || \kw{false})$ is not a valid observation.
The sampling semantics of this expression is a probability distribution assigning
probability $1/3$ to the values $(\kw{true},\kw{false})$, $(\kw{false},\kw{true})$, and $(\kw{true},\kw{true})$,
but probability $0$ to the value $(\kw{false},\kw{false})$.

The sampling semantics allows us to interpret an expression as a Bayesian model.
We interpret the distribution of possible return values as the \emph{prior probability} of the model.
The constraints on valid runs induced by observations represent new evidence or training data.
The conditional probability of a value given a valid run is the \emph{posterior probability}:
an adjustment of the prior probability given the evidence or training data.

Thus, the expression above can be read as a Bayesian model of the problem:
\emph{I toss two coins.  I observe that not both are tails. What is the probability of each outcome?}
\begin{FULL} 
The uniform distribution of two Booleans represents our prior knowledge about two coins,
the \kw{observe} expression represents the evidence that not both are tails,
and the overall sampling semantics is the posterior probability of two coins given this evidence.

Next, we define 
syntactic conventions and a type system for {\fun},
define a formal semantics for the discrete subset of {\fun},
and describe further examples.
Our discrete semantics is a warm up before Section~\ref{sec:coarrows}.
There we deploy measure theory to give a semantics to our full language,
which allows both discrete and continuous prior distributions.
\end{FULL}%

\subsection{Syntactic Conventions and Monomorphic Typing Rules}

\begin{FULL}%
We recite our standard syntactic conventions and typing rules.
\end{FULL}%

We identify phrases of syntax $\phi$ (such as values and expressions) up to consistent renaming of bound variables (such as $x$ in a let-expression).
Let $\fv(\phi)$ be the set of variables occurring free in phrase $\phi$.
Let $\phi\Subst{\psi}{x}$ be the outcome of substituting phrase $\psi$ for each free occurrence of variable $x$ in phrase $\phi$.
To keep our core calculus small, we
treat function definitions as macros with call-by-value semantics.
In particular, in examples, we write first-order non-recursive function definitions
in the form $\kw{let}\ f\ x_1\ \dots\ x_n=M$,
and we allow function applications $f\ M_1\ \dots\ M_n$ as expressions.
We consider such a function application as being a shorthand
for the expression $\kw{let}\ x_1=M_1\ \kw{in}\ \dots \kw{let}\ x_n=M_n\ \kw{in}\ M$,
where the bound variables $x_1$, \dots, $x_n$ do not occur free in $M_1$, \dots, $M_n$.
We allow expressions to be used in place of values, via insertion of suitable let-expressions.
For example, $(M_1,M_2)$ stands for $\kw{let}\ x_1=M_1\ \kw{in}\ \kw{let}\ x_2=M_2\ \kw{in}\ (x_1,x_2)$,
and $M_1 \otimes M_2$ stands for $\kw{let}\ x_1=M_1\ \kw{in}\ \kw{let}\ x_2=M_2\ \kw{in}\ x_1 \otimes x_2$,
when either $M_1$ or $M_2$ or both is not a value.
Let $M_1;M_2$ stand for $\kw{let}\ x=M_1\ \kw{in}\ M_2$ where $x \notin \fv(M_2)$.
The notation $t = t_1 * \dots * t_n$ for tuple types means the following:
when $n=0$, $t=\kw{unit}$; when $n=1$, $t=t_1$; 
and when $n>1$, $t=t_1 * (t_2 * \dots * t_n)$.
In listings, we rely on syntactic abbreviations available in \fsharp,
such as layout conventions (to suppress $\kw{in}$ keywords)
and writing tuples as $M_1,\dots,M_n$ without enclosing parentheses.

Let a \emph{typing environment}, $\Gamma$,
be a list of the form $\emptyEnv, x_1 \ty t_1, \dots, x_n \ty t_n$;
we say $\Gamma$ is \emph{well-formed} and write $\Gamma \vdash \diamond$ to mean that the variables $x_i$ are pairwise distinct.
Let $\dom{(\Gamma)}=\{x_1,\dots,x_n\}$ if $\Gamma=\emptyEnv, x_1 \ty t_1, \dots, x_n \ty t_n$.
We sometimes use the notation $\ol{x : t}$ for $\Gamma = \emptyEnv, x_1 \ty t_1, \dots, x_n \ty t_n$
where $\ol{x}=x_1,\dots,x_n$ and $\ol{t}=t_1,\dots,t_n$.

\begin{display}[0.35]{ Typing Rules for {\fun} Expressions: $\Gamma \vdash M : t$}
\begin{FULL}%
  \staterule{Fun Var}{ \Gamma \vdash \diamond \quad (x \ty t) \in
    \Gamma }{ \Gamma \vdash x : t } \quad

  \staterule{Fun Const}{ \Gamma \vdash \diamond }{ \Gamma \vdash c :
    \typeOf{c} } \quad

  \staterule{Fun Pair}{
    \begin{prog}
      \Gamma \vdash V_1 : t_1 \\
      \Gamma \vdash V_2 : t_2
    \end{prog}
  }{ \Gamma \vdash (V_1, V_2) : t_1 * t_2 } \quad
\end{FULL}%
\staterule{Fun Operator}{
  \begin{prog}
  \otimes : b_1,b_2 \to b_3 \\
  \Gamma \vdash V_1 : b_1 \quad
  \Gamma \vdash V_2 : b_2
  \end{prog}
}{
\Gamma \vdash V_1 \otimes V_2 : b_3
}
  \\[\GAP]

  \staterule{Fun Proj1}{ \Gamma \vdash V : t_1 * t_2 }{ \Gamma \vdash V.1 : t_1 } \quad

  \staterule{Fun Proj2}{ \Gamma \vdash V : t_1 * t_2 }{ \Gamma \vdash V.2 : t_2 } \quad

  \staterule{Fun If}{
    \begin{prog}
      \Gamma \vdash V : \boolT \quad  \Gamma \vdash M_1 : t \quad \Gamma \vdash M_2 : t
    \end{prog}
  }{ \Gamma \vdash \Lif{V}{M_1}{M_2} : t
  } \\[\GAP]

  \staterule{Fun Let}{
    \begin{prog}
      \Gamma \vdash M_1 : t_1 \\
      \Gamma,x:t_1 \vdash M_2 : t_2
    \end{prog}
  }{ \Gamma \vdash \kw{let}\ x = M_1\ \kw{in}\ M_2 : t_2 }
\quad

\staterule{Fun Random}{
D : (x_1 \ty b_1 * \dots * x_n \ty b_n) \to b\\
\Gamma \vdash V \ty (b_1 * \dots * b_n)
}{
\Gamma \vdash \Sample{D(V)} : b
}  \quad

\staterule{Fun Observe}{
\Gamma \vdash V : b
}{
\Gamma \vdash \kw{observe}\ V : \unitT
}
\end{display}

\begin{FULL}%
\begin{lem}\label{lemma:subst}
If $\Gamma, x \ty t, \Gamma' \vdash M : t'$ and $\Gamma \vdash V : t$
then $\Gamma, \Gamma' \vdash M\Subst{V}{x} : t'$.
\end{lem}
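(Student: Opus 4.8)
The plan is to prove the lemma by induction on the derivation of the first hypothesis $\Gamma, x \ty t, \Gamma' \vdash M : t'$, reading the statement as universally quantified over $\Gamma'$, $M$, and $t'$ (so that the induction hypothesis can be applied to subterms sitting under a longer suffix environment). Before starting the case analysis I would record an auxiliary \emph{weakening} fact for values: if $\Gamma \vdash V : t$ and $\Gamma, \Gamma'$ is well-formed, then $\Gamma, \Gamma' \vdash V : t$. This is a routine sub-induction on the typing of $V$ — only \textsc{Fun Var}, \textsc{Fun Const}, and \textsc{Fun Pair} can occur, since $V$ is a value — and it uses that well-formedness of $\Gamma, x \ty t, \Gamma'$, guaranteed by the $\Gamma \vdash \diamond$ side conditions in the leaf rules, entails well-formedness of the sublist $\Gamma, \Gamma'$.

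The only genuine base case is \textsc{Fun Var}, with $M = y$. If $y = x$, inversion forces $t' = t$ and $M\Subst{V}{x} = V$, so the goal $\Gamma, \Gamma' \vdash V : t$ follows from the second hypothesis $\Gamma \vdash V : t$ by weakening. If $y \neq x$, then $M\Subst{V}{x} = y$ and $(y \ty t') \in \Gamma, \Gamma'$, so \textsc{Fun Var} immediately re-derives $\Gamma, \Gamma' \vdash y : t'$. The case \textsc{Fun Const} is trivial, since $c\Subst{V}{x} = c$ and $\typeOf{c}$ does not depend on the environment.

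All remaining rules except \textsc{Fun Let} are purely structural: substitution commutes with the term former, every premise is typed in an environment that extends $\Gamma, x \ty t, \Gamma'$ in the same way, so one applies the induction hypothesis to each premise and reapplies the same rule. For instance, in \textsc{Fun If} one uses $(\Lif{V}{M_1}{M_2})\Subst{W}{x} = \Lif{V\Subst{W}{x}}{M_1\Subst{W}{x}}{M_2\Subst{W}{x}}$, applies the IH to the three premises typing $V$, $M_1$, $M_2$, and reassembles with \textsc{Fun If}; the cases \textsc{Fun Pair}, \textsc{Fun Operator}, \textsc{Fun Proj1}, \textsc{Fun Proj2}, \textsc{Fun Random}, and \textsc{Fun Observe} are entirely analogous.

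The one case needing care, and the step I expect to be the main obstacle, is \textsc{Fun Let}, where $M = \kw{let}\ y = M_1\ \kw{in}\ M_2$ and the binding of $y$ could capture a free variable of $V$ or clash with $x$. Here I invoke the convention that phrases are identified up to renaming of bound variables and choose the representative so that $y \neq x$ and $y \notin \fv(V)$; then $M\Subst{V}{x} = \kw{let}\ y = M_1\Subst{V}{x}\ \kw{in}\ M_2\Subst{V}{x}$. The IH applied to the first premise gives $\Gamma, \Gamma' \vdash M_1\Subst{V}{x} : t_1$. For the second premise $\Gamma, x \ty t, \Gamma', y \ty t_1 \vdash M_2 : t_2$, I instantiate the lemma with the suffix environment $\Gamma', y \ty t_1$ in place of $\Gamma'$ — legitimate precisely because $y \neq x$ and $y$ is fresh for $\Gamma$, which keeps every intermediate environment well-formed — obtaining $\Gamma, \Gamma', y \ty t_1 \vdash M_2\Subst{V}{x} : t_2$. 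Reapplying \textsc{Fun Let} then yields $\Gamma, \Gamma' \vdash M\Subst{V}{x} : t_2$, completing the case and the induction.
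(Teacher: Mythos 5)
Your proof is correct and takes the same route as the paper, which proves this lemma by induction on the derivation of $\Gamma, x \ty t, \Gamma' \vdash M : t'$ (the paper states only this one line, leaving the details implicit). Your fleshed-out version — generalizing over the suffix $\Gamma'$, the value-weakening sub-lemma for the variable case, and the $\alpha$-renaming in the \textsc{Fun Let} case — supplies exactly the details that make that induction go through.
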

\begin{proof}
By induction on the derivation of $\Gamma, x \ty t, \Gamma' \vdash M : t'$.\QED
\end{proof}

  \begin{lem}
    If $\Gamma \vdash M : t$ then $\Gamma \vdash \diamond$.
  \end{lem}
  \begin{proof}
    By induction on the derivation of $\Gamma \vdash M : T$.\QED
  \end{proof}

  \begin{lem}[Unique Types]
    If $\Gamma \vdash M : t$ and $\Gamma \vdash M : t'$ then $t=t'$.
  \end{lem}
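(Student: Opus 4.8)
The plan is to prove the statement by induction on the structure of the expression $M$, with the typing context $\Gamma$ left universally quantified so that the induction hypothesis may be applied to immediate subterms even under extended contexts. The crucial structural fact that drives the whole argument is that the typing rules for \fun are \emph{syntax-directed}: the outermost form of $M$ determines the unique rule that could have concluded $\Gamma \vdash M : t$. Hence the two given derivations $\Gamma \vdash M : t$ and $\Gamma \vdash M : t'$ must end with the same rule, and in each case I compare their premises.

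First I would dispatch the leaf and ``fixed-codomain'' cases. For $M = x$, rule \textsc{Fun Var} forces both $(x \ty t) \in \Gamma$ and $(x \ty t') \in \Gamma$, and well-formedness of $\Gamma$ (the declared variables are pairwise distinct) gives $t = t'$. For $M = c$, both derivations conclude $\typeOf{c}$, which is a function of $c$. For $M = \kw{observe}\ V$ the conclusion is always $\unitT$, and for $M = \Sample{D(V)}$ the conclusion is the fixed result type $b$ of the distribution $D$. Next come the cases where the conclusion is read off from the unique type of a subterm supplied by the induction hypothesis: for \textsc{Fun Proj1}/\textsc{Fun Proj2}, the induction hypothesis makes $V$ have a unique pair type $t_1 * t_2$, which determines the projected component; for \textsc{Fun If}, both branches carry the result type, so uniqueness of the type of $M_1$ suffices. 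The case \textsc{Fun Let} is the one that exercises the universally quantified context: the induction hypothesis first pins down the unique type $t_1$ of $M_1$, which fixes the extended context $\Gamma, x \ty t_1$, and a second appeal to the induction hypothesis on $M_2$ under that context then yields a unique $t_2$.

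The one place needing a genuine observation rather than pure bookkeeping is the operator case $M = V_1 \otimes V_2$. Here rule \textsc{Fun Operator} requires a signature ${\otimes} : b_1, b_2 \to b_3$ with $V_1 : b_1$ and $V_2 : b_2$; the induction hypothesis fixes $b_1$ and $b_2$ uniquely, so what remains is to check that the signature relation is functional in its result type given its two argument types. This is the main obstacle, because several operators are overloaded: $=$ and $>$ appear at more than one argument type, and $+, -, *$ appear at both $\kw{int}$ and $\kw{real}$. The argument closes by inspecting the operator table and noting that no two signatures of the same operator share a domain $(b_1,b_2)$ while differing in codomain $b_3$ (in particular the overloaded comparisons all return $\kw{bool}$, and each arithmetic signature's codomain matches its uniform argument type). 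Once $b_1$ and $b_2$ are forced by the induction hypothesis, $b_3$ is therefore determined, completing this case and hence the induction.
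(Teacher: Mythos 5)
Your proof is correct and follows essentially the same route as the paper's: structural induction on $M$, with the decisive observation that the result types of overloaded operator (and distribution) signatures are determined by their argument types. The paper's proof is just a one-line sketch stating exactly these two ingredients, so your write-up is a faithful elaboration of it.
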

  \begin{proof}
    By induction on the structure of $M$.  The proof needs that the result
    types of the signatures of overloaded binary operators and of distributions are
    determined by the argument types.\QED
  \end{proof}
\end{FULL}%

\begin{FULL}%
\subsection{Formal Semantics for {\bfun}}

Let {\bfun} be the fragment of our calculus where every \kw{random} expression takes the form $\Sample{\kw{Bernoulli}(c)}$ for some real $c \in (0,1)$,
that is, a weighted Boolean choice returning $\kw{true}$ with probability $c$, and $\kw{false}$ with probability $1-c$.
We show that a closed well-typed expression $M$ induces conditional probabilities $\Prob[M]{\RV=V \mid \Obs}$,
the probability that the value of a valid run of $M$ is $V$.

\pagebreak
For this calculus, we inductively define an operational semantics, $M \to^p M'$, meaning that expression $M$ takes a step to $M'$ with probability $p$.
\begin{display}[.5]{Reduction Relation: $M \to^p M'$ where $p \in (0,1]$}
\clause{V_1 \otimes V_2 \to^1 {\otimes}(c_1,c_2)} \\
\clause{(V_1,V_2).1 \to^1 V_1} \\
\clause{(V_1,V_2).2 \to^1 V_2} \\
\clause{\kw{if}\ \kw{true}\ \kw{then}\ M_1\ \kw{else}\ M_2 \to^1 M_1} \\
\clause{\kw{if}\ \kw{false}\ \kw{then}\ M_1\ \kw{else}\ M_2 \to^1 M_2} \\
\clause{\kw{let}\ x=V\ \kw{in}\ M \to^1 M\Subst{V}{x}} \\[\GAP]
\clause{{\cal R}[M] \to^p {\cal R}[M']$ if $M \to^p M'  \text{~for reduction context~} {\cal R} \text{~given by}}\\
\clause{{\cal R} ::= [] \mid \kw{let}\ x={\cal R}\ \kw{in}\ M}\\[\GAP]
\clause{\Sample{\kw{Bernoulli}(c)} \to^c \kw{true}}{}\\
\clause{\Sample{\kw{Bernoulli}(c)} \to^{1-c} \kw{false}}{}\\
\clause{\kw{observe}\ V \to^1 \kw{()}} 
\end{display}
Since there is no recursion or unbounded iteration in \bfun,
there are no non-terminating reduction sequences $M_1 \to^{p_1} \dots M_n \to^{p_n} \dots$.

%

Moreover, we can prove standard preservation and progress lemmas.
\begin{lem}[Preservation]\label{lemma:preservation}
If $\Gamma \vdash M : t$ and $M \to^p M'$ then $\Gamma \vdash M' : t$.
\end{lem}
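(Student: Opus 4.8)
The plan is to argue by induction on the derivation of the reduction $M \to^p M'$, doing a case analysis on the final rule applied and, in each case, using inversion on the typing derivation $\Gamma \vdash M : t$ to recover the types of the relevant subphrases. Since the only rule with a reductive premise is the reduction-context rule, this single induction handles both the base rewrites and the congruence closure; the probability annotation $p$ plays no role in typing and can be ignored throughout.

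First I would dispatch the base reduction rules, all of which are immediate. For the projections $(V_1,V_2).1 \to^1 V_1$ and $(V_1,V_2).2 \to^1 V_2$, inversion of (\textsc{Fun Proj1})/(\textsc{Fun Proj2}) followed by inversion of (\textsc{Fun Pair}) gives the required type of the component directly. For the conditionals, inversion of (\textsc{Fun If}) already types both branches at $t$, so whichever branch is selected is well-typed at $t$. For the operator rule $V_1 \otimes V_2 \to^1 {\otimes}(c_1,c_2)$, inversion of (\textsc{Fun Operator}) gives $t = b_3$, and I invoke the fact that an operator of signature ${\otimes} : b_1,b_2 \to b_3$ applied to constants of types $b_1,b_2$ yields a constant of type $b_3$, which (\textsc{Fun Const}) then types. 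The two Bernoulli rules produce $\kw{true}$ or $\kw{false}$, both of type $\boolT$, matching the type forced by (\textsc{Fun Random}); and $\kw{observe}\ V \to^1 \kw{()}$ produces $\kw{()}$ of type $\unitT$, matching (\textsc{Fun Observe}).

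The one base case with real content is the let-reduction $\kw{let}\ x = V\ \kw{in}\ M \to^1 M\Subst{V}{x}$. Here inversion of (\textsc{Fun Let}) gives $\Gamma \vdash V : t_1$ together with $\Gamma, x : t_1 \vdash M : t$, and I apply the Substitution Lemma (Lemma~\ref{lemma:subst}) with empty trailing environment to conclude $\Gamma \vdash M\Subst{V}{x} : t$. For the congruence rule ${\cal R}[M] \to^p {\cal R}[M']$, the only nontrivial context shape is ${\cal R} = \kw{let}\ x = {\cal R}'\ \kw{in}\ N$, so that ${\cal R}[M] = \kw{let}\ x = {\cal R}'[M]\ \kw{in}\ N$; inversion of (\textsc{Fun Let}) gives $\Gamma \vdash {\cal R}'[M] : t_1$ and $\Gamma, x : t_1 \vdash N : t$, the induction hypothesis applied to the sub-reduction ${\cal R}'[M] \to^p {\cal R}'[M']$ yields $\Gamma \vdash {\cal R}'[M'] : t_1$, and re-applying (\textsc{Fun Let}) reassembles $\Gamma \vdash {\cal R}[M'] : t$.

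I expect no genuine obstacle: this is routine subject-reduction reasoning, and the whole argument rests on inversion plus the already-established Substitution Lemma. The only point requiring a little care is lining up the environments when invoking that lemma, since it is stated for a split context $\Gamma, x \ty t, \Gamma'$ whereas every application here has $\Gamma'$ empty. The Unique Types lemma is not needed, because each reduction rule determines the types of its subphrases via inversion of the matching typing rule.
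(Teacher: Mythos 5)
Your proof is correct and follows exactly the route the paper takes: the paper's entire proof is ``By induction on the derivation of $M \to^p M'$,'' and your case analysis (inversion of the typing rules for each base reduction, the Substitution Lemma for let-reduction, and the induction hypothesis for the evaluation-context rule) is precisely the standard filling-in of that one-line argument. The only cosmetic remark is that in the congruence case your appeal to the induction hypothesis on ${\cal R}'[M] \to^p {\cal R}'[M']$ is really an induction on the structure of the context ${\cal R}$, but since contexts here are just nestings of lets with the hole never under a binder, this is immediate and does not affect the substance of the proof.
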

\begin{proof}
By induction on the derivation of $M \to^p M'$. \QED
\end{proof}

\begin{lem}[Progress]\label{lemma:progress}
If $\emptyEnv \vdash M : t$ and $M$ is not a value then there are $p$ and $M'$ such that $M \to^p M'$.
\end{lem}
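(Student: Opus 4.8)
The plan is to proceed by induction on the derivation of $\emptyEnv \vdash M : t$, equivalently by case analysis on the structure of the closed, well-typed expression $M$. The A-normal form restriction of \fun does most of the work for us: in every expression form except \kw{let}, the immediate subterms standing in operator, projection, conditional, \kw{random}, and \kw{observe} positions are syntactic \emph{values} $V$, so there is no need for congruence (evaluation-context) reasoning there, and the corresponding reduction rules fire immediately. The only evaluation context is $\mathcal{R} ::= [] \mid \kw{let}\ x = \mathcal{R}\ \kw{in}\ M$, so the induction hypothesis is invoked in exactly one place, the \kw{let} case.

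First I would isolate a \emph{canonical forms} observation for closed well-typed values, which is the only real content behind the head-reduction cases. Since $M$ is closed, any subvalue $V$ is closed too, and a closed value has no free variables, so the production $V ::= x$ is excluded; hence a closed value is either a constant $c$ or a pair $(V_1, V_2)$. Combined with the fact that constants inhabit only base and unit types ($\typeOf{c} \in \{\unitT,\boolT,\intT,\realT\}$) and with inversion of the \textsc{Fun Pair} rule, this yields: a closed value of type $\boolT$ is $\kw{true}$ or $\kw{false}$; a closed value of type $t_1 * t_2$ has the form $(V_1,V_2)$; and a closed value of type $\realT$ is a real constant $r$. These are immediate by inversion of the typing rules.

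With canonical forms in hand, the cases are direct. If $M = V$ it is a value, contradicting the hypothesis. If $M = V_1 \otimes V_2$, then \textsc{Fun Operator} gives $V_1 : b_1$ and $V_2 : b_2$, so by canonical forms $V_1 = c_1$ and $V_2 = c_2$ are constants and $V_1 \otimes V_2 \to^1 {\otimes}(c_1,c_2)$. If $M = V.1$ or $M = V.2$, then $V : t_1 * t_2$, so $V = (V_1,V_2)$ and the projection rule fires. If $M = \Lif{V}{M_1}{M_2}$, then $V : \boolT$ is $\kw{true}$ or $\kw{false}$ and the appropriate conditional rule applies. If $M = \Sample{\kw{Bernoulli}(c)}$, the \bfun restriction guarantees $c \in (0,1)$ is a real constant, and $\Sample{\kw{Bernoulli}(c)} \to^{c} \kw{true}$ witnesses a step. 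If $M = \kw{observe}\ V$, then $\kw{observe}\ V \to^1 \Lunit$. Finally, if $M = \kw{let}\ x = M_1\ \kw{in}\ M_2$, then by inversion of \textsc{Fun Let} we have $\emptyEnv \vdash M_1 : t_1$; if $M_1$ is a value $V$ then $\kw{let}\ x = V\ \kw{in}\ M_2 \to^1 M_2\Subst{V}{x}$, and otherwise the induction hypothesis supplies $M_1 \to^p M_1'$, whence the reduction context $\mathcal{R} = \kw{let}\ x = []\ \kw{in}\ M_2$ gives $M \to^p \kw{let}\ x = M_1'\ \kw{in}\ M_2$.

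There is no genuinely hard step here: the statement is a standard progress lemma and the A-normal form keeps the case analysis shallow. The one point requiring care is the canonical forms argument, in particular remembering that closedness rules out the variable case of the value grammar and that constants never carry product types, so that a value at type $t_1 * t_2$ must be a pair. The only subtlety specific to \bfun is that the side condition $c \in (0,1)$ needed by the Bernoulli reduction rules is furnished by the syntactic definition of \bfun rather than by the type system.
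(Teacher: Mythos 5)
Your proof is correct and follows essentially the same route as the paper, whose entire proof is the one-line ``By induction on the structure of $M$''; your write-up simply makes explicit the canonical-forms reasoning and the single use of the induction hypothesis in the \kw{let} case that this induction implicitly relies on.
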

\begin{proof}
By induction on the structure of $M$.\QED
\end{proof}

\begin{lem}[Determinism]\label{lemma:det}
If $M \to^{p} M'$ and $M \to^{p'} M'$ then $p=p'$.
\end{lem}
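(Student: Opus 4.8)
The plan is to argue by induction on the structure of the expression $M$ (equivalently, on the derivation of $M \to^p M'$), exploiting the fact that the reduction rules are syntax-directed and admit a \emph{unique decomposition}: for each closed $M$, at most one rule schema applies and the redex within $M$ is uniquely located. The only schema that offers more than one outcome is the Bernoulli rule, and there the two outcomes land in \emph{distinct} targets ($\kw{true}$ versus $\kw{false}$), so fixing the result $M'$ already pins down the probability.

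First I would dispatch all the primitive redexes. If $M$ is one of $V_1 \otimes V_2$, $V.1$, $V.2$, an $\kw{if}$ with a Boolean scrutinee, a $\kw{let}$ whose bound expression is a value, or $\kw{observe}\ V$, then $M$ matches exactly one deterministic rule, forcing $p = p' = 1$. For $M = \Sample{\kw{Bernoulli}(c)}$, the two applicable rules send $M$ to $\kw{true}$ with probability $c$ and to $\kw{false}$ with probability $1 - c$; since $\kw{true} \neq \kw{false}$, the hypothesis that both derivations reach the \emph{same} $M'$ selects a single rule, so again $p = p'$.

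The inductive case is $M = \kw{let}\ x = M_1\ \kw{in}\ M_2$ with $M_1$ not a value. Here the only applicable schema is the context rule (the primitive $\kw{let}$-rule requires $M_1$ to be a value, and values do not reduce), so both derivations must factor as $M_1 \to^p M_1'$ and $M_1 \to^{p'} M_1'$ with $M' = \kw{let}\ x = M_1'\ \kw{in}\ M_2$. Because $M'$ determines its bound subexpression $M_1'$ uniquely, the two sub-derivations share the same source and target, and the induction hypothesis applied to the structurally smaller $M_1$ yields $p = p'$.

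The hard part, and the crux of the whole argument, is establishing this unique-decomposition property --- specifically the disjointness between the primitive $\kw{let}$-rule and the $\kw{let}$ context rule. This is settled by a small auxiliary observation that values are irreducible: no rule has a value on its left-hand side, so $M_1 \to^p M_1'$ is impossible when $M_1$ is a value. With that in hand, the partition ``$M_1$ is a value'' versus ``$M_1$ is not a value'' makes exactly one of the two $\kw{let}$ schemas applicable, and the remainder of the case analysis is routine.
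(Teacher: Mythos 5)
Your proof is correct and follows the same route as the paper, which proves this lemma by induction on the structure of $M$ (the paper states only that one line, leaving the case analysis implicit). Your elaboration---the disjointness of the primitive $\kw{let}$-rule and the context rule via irreducibility of values, and the observation that the two Bernoulli rules have distinct targets so that fixing $M'$ pins down $p$---is exactly the content that induction needs.
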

\begin{proof}
By induction on the structure of $M$.\QED
\end{proof}

\begin{lem}[Probability]\label{lemma:probability}
If $\emptyEnv \vdash M : t$ then $\Sigma_{\Set{(p,N)\mid M\to^pN}}p=1$.
\end{lem}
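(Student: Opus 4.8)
The plan is to proceed by induction on the structure of the closed, well-typed expression $M$, using the reduction rules and typing rules as syntax-directed guides. Since a value has no reductions, the empty sum is $0$ rather than $1$; the statement is therefore intended for the case where $M$ is not a value (the same hypothesis carried by the Progress lemma), and I will establish it there. By Progress such an $M$ does reduce, so the indexing set $\Set{(p,N)\mid M\to^pN}$ is nonempty, and by Determinism the probability $p$ attached to each reduct $N$ is unique, so the sum is well-defined as a sum over distinct reducts.

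The core of the argument is a case analysis on the outermost shape of $M$. When $M$ is itself a redex I read off the reductions directly, appealing to canonical forms (inversion on the typing of closed values) to pin down the shape of any value subterms. For $V_1 \otimes V_2$, $V.1$, $V.2$, $\Lif{V}{M_1}{M_2}$, $\kw{let}\ x = V\ \kw{in}\ N$, and $\kw{observe}\ V$, typing forces the value arguments into the unique canonical form demanded by the matching rule — a constant of base type, a pair, or one of $\kw{true}/\kw{false}$ — so there is exactly one reduction, with probability $1$, and the sum is $1$. The only genuinely probabilistic redex is $\Sample{\kw{Bernoulli}(c)}$, which has the two reductions $\to^c\kw{true}$ and $\to^{1-c}\kw{false}$ to distinct reducts, whose probabilities sum to $c + (1-c) = 1$.

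The remaining case is $M = \kw{let}\ x = M_1\ \kw{in}\ M_2$ with $M_1$ not a value, which is exactly where the induction bites. Here every reduction of $M$ arises through the reduction-context rule from a reduction of $M_1$, and the map $M_1' \mapsto \kw{let}\ x = M_1'\ \kw{in}\ M_2$ is injective, so filling the context sets up a probability-preserving bijection between $\Set{(p,N')\mid M_1\to^pN'}$ and $\Set{(p,N)\mid M\to^pN}$. Inversion on (\textsc{Fun Let}) gives $\emptyEnv \vdash M_1 : t_1$, and since the scope of $x$ is $M_2$ the subterm $M_1$ is closed and a non-value, so the induction hypothesis applies and yields sum $1$ over $M_1$'s reductions; transporting along the bijection gives the same sum for $M$.

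The main obstacle, and the step deserving most care, is verifying that the reduction-context case neither loses nor duplicates reductions: I must check that a $\kw{let}$ with a non-value body reduces only via the context rule (never via the $\kw{let}\ x = V\ \kw{in}\ N$ redex rule, which requires a value body), and that context filling is injective so the correspondence on the index sets is exact. Together with the Determinism lemma, which guarantees that each reduct carries a single well-defined probability, this makes the transport of the sum along the bijection legitimate; the remaining redex cases are routine canonical-forms bookkeeping.
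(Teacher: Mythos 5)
Your proof is correct and takes essentially the same route as the paper, whose entire proof is the one-line sketch ``By induction on the structure of $M$'': your case analysis (deterministic redexes summing to $1$, the $\kw{Bernoulli}$ case giving $c+(1-c)=1$ over two distinct reducts, and the probability-preserving bijection between reductions of $M_1$ and of $\kw{let}\ x=M_1\ \kw{in}\ M_2$ in the context case) is exactly the detail that sketch leaves implicit. Your observation that the statement as literally phrased fails for values (the empty sum is $0$, not $1$) and must therefore carry the same non-value hypothesis as the Progress lemma is a fair correction of an imprecision the paper glosses over, and it is consistent with how the lemma is actually used (runs terminate at values, and the lemma is invoked only at non-value intermediate expressions).
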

\begin{proof}
By induction on the structure of $M$.\QED
\end{proof}
We consider a fixed expression $M$ such that $\emptyEnv \vdash M : t$.

Let the space $\Runs$ be the set of all runs of $M$,
where a \emph{run} is a sequence $\run = (M_1,\dots,M_{n+1})$ for $n \geq 0$
and $p_1$, \dots, $p_n$ such that $M = M_1 \to^{p_1} \dots \to^{p_n} M_{n+1} = V$;
we define the functions $\kw{value}(\run)=V$ and $\kw{prob}(\run)=1 p_1 \dots p_n$,
and we define the predicate $\kw{valid}(\run)$ to hold if and only if
whenever $M_j={\cal R}[\kw{observe}\ V]$ then $V=0_b$ for some zero element $0_b$.
Since $M$ is well-typed, is normalizing, and samples only from Bernoulli distributions, $\Runs$ is finite.

Let $\alpha,\beta\subseteq\Runs$ range over \emph{events}, 
and let probability $\Prob[M]{\alpha}=\sum_{\omega \in \alpha} \kw{prob}(\omega)$.  
Below, we write $\Prob{\cdot}$ for $\Prob[M]{\cdot}$ 
when $M$ is clear from the context.

\begin{prop}
The function $\Prob{\alpha}$ forms a \emph{probability distribution}, that is,
(1) we have $\Prob{\alpha} \geq 0$ for all $\alpha$,
(2) $\Prob{\Runs} = 1$,
and (3) $\Prob{\alpha \cup \beta} = \Prob{\alpha} + \Prob{\beta}$ if $\alpha \cap \beta = \emptyset$.
\end{prop}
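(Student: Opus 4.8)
The plan is to dispatch (1) and (3) directly from the definition of $\Prob{\alpha} = \sum_{\run \in \alpha} \kw{prob}(\run)$ as a finite sum, using that $\Runs$ is already known to be finite, and to reduce (2) to Lemma~\ref{lemma:probability} by an induction over the run tree rooted at $M$.

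For (1), I would observe that every step probability satisfies $p_i \in (0,1]$ by the side condition on the reduction relation, so each $\kw{prob}(\run) = p_1 \cdots p_n$ is a (possibly empty) product of positive reals and hence positive; thus $\Prob{\alpha}$ is a finite sum of nonnegative terms and $\Prob{\alpha} \geq 0$. For (3), when $\alpha \cap \beta = \emptyset$ the index set $\alpha \cup \beta$ of the defining sum splits disjointly, so $\Prob{\alpha \cup \beta} = \sum_{\run \in \alpha} \kw{prob}(\run) + \sum_{\run \in \beta} \kw{prob}(\run) = \Prob{\alpha} + \Prob{\beta}$ by additivity of finite sums. Neither of these requires anything beyond the finiteness of $\Runs$.

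The substance is in (2). I would generalize the statement to every well-typed closed expression $M'$ reachable from $M$: writing $\Runs_{M'}$ for the set of runs beginning at $M'$, I claim $\sum_{\run \in \Runs_{M'}} \kw{prob}(\run) = 1$, and prove this by well-founded induction on the length of the longest run from $M'$, which is finite because \bfun is normalizing and $\Runs_{M'} \subseteq \Runs$ is finite. If $M'$ is a value, its only run is the singleton $(M')$ with probability $1$, giving the base case. Otherwise Progress (Lemma~\ref{lemma:progress}) guarantees at least one step, and each run from $M'$ factors as a first step $M' \to^p N$ followed by a run from $N$; by Determinism (Lemma~\ref{lemma:det}) the probability $p$ is determined by $N$, so this factorization is a bijection of $\Runs_{M'}$ onto $\bigsqcup_{(p,N):\, M' \to^p N} \Runs_N$ under which $\kw{prob}$ multiplies by $p$. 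Hence $\sum_{\run \in \Runs_{M'}} \kw{prob}(\run) = \sum_{(p,N):\, M' \to^p N} p \cdot \sum_{\run' \in \Runs_N} \kw{prob}(\run')$; the inner sums are $1$ by the induction hypothesis, and $\sum_{(p,N)} p = 1$ by Lemma~\ref{lemma:probability}, so the total is $1$. Instantiating at $M' = M$ yields $\Prob{\Runs} = 1$.

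The main obstacle is setting up this decomposition cleanly: I must argue that prefixing $M'$ to a run from a reduct $N$ is a genuine bijection between $\Runs_{M'}$ and the disjoint union over one-step reducts, i.e.\ that distinct first steps yield disjoint families of runs and that every nonempty run arises in exactly this way. This is where Progress, Determinism, and the finiteness of the set of reducts are used together. Once the bijection and the multiplicativity of $\kw{prob}$ under it are established, part (2) is simply the single-step identity of Lemma~\ref{lemma:probability} propagated up the finite tree of runs, and (1) and (3) are routine.
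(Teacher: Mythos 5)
Your proposal is correct and takes essentially the same route as the paper, whose (terse) proof likewise gets (1) from positivity of the step probabilities, (3) immediately from the definition, and (2) by propagating Lemma~\ref{lemma:probability} through the finite tree of runs using termination. The one ingredient the paper cites that you omit is Lemma~\ref{lemma:preservation}: to apply your induction hypothesis (and, at the leaves of the recursion, Lemma~\ref{lemma:probability}) to the one-step reducts $N$ of $M'$, you must know that these reducts are themselves well-typed, and that is exactly what Preservation supplies.
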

\begin{proof}
Item (1) follows from the fact that $p\ge 0$ whenever $M\to_p N$.
Item (2) follows from Lemma~\ref{lemma:probability}, Lemma~\ref{lemma:preservation}, and termination.
Item (3) is immediate from the definition.\QED
\end{proof}
To give the semantics of our expression $M$ we first define the following
probabilities and events.
Given a value $V$, $\RV=V$ is the event $\RV^{-1}(V)=\Set{\omega\mid
  \RV(\omega)=V}$.  Hence, $\Prob{\RV = V}$ is the 
\emph{prior probability} that a run of $M$ terminates with $V$.
We let the event $\Obs = \{\run \in \Runs \mid \kw{valid}(\omega)\}$;
hence, $\Prob{\Obs}$ is the probability that a run is valid.

If $\Prob{\beta}\neq0$, the \emph{conditional probability of $\alpha$ given $\beta$} is 
$$
\Prob{\alpha \mid \beta} \deq \frac{ \Prob{\alpha \cap \beta} }{ \Prob{\beta} }
$$ 
The semantics of a program $M$ is given by the conditional probability
distribution
\[
\Prob[M]{\RV=V \mid \Obs}= \frac{\Prob[M]{(\RV^{-1}(V)) \cap \Obs}}{\Prob[M]{\Obs}},
\]
the
conditional probability that a run of $M$ returns $V$ given a valid run, also
known as the \emph{posterior probability}.

The conditional probability $\Prob[M]{\RV=V \mid \Obs}$ is only defined when 
$\Prob[M]{\Obs}$ is not zero.
For pathological choices of $M$ such as $\kw{observe}\ \kw{false}$
or $\kw{let}\ x = 3\ \kw{in}\ \kw{observe}\ x$
there are no valid runs, so $\Prob{\Obs}=0$, and $\Prob{\RV=V \mid \Obs}$ is undefined.
(This is an occasional problem in practice; Bayesian inference engines such as Infer.NET fail in this situation with a zero-probability exception.)


\subsection{An Example in {\bfun}}\label{sec:Epidemiology}
The expression below encodes the question:
\emph{1\% of a population have a disease. 80\% of subjects with the disease test positive,
and 9.6\% without the disease also test positive.
If a subject is positive, what are the odds they have the disease?} \cite{Yudkowsky03}

\begin{display}{Epidemiology: Odds of Disease Given Positive Test}
\>\begin{lstlisting}
let has_disease = random (Bernoulli(0.01))
let positive_result = if has_disease 
                 then random (Bernoulli(0.8))
                 else random (Bernoulli(0.096))
observe positive_result
has_disease
\end{lstlisting}
\end{display}
For this expression, we have $\Omega=\{\omega_{tt},\omega_{tf},\omega_{ft},\omega_{ff}\}$
where each run $\omega_{c_1 c_2}$ corresponds to the choice $\kw{has_disease}=c_1$ and $\kw{positive_result}=c_2$.
The probability of each run is:
\begin{iteMize}{$\bullet$}
\item $\kw{prob}(\omega_{tt}) = 0.01 \times 0.8 = 0.008$ (true positive)
\item $\kw{prob}(\omega_{tf}) = 0.01 \times 0.2 = 0.002$ (false negative)
\item $\kw{prob}(\omega_{ft}) = 0.99 \times 0.096 = 0.09504$ (false positive)
\item $\kw{prob}(\omega_{ff}) = 0.99 \times 0.904 = 0.89496$ (true negative)
\end{iteMize}
The semantics $\Prob{\RV=\kw{true} \mid \Obs}$ here is the conditional
probability of having the disease, given that the test is positive.

Here $\Prob{\Obs}=\kw{prob}(\omega_{ft})+\kw{prob}(\omega_{tt})$ and
$\Prob{\RV = \kw{true} \cap \Obs}=\kw{prob}(\omega_{tt})$, so we have
$\Prob{\RV = \kw{true} \mid \Obs} = 0.008 / (0.008+0.09504) = 0.07764$.
So the likelihood of disease given a positive test is just 7.8\%, less
than one might think.

This example illustrates inference on an explicit enumeration of the runs in $\Omega$.
The size of $\Omega$ is exponential in the number of $\kw{random}$ expressions, so although illustrative,
this style of inference does not scale up.
As we explain in Section~\ref{sec:semantics-fg}, our implementation strategy is to translate {\fun} expressions to 
the input language of an existing inference engine based on factor graphs,
permitting efficient approximate inference.
\end{FULL}%

\section{Semantics  as Measure Transformers}
\label{sec:coarrows}

\adg{Formal bases for probability include the following:
probability mass functions are specific to discrete distributions,
probability density functions are specific to continuous distributions
cumulative distribution functions assume an ordering on each probability domain.
Instead, a measure assigns a weight to each set of samples, and not just to individual samples.}

\noindent
We cannot generalize the operational semantics of the previous section to
continuous distributions, such as $\Sample{\kw{Gaussian}(1,1)}$, since the probability of any particular sample is zero.
%
A further difficulty is the need to observe events with probability zero, a common situation in machine learning.
For example, consider the naive Bayesian classifier, a common, simple probabilistic model.
In the training phase, it is given objects together with their classes and
the values of their pertinent features.
Below, we show the training for a single feature: the weight of the object.
The zero probability events are weight measurements,
assumed to be normally distributed around the class mean.
The outcome of the training is the posterior weight distributions for the different classes.

\adg{throughout use $x-y$ explicitly in examples; this paper is not a tutorial on our tool}
\begin{displaylisting}{Naive Bayesian Classifier, Single Feature Training:}
 let wPrior() = random (Gaussian(0.5,1.0))
 let Glass,Watch,Plate = wPrior(),wPrior(),wPrior()
 let weight objClass objWeight = observe (objWeight-(random (Gaussian(objClass,1.0))))
 weight Glass .18;  weight Glass .21
 weight Watch .11;  weight Watch .073
 weight Plate .23;  weight Plate .45
 Watch,Glass,Plate
\end{displaylisting}
Above, the call to \ls$weight Glass .18$ modifies the distribution of
the variable \ls$Glass$.\label{NaiveBayes}
The example uses \ls$observe (x-y)$ to denote that the
difference between the weights \kw{x} and \kw{y} is 0.
The reason for not
instead writing \ls$x=y$ is that conditioning on events of zero probability
without specifying the random variable they are drawn from is not in
general well-defined, cf. Borel's paradox~\cite{jaynes03:borelParadox}.
To avoid this issue, we instead observe the random variable \ls$x-y$ of type $\kw{real}$, at the
value $0$.  
\begin{FULL}%
  (Our compiler does permit the expression \ls$observe (x=y)$,
  as sugar for \ls$observe (x-y)$).
\end{FULL}%

To give a formal semantics to such observations, as well as to
mixtures of continuous and discrete distributions, we turn to measure
theory, following standard sources 
\citep{billingsley95,Rosenthal06}.  
Two basic concepts are measurable spaces and measures.
A measurable space is a set of values equipped with a collection of \emph{measurable} subsets;
these measurable sets generalize the events of discrete probability.
A \emph{measure} is a function that assigns a positive size to each measurable set;
\emph{finite measures}, which assign a finite size to each measurable set, 
generalize probability distributions.

\begin{FULL}%
  We work in the usual mathematical metalanguage of sets and total
  functions.  To machine-check our theory, one might build on a recent
  formalization of measure theory and Lebesgue integration in
  higher-order logic \cite{MHT10:LebesgueIntegration}.
\end{FULL}%

\subsection{Types as Measurable Spaces}

In the remainder of the paper, we
let $\Omega$ range over sets of possible outcomes;
 in our semantics $\Omega$ will range over $\Bool = \{\kw{true},\kw{false}\}$, $\Int$, $\Real$,
and finite Cartesian products of these sets.
A \emph{$\sigma$-algebra} over $\Omega$ is a set
$\mAlg\subseteq\Powerset{\Omega}$ which (1) contains $\0$ and $\Omega$, and
(2) is closed under complement and countable union and intersection.
A \emph{measurable space} is a pair $(\Omega,\mAlg)$ where $\mAlg$ is a
$\sigma$-algebra over $\Omega$; the elements of $\mAlg$ are called 
\emph{measurable sets}.
We use the notation $\sigma_\Omega(S)$, when $S\subseteq\Powerset{\Omega}$,
for the smallest $\sigma$-algebra over $\Omega$ that is a superset of
$S$; we may omit $\Omega$ when it is clear from context.
%
%
\begin{FULL}%
  Given two measurable spaces $(\Omega_1,\mAlg_1)$ and
  $(\Omega_2,\mAlg_2)$, we can compute their product as
  $(\Omega_1,\mAlg_1)\times(\Omega_2,\mAlg_2)\deq
  (\Omega_1\times\Omega_2,\sigma_{\Omega_1\times\Omega_2}\Set{A\times B\mid
    A\in\mAlg_1, B\in\mAlg_2})$
\end{FULL}%
If $(\Omega, \mAlg)$ and $(\Omega', \mAlg')$ are measurable spaces, then the function
$f:\Omega\to\Omega'$ is \emph{measurable} if and only if for all $A\in\mAlg'$,
$f^{-1}(A)\in\mAlg$,
where the \emph{inverse image} $f^{-1} : \Powerset{\Omega'} \to \Powerset{\Omega}$
is given by $f^{-1}(A) \deq \{\omega \in \Omega \mid f(\omega) \in A\}$. 
We write $f^{-1}(x)$ for $f^{-1}(\Set x)$ when $x\in \Omega'$.

We give each first-order type $t$ an interpretation as a
measurable space $\tqq{t}\deq(\Vals t, \Alg{M}_t)$ below. 
We identify closed values of type $t$ with elements of $\Vals{t}$,
and write $\Punit$ for $\0$, the unit value.

\begin{display}{Semantics of Types as Measurable Spaces:}
\clause{\tqq{\unitT}=(\Set{\Punit},\Set{\Set{\Punit},\0})} 
{$\tqq{\boolT}=(\Bool,\Powerset\Bool)$} \\
\clause{\tqq{\intT}= (\Int,\Powerset\Int)} 
{$\tqq{\realT}=(\Real,\sigma_{\Real}(\Set{[a,b]\mid a,b\in \Real}))$}\\
\clause{\tqq{t*u}=\tqq{t}\times\tqq{u}}
\end{display}
The set $\sigma_{\Real}(\Set{[a,b]\mid a,b\in \Real})$ in the definition of
$\tqq{\realT}$ is the Borel $\sigma$-algebra on the real line, which is the
smallest $\sigma$-algebra containing all closed (and open) intervals.
Below, we write $f:t\to u$ to denote that $f:\Vals t\to \Vals u$ is measurable,
that is, that $f^{-1}(B)\in \mAlg_t$ for all $B\in\mAlg_u$.

\subsection{Finite Measures}
\newcommand{\Lebesgue}[3]{\int_#1 #2\,d#3}

A \emph{measure} $\mu$ on a measurable space $(\Omega, \mAlg)$ is a
function $\mAlg\to\Real^{+}\cup\{\infty\}$ 
that is countably additive, that is, $\mu(\0)=0$ and if the
sets $A_0,A_1,\ldots\in\mAlg$ are pairwise disjoint, then
$\mu(\cup_i A_i)=\sum_i\mu(A_i)$.  We write $\abs\mu\deq\mu(\Omega)$.
A \emph{finite measure} $\mu$ is a measure $\mu$ satisfying $\abs\mu\neq\infty$;
a \emph{$\sigma$-finite measure} $\mu$ is a measure such that $\Omega=A_0\cup A_1\cup\dots$ with $\mu(A_i)\neq\infty$.
All the measures we consider in this paper are $\sigma$-finite.

Let $\DIST{t}$ be the set of finite measures on the measurable space $\tqq{t}$.
\begin{FULL}%
  Additionally, a finite measure $\mu$ on $(\Omega, \mAlg)$ is
  a \emph{probability measure} when $\abs\mu=1$.  
  We do not restrict $\DIST{t}$ to just probability measures, although one
  can obtain a probability measure from a non-zero finite measure by
  normalizing with $1/\abs\mu$.
\end{FULL}%
We make use of the following constructions on measures.
\begin{iteMize}{$\bullet$}
\item Given a function $f:t\to u$ and a measure
  $\mu\in\DIST{t}$, there is a measure $\mu f^{-1}\in\DIST{u}$ given by $(\mu
  f^{-1})(B)\deq\mu(f^{-1}(B))$.
\item Given a finite measure $\mu$ and a measurable set $B$, we let
  $\mu\rvert_B(A)\deq\mu(A\cap B)$ be the restriction of $\mu$ to $B$.
\item We can add two measures on the same set as $(\mu_1+\mu_2)(A) \deq
  \mu_1(A)+\mu_2(A)$.
  \begin{FULL}%
  \item We can multiply a measure by a positive constant as $(r\cdot\mu)(A)
    \deq r\cdot\mu(A)$.
  \end{FULL}%
\item The (independent) product ($\mu_1\times\mu_2$) of two ($\sigma$-finite) measures is
  also definable~\cite[Sec. 18]{billingsley95}, and satisfies $(\mu_1\times\mu_2)(A\times B) =
  \mu_1(A)\cdot\mu_2(B)$. 
\item If $\mu_i$ is a measure on $t_i$ for $i\in\Set{1,2}$, we let the disjoint sum 
  $\mu_1\oplus\mu_2$ be the measure on $t_1+t_2$ defined as $A_1\uplus A_2\mapsto\mu_1(A_1)+\mu_2(A_2)$.
\item
Given a measure $\mu$ on the measurable space $\tqq{t}$,
a measurable set $A \in \mAlg_t$ and a function $f : t \to \realT$,
we write $\Lebesgue{A}{f}{\mu}$
or equivalently $\Lebesgue{A}{f(x)}{\mu(x)}$
for standard (Lebesgue) integration.
This integration is always well-defined if $\mu$ is finite and $f$ is
non-negative and bounded from above.
\item Given $t$, we let $\lambda_t$ be the ``standard'' measure on
  $\tqq{t}$ built from independent products and disjoint sums of the
  Lebesgue measure on $\realT$ and the counting measure on discrete
  $b$. We often omit $t$ when it is clear from the context.  (We also
  use $\lambda$-notation for functions, but we trust any ambiguity is
  easily resolved.)
\item 
Given a measure $\mu$ on a measurable space $\tqq{t}$
we call a function $\dot\mu : t \to \realT$ a \emph{density} for $\mu$
iff $\mu(A)=\int_A\dot\mu\,d\lambda$ for all $A \in \mAlg$.
\end{iteMize}

\subsubsection*{Standard Distributions}
Given a closed well-typed \fun expression $\Sample{D(V)}$ of base type $b$,
we define a corresponding finite measure $\mu_{D(V)}$ on measurable space~$\tqq{b}$,
via its density $D(V)=\dot\mu_{D(V)}$.
In the discrete case, we first define the probability 
mass function, written $D(V)\ c$, 
and then define the measure $\mu_{D(V)}$ as a summation.
\begin{display}[.4]{Masses $D(V)\ c$ and Measures $\mu_{D(V)}$ for Discrete Probability Distributions:}
 \clause{\kw{Bernoulli}(p)\ \kw{true} \deq p}
  {if $0\le p\le 1$, 0 otherwise}\\
  \clause{\kw{Bernoulli}(p)\ \kw{false} \deq 1-p}
  {if $0\le p\le 1$, 0 otherwise}\\
 \clause{\kw{Binomial}(n,p)\ i \deq {i \choose n} p^i/n!}
  {if $0\le p\le 1$, 0 otherwise}\\
  \clause{\kw{DiscreteUniform}(m)\ i\deq 1/m\qquad}{if $0\le i< m$, $0$ otherwise}\\
  \clause{\kw{Poisson}(l)\ n\deq e^{-l}l^n/n!}{if $l,n\ge0$, $0$ otherwise}\\[\GAP]
  \clause{ \mu_{D(V)}(A) \deq \sum_i D(V)\ c_i}{if $A=\bigcup_i \{c_i\}$ for pairwise disjoint $c_i$}
\end{display}
In the continuous case, we first define the probability density function $D(V)\ r$ 
and then define the measure $\mu_{D(V)}$ as an integral.
Below, we write $\mathbf{G}$ for the standard Gamma function, 
which on naturals $n$ satisfies $\mathbf{G}(n)=(n-1)!$.
\begin{display}[.43]{Densities $D(V)\ r$ and Measures $\mu_{D(V)}$  for Continuous Probability Distributions:}
  \clause{\kw{Gaussian}(m,v)\ r \deq
    \smash{e^{{-(r-m)^2/2v}}}/\sqrt{2\pi v}}
  {if $v>0$, 0 otherwise}\\
  \clause{\kw{Gamma}(s,p)\ r \deq r^{s-1}e^{-pr}p^s/\mathbf{G}(s)}{if $r,s,p>0$, 0 otherwise}\\
  \clause{\kw{Beta}(a,b)\ r \deq r^{a-1}(1-r)^{b-1} \textbf{G}(a+b)/(\textbf{G}(a)\textbf{G}(b))}{}
   {if $a,b>0$ and $0\le r\le 1$, 0 otherwise}\\[\GAP]
   \clause{\mu_{D(V)}(A) \deq \int_A D(V) \,d\lambda }{where $\lambda$ is the Lebesgue measure on $\Real$}
\end{display}
The Dirac $\delta$ measure is defined on the measurable space $\tqq{b}$ for each base type $b$,
and is given by $\delta_c(A)\deq 1$ if $c\in A$, $0$ otherwise.
%

\subsubsection*{Conditional density}
The notion of density can be generalized as follows,
yielding an unnormalized counterpart to conditional probability.
Given a measurable function $p:t\to u$,
we consider two families of events on $t$.
Firstly, 
events
 $E_c\deq\{x\in\Vals{t}\mid p(x)=c\}$ where $c$ ranges over~$\Vals{u}$.
Secondly, rectangles 
$R_d\deq\{x\in\Vals{t}\mid x \le d\}$ 
where $d$ ranges over~$\Vals{t}$
and $\le$ is the coordinate-wise partial order
(that on pair types satisfies $(a,b)\le(c,d)$ iff $a\le c$ and $b\le d$,
that on $\intT$ and $\realT$ is the standard ordering, and that only relates equal booleans).

Given a finite measure $\mu$ on $\tqq{t}$ and $c\in\Vals{u}$,
we let $F_c:t\to\Real$ be defined by the limit below
(following \cite{fraser95:probDens})
\begin{equation}
{F_c(d)} \deq \lim_{i\to\infty}\mu(R_d\cap p^{-1}(B_i))/\lambda_u(B_i)
\label{eq:2}
\end{equation}
if the limit exists and is the same for all sequences $\{B_i\}$ of closed sets converging regularly to~$c$. 
On points $d$ where no unique limit exists, we let 
\[F_c(d)\deq\inf\,\{F_c(d')\mid d\le d'\land d\neq d'\land F_c(d')\text{~defined}\}\]
where we let $\inf\,\emptyset\deq\infty$. If $F_c$ is bounded, we define $\given \mu p \cdot c \in \Real$ (the $\mu$-density at $E_c$)
as the finite measure on $\tqq{t}$ with (unnormalized) cumulative distribution function $F_c$, that is,
$\given \mu p {R_d} c=F_c(d)$.
(If $F_c$ is not bounded, it is not the distribution function of a finite measure.)

As examples of this definition, when $u$ is discrete we have that
$\given \mu p A c = \mu(A\cap\Set{x\mid p(x)=c})$,
so discrete density amounts to filtering.
In the continuous case, if $\Vals t=\Real\times\Real^k$,
$p=\lambda(x,{y}).(x-c)$ and $\mu$ has a continuous density~$\dot\mu$ 
then 
\begin{eqnarray*}
  F_c(a,b)&=&
  \lim_{i\to\infty}\frac{\mu(R_{(a,b)}\cap p^{-1}(B_i))}{\lambda_{\Real}(B_i)}\\
&=&
  \lim_{i\to\infty}\frac{\int_{(R_{(a,b)}\cap p^{-1}(B_i))}\!\dot\mu(x,{y})\, d\lambda_t (x,y)}{\lambda_{\Real}(B_i)}\\
  &=&
  \int_{\Set{{y}\mid (c, y)\in R_{(a,b)}}} \!\!\!\!\!\!\!\!
  \dot\mu(c,{y})\, d\lambda_{\Real^k}(y)
  \quad\text{when $a\neq c$ by continuity.}
\end{eqnarray*}
When $a=c$ the limit may not be unique, in which case we have 
\begin{eqnarray*}
F_c(c,b)&=&\inf\,\{F_c(d')\mid (c,b)\le d'\}\\
&=&
  \int_{\Set{{y}\mid (a, y)\in R_{(a,b)}}} \!\!\!\!\!\!\!\!
  \dot\mu(a,{y})\, d\lambda_{\Real^k}(y)\quad\text{ by monotonicity of $F_c$ and continuity.}
\end{eqnarray*}
We then get
\begin{eqnarray}
\given\mu p A c
&=&
  \int_{\Set{{y}\mid (c, y)\in A}} \!\!\!\!\!\!\!\!
  \dot\mu(c,{y})\, d\lambda_{\Real^k}(y).   \label{eq:3}
\end{eqnarray}
One case when conditional density may not be defined is when the conditioning event is at a discontinuity of the density function: let $t=\realT*\realT$, $p(x,y)=x$, 
and $\dot\mu(x,y)=1$ if $0\le x,y\le 1$, otherwise 0.
Then $F_1(x,y)=0$ if $x<1$ or $y\le 0$, and otherwise the limit (\ref{eq:2}) is not unique.
Thus $F_1(1,0)=\infty$, so $F_1$ is not bounded and $\given \mu p \cdot 1$ is undefined.
For more examples, see Section~\ref{sec:discussion}.

There exists a more declarative approach to $\mathcal{D}\mu$.
For $A\in\mAlg_t$, we let $\nu_A(B) = \mu(A\cap p^{-1}(B))$; 
this measure is said to be \emph{absolutely continuous} (wrt. $\lambda_u$)
if $\nu_A(B)=0$ whenever $\lambda_u(B)=0$.  
If $\mu$ is \emph{outer regular}, 
i.e. $\mu(A)=\inf\{\mu(G)\mid A\subset G, G\text{~open}\}$ for all $A$, 
and $\nu_A$ is absolutely continuous,
the defining limit (\ref{eq:2}) 
exists \emph{almost everywhere}~\cite{fraser95:probDens},
that is, there is a set $C$ with $\mu(C)=0$ such that $c\in C$ 
if $F_c(d)$ is undefined.
Then, $\given \mu p A c$ is a version of the Radon-Nikodym derivative 
of $\nu_A(B)$ (wrt. $\lambda_u$). For all $B\in\mAlg_u$, 
conditional density thus satisfies the equation
\begin{equation}
\mu(A\cap p^{-1}(B))=\int_{B}\!\given\mu p A x \;d\lambda_u(x).
\label{eq:1}
\end{equation}
%
%
%
The existence of a family of finite
measures $\given\mu p\sdot c$ on $\tqq{t}$ satisfying equation
(\ref{eq:1}) above is guaranteed in certain situations, e.g., 
when $\mu p^{-1}$ has density $d$ at $c$ we may take
$\mathcal{D}\mu$ as a version of the regular conditional probability 
$\mu[\cdot\mid p=c]$ (see for instance \cite[Theorem 33.3]{billingsley95}) 
scaled by $d$.
However, if $\mu(p^{-1}(c))=0$ the value of $\given \mu p A c$ may not be uniquely defined, 
since two versions of $\given\mu{p}\sdot\sdot$ may differ on a null set.
    %
%
In order to avoid this ambiguity we have given an explicit
construction that works for many useful cases.


\subsection{Measure Transformers}
We will now recast some standard theorems of measure theory as a library of
combinators, that we will later use to give semantics to probabilistic languages.
A \emph{measure transformer} is a partial function from finite measures to finite measures.
We let $\PArr t u$ be the set of partial functions $\DIST{t}\to\DIST{u}$.  
We use the combinators on measure transformers listed below
in the formal semantics of our languages.
The definitions of these combinators occupy the remainder of this
section.  We recall that $\mu$ denotes a measure and $A$ a
measurable set, of appropriate types.

\pagebreak
\begin{display}{Measure Transformer Combinators:}
\clause{\Aarr\in(t\to u)\to\PArrP{t}{u}}\\
\clause{\Athen[]\in\PArrP{t_1}{t_2}\to\PArrP{t_2}{t_3}\to\PArrP{t_1}{t_3}}\\
\clause{\Achoose[]{}\in (t\to\boolT)\to\PArrP{t}{u}\to\PArrP{t}{u}\to\PArrP{t}{u} }\\
\clause{\Aextend[]\in(t\to\DIST{u})\to\PArrP{t}{(t*u)}}\\
\clause{\Aconstrain[]\in(t\to b)\to\PArrP tt}
\end{display}

\MySubSubsection{Lifting a Function to a Measure Transformer}
To lift a pure measurable function to a measure
transformer, we use the combinator 
$\Aarr\in(t\to u)\to\PArrP{t}{u}$.
Given $f: t\to u$,
we let $\Aarr f\ \mu\ A\deq \mu f^{-1}(A)$,
where $\mu$ is a measure on $\tqq{t}$
and $A$ is a measurable set from $\tqq{u}$ (cf. \cite[Eqn 13.7]{billingsley95}).

\MySubSubsection{Sequential Composition of Measure Transformers}
To sequentially compose two measure transformers
we use standard function composition, defining
$\Athen[]\in\PArrP{t_1}{t_2}\to\PArrP{t_2}{t_3}\to\PArrP{t_1}{t_3}$ as
$T\Athen U \deq U\circ T$.

\begin{DRAFT}
\adg{these are not needed for the features of \fun and \imp in the TR}
\MySubSubsection{Independent Products of Measures}
We construct (independent) products of measure transformers using
$\Atuple[]\in \PArrP{t}{u_1}\to\PArrP{t}{u_2}\to \PArrP{t}{(u_1*u_2)}$.
\\ %
We define $(T \Atuple U)(\mu)\deq (1/\abs{\mu})(T(\mu)\times U(\mu))$.

This definition of $\Atuple[]$ violates the arrow
axioms~\cite{DBLP:conf/afp/Hughes04}; the reason for this is that not all
measures are product measures (just as not all random variables are
independent).

\MySubSubsection{Coproducts of Measure Transformers}
  We also define $\Aleft[]\in\PArrP{t_1}{t_2}\to\PArrP{(t_1+u)}{(t_2+u)}$
  as\linebreak
  $\Aleft T\ \mu\ (A\uplus B) \deq \mu(\0\uplus B) +
  T(\lambda s.\mu(s\uplus\0))(A)$.  This definition satisfies the arrowPlus
  axioms~\cite{DBLP:conf/afp/Hughes04}, making $\PArrP[]{}{}$ into what we
  may call a \emph{coarrow}.
\end{DRAFT}

\MySubSubsection{Conditional Choice between two Measure Transformers}

The combinator $\Achoose[]{}\in(t\to\boolT)\to\PArrP{t}{u}\to\PArrP{t}{u}\to\PArrP{t}{u}$
makes a choice between two measure transformers, parametric on a predicate $p$.  
Intuitively, $\Achoose p\ T_{\mathtt{T}}\ T_{\mathtt{F}}\ \mu$ first splits 
$\Vals t$ into two sets depending on whether or not $p$ is true. 
For each equivalence class, we then run the corresponding 
measure transformer on $\mu$ restricted to the class.
Finally, the resulting finite measures are added together, yielding a finite measure.
If $p^{-1}(\mathtt{true})=B$ we let $\Achoose p\ T_{\mathtt{T}}\ T_{\mathtt{F}}\ \mu\ A
= T_{\mathtt{T}}(\mu\rvert_{B})(A) + T_{\mathtt{F}}(\mu\rvert_{\Vals{t}\setminus B})(A)$.

\MySubSubsection{Extending Domain of a Measure}
The combinator $\Aextend[]\in(t\to\DIST{u})\to\PArrP{t}{(t*u)}$
extends the domain of a measure using a function yielding
measures.  It is reminiscent of creating a dependent pair,
since the distribution of the second component depends on the value of the first.
For $\Aextend m$ to be defined, we require that for every $A\in\mAlg_u$, 
the function $f_A \deq \lambda x.m(x)(A)$ is measurable, non-negative and bounded from above.  
In particular, this holds for all $A$ if $m$ is measurable and 
$m(x)$ always is a (sub-)probability distribution,
which is always the case in our semantics for \fun.
We let $\Aextend m\ \mu\ AB\deq\int_{\Vals t}\!m(x)(\Set{y\mid (x,y)\in AB})d\mu(x)$, 
where we integrate over the first component (call it $x$) with respect to the measure $\mu$, 
and the integrand is the measure under~$m(x)$ of the set $\{ y \mid (x,y) \in AB \}$ for each $x$
(cf. \cite[Ex. 18.20]{billingsley95}).

\MySubSubsection{Observation as a Measure Transformer}
The combinator $\Aconstrain[]\in(t\to b)\to\PArrP tt$ conditions
a measure over $\tqq{t}$ on the event that an indicator function
of type $t \to b$ is zero.
Here observation is \emph{unnormalized} conditioning of a measure on an event.
If defined, we let 
$
\Aconstrain p\ \mu\ A\deq { \given\mu p A {0_b} }
$.
As an example, if $p:t\to\boolT$ is a (measurable) predicate on values of
type~$t$, we have
$\Aconstrain p\ \mu\ A= \mu(A\cap\Set{x\mid p(x)=\kw{true}})$.
Notice that $\Aconstrain p\ \mu\ A$ can be greater than $\mu(A)$
when $p:t\to\realT$ (cf.~the naive Bayesian classifier on page~\pageref{NaiveBayes}), 
for which reason we cannot restrict ourselves to
(sub-)probability measures. 
For examples, see Equation~(\ref{eq:3}) and Section~\ref{sec:discussion}.
\subsection{Measure Transformer Semantics of {\fun}}
In order to give a compositional denotational semantics of \fun programs,
we give a semantics to open programs, later to be placed in some closing
context.  
Since observations change the distributions of program variables, we 
may draw a parallel to while programs.  
There, a program can be given a denotation
as a function from variable valuations to a return value and a
variable valuation.
Similarly, we give semantics to an open \fun term by mapping a measure
over assignments to the term's free variables to a joint measure of
the term's return value and assignments to its free variables.
This choice is a generalization of the (discrete) semantics of
p\textsc{While}~\cite{DBLP:conf/popl/BartheGB09}.
\begin{FULL}%
  This contrasts with \citeT{Ramsey and Pfeffer}{DBLP:conf/popl/RamseyP02}, where the
  semantics of an open program takes a variable valuation and returns a
  (monadic computation yielding a) distribution of return values.
\end{FULL}%

First, we define a data structure for
an evaluation environment assigning values to variable names, and
corresponding operations.
Given an environment $\Gamma=x_1\tty t_1,\dots,x_n\tty t_n$, we let
$\State$ be the set of states, or finite maps $s=\Set{x_1\mapsto
  c_1,\dots,x_n\mapsto c_n}$ such that for all $i=1,\dots,n$,
$\typeOf{c_i}=t_i$.  We let
$\tqq{\State}\deq\tqq{\unitT*t_1*\dots*t_n}$ be the measurable space of states in $\State$.
We define $\dom(s)\deq\Set{x_1,\dots,x_n}$.
We define the following operators.
\begin{display}{Auxiliary Operations on States and Pairs:}
  \clause{\Padd x\ (s,c)\deq s\cup\Set{x\mapsto c}}{if $\typeOf{c}=t$ and $x\nin\dom(s)$, $s$ otherwise.}\\
  \clause{\Plookup x\ s\deq s(x)}{if $x\in\dom(s)$, $\Punit$ otherwise.}\\
  \clause{\Pdrop X\ s\deq \Set{(x\mapsto c)\in s \mid x\nin X}
               \qquad \qquad\mathtt{fst}((x,y)) \deq x \qquad \mathtt{snd}((x,y)) \deq y }
\end{display}
We write $s\rvert_X$ for $\Pdrop (\dom(s)\setminus X)\ s$.
We apply these combinators to give a semantics to \fun programs as
measure transformers.  We assume that all bound variables in a program
are different from the free variables and each other.  Below, $\vqq{V}s$
gives the valuation of $V$ in state $s$, and $\aqq{M}$ gives the 
measure transformer denoted by $M$.

\begin{display}[.35]{Measure Transformer Semantics of \fun:}
\clause{\vqq{x}s \deq \Plookup x s} \\
\clause{\vqq{c}s \deq c}\\
\clause{\vqq{(V_1,V_2)}s \deq (\vqq{V_1} s,\vqq{V_2} s)}\\[\GAP]
\clause{\aqq{V} \deq \Aarr \lambda s.(s,\vqq{V}s)}\\
\clause{\aqq{V_1\otimes V_2} \deq \Aarr \lambda s.(s,{\otimes}(\vqq{V_1}s,\vqq{V_2}s))}\\
\clause{\aqq{V.1} \deq \Aarr \lambda s.(s,\mathtt{fst}(\vqq{V}s))}\\
\clause{\aqq{V.2} \deq \Aarr \lambda s.(s,\mathtt{snd}(\vqq{V}s))}\\[\GAP]
\clause{\aqq{\kw{if}\ V\ \kw{then}\ M\ \kw{else}\ N }\deq
 \Achoose (\lambda s.\vqq{V}s)\ \aqq{M}\ \aqq{N}}\\
\clause{\aqq{\Sample{D(V)}} \deq \Aextend \lambda s. \mu_{D(\vqq{V}s)}}\\
\clause{\aqq{\kw{observe}\ V } \deq (\Aconstrain \lambda s.\vqq{V}s) \Athen \Aarr \lambda s.(s,\Punit)}
\\
\clause{
  \begin{prog}
  \aqq{\kw{let}\ x=M\ \kw{in}\ N} \deq {} 
  \aqq{M} \Athen  \Aarr (\Padd x) \Athen
  \aqq{N}\Athen \Aarr \lambda(s,y).((\Pdrop{\Set x}\ s),y)
  \end{prog}
}
\end{display}
A value expression $V$ returns the valuation of $V$ in the current state,
which is left unchanged.  Similarly, binary operations and projections have
a deterministic meaning given the current state.  
An \kw{if} $V$ expression runs the measure transformer given by the
\kw{then} branch on the states where $V$ evaluates true, and the
transformer given by the \kw{else} branch on all other states, using
the combinator $\Achoose[]$.
A primitive distribution $\Sample{D(V)}$ extends the state measure with a
value drawn from the distribution $D$, with parameters $V$ depending on
the current state.
An observation $\kw{observe}\ V$ modifies the current measure by
restricting it to states where $V$ is zero.
It is implemented with the $\Aconstrain[]$ combinator, and it always returns the unit value.
The expression $\kw{let}\ x=M\ \kw{in}\ N$ intuitively first runs $M$
and binds its return value to $x$ using $\Padd[]$.  After running $N$, the
binding is discarded using $\Pdrop[]$.

\begin{lem}
If $s:\State$ and $ \Gamma \vdash V:t$ then $\vqq{V}s \in \Vals {t}$.
\end{lem}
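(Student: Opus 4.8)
The plan is to proceed by structural induction on the value $V$, equivalently by induction on the derivation of $\Gamma \vdash V : t$. Since $V$ is a value, only the three value-typing rules can conclude the derivation, namely (\textsc{Fun Var}), (\textsc{Fun Const}), and (\textsc{Fun Pair}), so there are exactly two base cases and one inductive case to treat.

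First I would dispatch the two base cases. If $V = c$ is a constant, the derivation ends with (\textsc{Fun Const}), so $t = \typeOf{c}$, while the semantics gives $\vqq{c}s = c$; since we identify closed values of type $t$ with elements of $\Vals{t}$, we get $c \in \Vals{\typeOf{c}} = \Vals{t}$ immediately. If $V = x$ is a variable, then (\textsc{Fun Var}) yields $(x \ty t) \in \Gamma$. Here I would unfold the definition of $\State$: every $s \in \State$ assigns to exactly the variables declared in $\Gamma$ a constant of the declared type, so $x \in \dom(s)$ and $\typeOf{s(x)} = t$. Hence $\vqq{x}s = \Plookup x s = s(x) \in \Vals{t}$.

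For the inductive step $V = (V_1, V_2)$, the derivation ends with (\textsc{Fun Pair}), so $t = t_1 * t_2$ with $\Gamma \vdash V_1 : t_1$ and $\Gamma \vdash V_2 : t_2$. The two induction hypotheses give $\vqq{V_1}s \in \Vals{t_1}$ and $\vqq{V_2}s \in \Vals{t_2}$. Since $\vqq{(V_1,V_2)}s = (\vqq{V_1}s, \vqq{V_2}s)$ and the interpretation of types sets $\tqq{t_1 * t_2} = \tqq{t_1} \times \tqq{t_2}$, the underlying carrier satisfies $\Vals{t_1 * t_2} = \Vals{t_1} \times \Vals{t_2}$, so the pair lies in $\Vals{t}$, as required.

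I expect the only step with genuine content to be the variable case, since it is the one place that appeals to the well-formedness invariant built into the definition of $\State$: that $\dom(s) = \dom(\Gamma)$ and that each stored constant carries the type recorded in $\Gamma$. The constant case is definitional and the pair case is purely computational, relying only on the product clause $\tqq{t * u} = \tqq{t} \times \tqq{u}$. Note that no measurability argument is needed here, because the claim concerns a single state $s$ together with the single value it produces, rather than the measure transformer $\aqq{V}$ as a whole.
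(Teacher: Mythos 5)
Your proof is correct: the paper states this lemma without proof, and the routine structural induction you give---with the variable case appealing to the typing invariant built into the definition of $\State$, the constant case being definitional, and the pair case using $\tqq{t*u}=\tqq{t}\times\tqq{u}$---is exactly the argument the paper leaves implicit. Your closing observation that no measurability is needed (that burden falls on the companion lemma about $\aqq{M}$) is also on point.
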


\begin{lem}
If  $\Gamma \vdash M:t$ then $\aqq{M} \in \PArr {\State} {(\State*t)} $.
\end{lem}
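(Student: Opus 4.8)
The plan is to argue by induction on the derivation of $\Gamma \vdash M : t$ (equivalently, on the structure of $M$), inspecting in each case the matching clause of the measure transformer semantics and checking that every combinator is applied to arguments of the types required by its signature, so that the resulting partial function lands in $\PArr{\State}{(\State*t)}$. Since the combinator signatures $\Aarr\in(t\to u)\to\PArrP{t}{u}$, $\Athen[]$, $\Achoose[]{}$, $\Aextend[]$, and $\Aconstrain[]$ are already recorded, each inductive case is essentially a matter of tracking types through a fixed composition, with two substantive side obligations: the measurability of the pure functions fed to $\Aarr$, and the side condition on $\Aextend$.

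First I would establish, as a sub-induction on the structure of $V$, that whenever $\Gamma \vdash V : t$ the valuation $\lambda s.\vqq{V}s : \State \to \Vals{t}$ is measurable. This is immediate for constants (constant maps are measurable) and for variables (the lookup $\Plookup{x}{\cdot}$ is a coordinate projection), and the pairing case follows since a pairing of measurable functions is measurable. The same observations show that the auxiliary operations $\Padd x$, $\Pdrop X$, $\mathtt{fst}$, $\mathtt{snd}$ and the two projections are measurable between the relevant state spaces, and that each operator $\otimes : b_1,b_2 \to b_3$ is Borel measurable (the arithmetic operators are continuous, the comparisons have open or cofinite preimages of each truth value, and every map out of a discrete space is measurable). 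The preceding lemma guarantees that $\vqq{V}s$ is a well-formed element of $\Vals{t}$, so these functions have the claimed codomains.

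With the building blocks in hand the routine cases follow by type tracking. For a value, an operator application, or a projection, $\aqq{M}=\Aarr f$ for a measurable $f:\State\to\State*t$, so $\aqq{M}\in\PArr{\State}{(\State*t)}$ by the signature of $\Aarr$. For $\Lif{V}{M_1}{M_2}$, rule (\textsc{Fun If}) gives $M_1,M_2$ the common type $t$, so the induction hypothesis yields $\aqq{M_1},\aqq{M_2}\in\PArr{\State}{(\State*t)}$, and $\Achoose$ applied to the measurable predicate $\lambda s.\vqq{V}s:\State\to\boolT$ returns an element of the same type. For $\kw{observe}\ V$ with $V:b$, the indicator $\lambda s.\vqq{V}s:\State\to b$ sends $\Aconstrain$ into $\PArr{\State}{\State}$, which composed by $\Athen$ with $\Aarr\,\lambda s.(s,\Punit)\in\PArr{\State}{(\State*\unitT)}$ lands in $\PArr{\State}{(\State*\unitT)}$, as needed since $t=\unitT$; here I would note that $\aqq{M}$ is genuinely a \emph{partial} function, the sole source of partiality being $\Aconstrain$. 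The let case is the longest chain: $\aqq{M}\in\PArr{\State}{(\State*t_1)}$ by hypothesis, $\Aarr(\Padd x)$ carries this into the state space of the extended environment $\Gamma,x\ty t_1$, $\aqq{N}\in\PArr{\State[\Gamma,x\ty t_1]}{(\State[\Gamma,x\ty t_1]*t_2)}$ by hypothesis (read relative to the extended environment), and the final $\Aarr\,\lambda(s,y).(\Pdrop{\Set x}\ s,\,y)$ projects back to states over $\Gamma$; composing the four via $\Athen$ gives $\PArr{\State}{(\State*t_2)}=\PArr{\State}{(\State*t)}$.

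The hard part will be the $\Sample{D(V)}$ case, where $\aqq{M}=\Aextend\,\lambda s.\mu_{D(\vqq{V}s)}$ and one must discharge the side condition of $\Aextend$: for every $A\in\mAlg_b$ the map $f_A\deq\lambda s.\mu_{D(\vqq{V}s)}(A)$ must be measurable, non-negative, and bounded from above. Non-negativity is immediate, and boundedness follows because each $\mu_{D(c)}$ is a (sub-)probability measure, so $f_A\le 1$ uniformly in $s$. Measurability in $s$ is the delicate point: via measurability of $\lambda s.\vqq{V}s$ it reduces to measurability of $c\mapsto\mu_{D(c)}(A)$ in the \emph{parameters} $c$, which in turn rests on the joint measurability of each standard mass/density $D(\cdot)\,(\cdot)$ in parameter and argument, together with the fact that summing (in the discrete case) or integrating (in the continuous case) a jointly measurable, bounded kernel over $A$ yields a function measurable in the parameter. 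Once this is verified, $\Aextend$ returns an element of $\PArr{\State}{(\State*b)}$ with $b=t$, closing the case and completing the induction.
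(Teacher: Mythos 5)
The paper never proves this lemma: it is stated bare, as a well-formedness claim immediately after the semantic clauses for \fun, and no argument for it appears in the body or in Appendix~\ref{app:detailed-proofs}, so there is no official proof to compare yours against. Your proposal is correct and supplies exactly what the paper leaves implicit. The overall shape---induction on the typing derivation, a preliminary sub-induction giving measurability of the valuations $\lambda s.\vqq{V}s$, of the state operations ($\metaF{lookup}$, $\Padd[]$, $\Pdrop[]$, $\mathtt{fst}$, $\mathtt{snd}$) and of the operators $\otimes$, then type-tracking through the recorded combinator signatures in the value, operator, projection, conditional, observation and let cases---is forced by the definitions, and you handle the one bookkeeping subtlety correctly: in the let case the induction hypothesis for $N$ must be read over the extended state space $\State[\Gamma,x\ty t_1]$, with $\Aarr (\Padd x)$ and $\Aarr (\Pdrop \Set{x})$ mediating between the two state spaces. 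The only substantive obligation is the one you isolate in the $\Sample{D(V)}$ case: discharging the definedness condition of $\Aextend[]$. Here the paper does leave a one-line trace---the definition of $\Aextend[]$ carries the remark that the condition holds whenever $m$ is measurable and each $m(x)$ is a (sub-)probability measure, ``which is always the case in our semantics for \fun''---but that assertion is never verified anywhere. Your reduction of it to measurability of $c\mapsto\mu_{D(c)}(A)$ in the distribution parameters, via joint measurability of the standard masses and densities together with Tonelli (countable sums over $A$ in the discrete case, integration of a bounded jointly measurable kernel in the continuous case), plus the uniform bound $f_A\le 1$ from the sub-probability property, is precisely the verification the paper omits; your observation that $\Aconstrain[]$ is the sole source of partiality is likewise accurate.
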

%
%
The measure transformer semantics of \fun is hard to use directly,
except in the case of \bfun
where they can be directly implemented: a naive
implementation of $\Dist$ is as a map assigning a probability to each
possible variable valuation.
If there are $N$ variables, each sampled from a Bernoulli distribution,
in the worst case there are $2^N$ paths to be explored in the computation, each of
which corresponds to a variable valuation. 
\begin{FULL}%
Our direct implementation of the
measure transformer semantics, described in the technical report version of our paper~\cite{fun-esop11}, 
explicitly constructs the valuation.
It works fine for small examples but would blow up on large datasets.
\end{FULL}%
In this simple case, the measure
transformer semantics of closed programs also coincides with the sampling
semantics.
\begin{thm}\label{thm:correspondence-discrete}
  Suppose $\emptyEnv \vdash M : t$ for some $M$ 
  in {\bfun}.
  If $\mu = \aqq{M}\ \delta_{\Punit}$ and $\emptyEnv \vdash V : t$ then
  $\Prob[M]{\RV=V \mid \Obs} = \mu(\Set{(\Punit,V)})/\abs\mu$.
\end{thm}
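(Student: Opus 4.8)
The plan is to reduce the theorem to a single adequacy lemma comparing the two \emph{unnormalized} measures, and then divide. Define the operational result measure of a closed $M$ with $\emptyEnv\vdash M:t$ by $\rho_M \deq \sum_{\omega\in\Obs}\kw{prob}(\omega)\,\delta_{(\Punit,\RV(\omega))}$, a finite measure on $\tqq{\unitT*t}$ (a genuine finite sum, since $\Runs$ is finite for \bfun). The heart of the matter is that the measure transformer semantics computes exactly this measure:
\[ \aqq{M}\ \delta_{\Punit} = \rho_M. \]
Granting this lemma, the theorem is routine bookkeeping: evaluating both sides at the singleton gives $\mu(\Set{(\Punit,V)}) = \sum_{\omega\in\Obs,\,\RV(\omega)=V}\kw{prob}(\omega) = \Prob[M]{(\RV^{-1}(V))\cap\Obs}$, and summing over the finitely many reachable values gives $\abs\mu = \Prob[M]{\Obs}$; dividing yields the posterior $\Prob[M]{\RV=V\mid\Obs}$ defined above, which is well-defined precisely when $\abs\mu=\Prob[M]{\Obs}\neq0$.

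I would prove the lemma by well-founded induction on the length of the longest reduction sequence out of $M$, which is finite because \bfun is normalizing. The engine of the argument is that every combinator used in the semantics---$\Aarr$, $\Athen$, $\Achoose$, $\Aextend$, and (in the discrete \bfun setting, where $\Aconstrain p\ \mu\ A = \mu(A\cap\Set{x\mid p(x)=\kw{true}})$) $\Aconstrain$---is additive and positively homogeneous in its input measure; hence so is $\aqq{N}$ for every $N$. I use this linearity to push a decomposition of the active redex through the surrounding reduction context: writing $M={\cal R}[M_0]$ with $M_0$ the active redex, the let-structure of reduction contexts gives $\aqq{M}\ \delta_{\Punit} = K(\aqq{M_0}\ \delta_{\Punit})$ for a linear continuation transformer $K$ satisfying $K(\delta_{(\Punit,W)}) = \aqq{{\cal R}[W]}\ \delta_{\Punit}$ for every closed value $W$.

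A direct computation of $\aqq{M_0}\ \delta_{\Punit}$ for each redex form then drives the induction. For $M_0=\Sample{\kw{Bernoulli}(c)}$ one gets $c\,\delta_{(\Punit,\kw{true})}+(1-c)\,\delta_{(\Punit,\kw{false})}$; for a deterministic redex (operator, projection, \kw{if}, or let of a value) one gets $\delta_{(\Punit,W)}$ with $W$ the contractum; and for $\kw{observe}\ V$ one gets $\delta_{(\Punit,\Punit)}$ when $V=0_b$ and the zero measure when $V\neq 0_b$. Combined with linearity of $K$ and the base value case $\aqq{V}\ \delta_{\Punit}=\delta_{(\Punit,V)}$, this yields, whenever the active redex is not an invalid observation,
\[ \aqq{M}\ \delta_{\Punit}=\sum_{(p,M')\,:\,M\to^p M'}p\cdot(\aqq{M'}\ \delta_{\Punit}). \]
On the operational side, for such $M$ each valid run factors uniquely (using Determinism, Lemma~\ref{lemma:det}) as one head step followed by a valid run of the successor, so $\rho_M = \sum_{M\to^p M'}p\cdot\rho_{M'}$; applying the induction hypothesis $\aqq{M'}\ \delta_{\Punit}=\rho_{M'}$ to each strictly shorter successor closes this case.

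The one genuinely delicate point---and the main obstacle---is the mismatch in how the two semantics treat observation. Operationally, $\kw{observe}\ V$ always steps to $\kw{()}$ and invalidity is recorded only globally through $\Obs$; denotationally, $\Aconstrain$ discards the mass immediately. These agree except when the active redex is an \emph{invalid} observation, where the recurrence $\rho_M=\sum_{M\to^p M'}p\cdot\rho_{M'}$ fails: every run of $M$ passes through this invalid configuration and is therefore discarded, so $\rho_M=0$, whereas the successor's measure $\rho_{M'}$ need not vanish. I would isolate this case and check that both sides are $0$---$\rho_M=0$ because no run is valid, and $\aqq{M}\ \delta_{\Punit}=K(0)=0$ by linearity---so adequacy holds here too. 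The remaining work (the per-redex evaluations, the linearity of each combinator, and the factorization of valid runs) is routine.
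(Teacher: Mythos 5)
Your overall strategy is sound and genuinely different from the paper's. You induct on the length of reduction sequences of \emph{closed} terms, using linearity of the combinators to push the denotation through the reduction context; the paper instead proves a strengthened statement for \emph{open} terms: it introduces a construct $\Linit(M,\mu)$ that starts $M$ in an arbitrary state measure $\mu$, and shows by induction on the typing derivation $\Gamma\vdash M:t$ that $\nu=\aqq{M}\ \mu$ satisfies $\nu(\State\times\Set{V})=\Prob[N]{\Obs\cap\RV=V}$ and $\nu(\State\times\Vals{t})=\Prob[N]{\Obs}$ for $N=\Linit(M,\mu)$, then specializes to $\mu=\delta_{\Punit}$ and divides. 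Your base case, the Bernoulli redex, the valid/invalid observation cases (in particular $\rho_M=0=K(0)$ when the head redex is an invalid observation), and the final bookkeeping are all correct.

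The gap is in your per-redex computation for \kw{let}. You assert that for a "let of a value" the redex denotes $\delta_{(\Punit,W)}$ with $W$ the contractum; but the contractum of $\kw{let}\ x=V\ \kw{in}\ N$ is $N\Subst{V}{x}$, an \emph{expression}, not a value, so the claim is not even type-correct. (The same slip occurs for \kw{if}, but there it is harmless: $\Achoose$ with a constant predicate plus homogeneity gives $\aqq{\kw{if}\ \kw{true}\ \kw{then}\ M_1\ \kw{else}\ M_2}\ \delta_{\Punit}=\aqq{M_1}\ \delta_{\Punit}$, and $K(\aqq{M_1}\ \delta_{\Punit})=\aqq{{\cal R}[M_1]}\ \delta_{\Punit}$ holds because $K$ is built from the compositional let semantics.) What your recurrence actually needs at the let redex is
\[
\aqq{\kw{let}\ x=V\ \kw{in}\ N}\ \delta_{\Punit} \;=\; \aqq{N\Subst{V}{x}}\ \delta_{\Punit},
\]
i.e.\ a substitution lemma relating the environment-based denotational semantics (run $\aqq{N}$ on the state measure $\delta_{\Set{x\mapsto V}}$, then drop $x$) to the substitution-based operational semantics. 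This is not a "direct computation": it requires its own structural induction on $N$, and the statement must be strengthened to open terms and arbitrary state measures for that induction to close (consider $N$ containing further lets, conditionals, and observations under the binding of $x$). That strengthened induction over open terms is precisely the content of the paper's proof; your reduction-length induction does not eliminate it, it relocates it inside a step labeled routine. With the substitution lemma stated and proved, your argument goes through.
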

\begin{FULL}%
  \begin{proof}
    We add a construct to give a semantics to open \bfun expressions.
    Let $\Linit(M,\mu)$ stand for $M$ starting in an initial
    probability measure $\mu$ on $\State$.
    Let $\Linit(M,\mu)\to^{p_s}M \SubstSeq{V}x$
    when $s=\Set{x_i\mapsto V_i\mid i = 1..n}\in\State$ and
    $p_s=\mu(\Set{s'\mid s'\rvert_{\fv(M)}=s\rvert_{\fv(M)}})$.
    In particular, if $M$ is closed, then $\Linit(M,\delta_{\Punit})\to^1M$,
    so $\Linit(M,\delta_{\Punit})$ has the same traces as $M$ but for an additional (valid) initial step.

    By induction on the derivation of $\Gamma \vdash M : t$, we prove that
    if $\Gamma \vdash M : t$ and $\emptyEnv \vdash V : t$ 
    and $\mu\in\Dist$, 
    then $\nu(\State\times\Set{V}) =  \Prob[N]{\Obs \cap \RV=V}$ and 
    $\nu(\State\times\Vals t) = \Prob[N]{\Obs}$, where
    $\nu=\aqq{M}\ \mu$ and $N=\Linit(M,\mu)$.  

    Then, for closed $M$ we get $\Prob[M]{\RV=V \mid \Obs} =
    \Prob[M]{\Obs \cap \RV=V}/\Prob[M]{\Obs} =$\\ 
    $\nu(\Set{(\Punit,V)})/\nu(\Set{\Punit}\times\Vals t)$.\QED
  \end{proof}
\end{FULL}%

\subsection{Discussion of the Semantics}\label{sec:discussion}
In this section we discuss some small examples that are 
illustrative of the semantics of the \kw{observe} primitive.
The first example highlights the difference between discrete observations
and observations on continuous types.

The subsequent examples contrast our definition of \kw{observe} 
with some alternative definitions.
The second example deals with the definition of discrete observations,
that is shown to coincide with the filtering semantics of \bfun,
unlike two alternative semantics.  
In the third example, we treat continuous observations, 
showing that distributing an observation into both branches 
of an if statement yields the same result, 
in contrast to an alternative semantics of observations 
as computing (normalized) conditional probability distributions.

In the fourth example, we show an example of model comparison that
depends on the unnormalized nature of observations.
In the fifth example, we show a well-typed \fun program 
with an observation (of a derived random variable) 
that failed to be well-defined in the original semantics of observation.

\subsubsection*{Discrete versus continuous observations. }
As an example to highlight the difference between continuous and discrete observations, 
we first consider the following program, which observes that a 
normally distributed random variable is zero.
The resulting distribution of the return value $\kw{x}$ is a point mass at 0.0,
as expected.  The measure of $\{0.0\}$ in this distribution is 
$\kw{Gaussian}(0.0,1.0)\;0.0 \approx 0.4$. 
\pagebreak
\begin{displaylisting}{Continuous Observation:}
 let x = random (Gaussian(0.0, 1.0)) in let _ = observe x in x
\end{displaylisting} 
The second program instead observes that a Boolean variable is true.
This has zero probability of occurring, 
and since the Boolean type is discrete,
the resulting measure is the zero measure.
\begin{displaylisting}{Discrete Observation:}
 let x = random (Gaussian(0.0, 1.0)) in let b = (x==0.0) in let _ = observe b in x
\end{displaylisting}   
These examples show the need for observations at $\realT$ type, 
as well as at type $\boolT$.
(This also clearly distinguishes \ls$observe$ from assume in assertional programming.)

\subsubsection*{Discrete Observations amount to filtering.}
A consequence of Theorem~\ref{thm:correspondence-discrete}
is that our measure transformer semantics
is a generalization of the sampling semantics for discrete probabilities.
For this theorem to hold, it is critical that \kw{observe} denotes
unnormalized conditioning (filtering).
Otherwise programs that perform observations inside the branches 
of conditional expressions would have undesired semantics.
As the following example shows, the two program fragments
\ls$observe (x=y)$ %
and %
\ls$if x then observe (y=true)$ \ls$else observe (y=false)$ %
would have different measure transformer semantics
although they have the same sampling semantics.

\begin{displaylisting}{Simple Conditional Expression: $M_{\mathrm{if}}$}
 let x = random (Bernoulli(0.5))
 let y = random (Bernoulli(0.1))
 if x then observe (y=true) else observe (y=false)
 y
\end{displaylisting}
\jb{Expand the MT semantics?}
In the sampling semantics,
the two valid runs are when $\kw{x}$ and $\kw{y}$ are both \kw{true} (with probability 0.05),
and both \kw{false} (with probability 0.45),
so we have $\Prob{\kw{true}\mid\Obs}=0.1$ and $\Prob{\kw{false}\mid\Obs}=0.9$.

If, instead of the unnormalized definition $\Aconstrain p\ \mu\ A= \mu(A\cap\Set{x\mid p(x)})$, we had either of the normalizing definitions
$$\Aobserve p\ \mu\ A
=\frac{\mu(A\cap\Set{x\mid p(x)})}{\mu(\Set{x\mid p(x)})}
\qquad\text{or}\qquad
\abs\mu\frac{\mu(A\cap\Set{x\mid p(x)})}{\mu(\Set{x\mid p(x)})}
$$ 
then 
$\aqq{M_{\mathrm{if}}}\ \delta_{\Punit}\ \Set{\kw{true}}=
\aqq{M_{\mathrm{if}}}\ \delta_{\Punit}\ \Set{\kw{false}}$,
which would invalidate the theorem.

Let $M' = M_{\mathrm{if}}$ with $\kw{observe}\ (x=y)$ substituted for the
conditional expression.  With the actual or either of the flawed definitions of
$\Aobserve[]$ we have 
$\aqq{M'}\ \delta_{\Punit}\ \Set{\kw{true}}=
(\aqq{M'}\ \delta_{\Punit}\ \Set{\kw{false}})/9$.

\subsubsection*{Continuous Observations are not normalizing.}
As in the discrete case, continuous observations do not 
renormalize the resulting measure. %
In the program below, the variables \kw{x} and \kw{y} are independent: 
observing \kw{x} at a given value amounts to scaling the measure of \kw{y}
by some fixed amount.
\begin{displaylisting}{Simple Continuous Observation: $M_{\mathrm{obs}}$}
 let x = random (Gaussian(0.0, 1.0))
 let y = random (Gaussian(0.0, 1.0))
 observe (x-1.0)
 y
\end{displaylisting}
\pagebreak

The resulting distribution $\mu_y$ of \kw{y} is the normal distribution, 
scaled by a factor $\kw{Gaussian(0.0,1.0)}\ 1.0\approx0.24$.
In particular, $\mu_y(\{\kw{y}\in\Real:\kw{y}>-1\})/\abs{\mu_y}\approx0.16$.
Below, we let $\nu$ be the joint distribution of \kw{x} and \kw{y} before the observation.

If we replace the observation by an if statement that performs 
the same observation in each branch, the resulting distribution is unchanged.
Let $M' = M_{\mathrm{obs}}$ with the conditional expression 
$N:=$\kw{if x+y>0 then observe (x-1.0)} \kw{else observe (x-1.0)} 
substituted for \kw{observe (x-1.0)}.
Let $A=\{(x,y)\in\Real^2:x+y>0\}$ and $B= \Real^2\setminus A$.
We have $\aqq{N}\nu=\Achoose p\ T\ T\ \nu = T(\nu|_A) + T(\nu|_B)$ 
where $p=\lambda x,y. (x+y>0)$ and 
$T=\Aconstrain\lambda x,\_.(x-1)$.
Since the definition of $\Aconstrain\lambda x,\_.(x-1)\mu=\given\mu x \cdot 1$ is linear in $\mu$ (where defined) and $\nu=\nu|_A+\nu|_B$,
we have $\aqq{M_{\mathrm{obs}}}=\aqq{M'}$.

However, if observations always yielded probability distributions, 
and \kw{if} statements reweighted the result of each branch 
by the probability that that branch was taken, 
the above equality would not hold.
In $M'$, the branch condition \kw{x+y>0} is \kw{true} with probability 0.5 a priori.
This reweighting semantics would 
after the observation of $\kw{x=1}$ 
give the same probability 
to $\kw{1+y>0}$ (the left branch being taken) 
and $\kw{1+y<0}$ (the right branch being taken).
In contrast, the original program $M_{\mathrm{obs}}$ 
yields $\Prob{\kw{1+y<0}}\approx0.16$.

\begin{FULL}%
\subsubsection*{Medical trial.}
  As another example, let us consider a simple 
  Bayesian evaluation of a medical trial~\cite{infer.net}.
  We assume a trial group of \ls$nTrial$ persons, of which \ls$cTrial$ were healthy at the end of the trial,
  and a control group of \ls$nControl$ persons, of which \ls$cControl$ were healthy at the end of the trial.
  Below, \ls$Beta(1.0,1.0)$ is the uniform distribution on the interval $[0.0, 1.0]$.
  We return the posterior distributions of the likelihood that 
  a member of the trial group (\kw{pTrial})
  and a member of the control group (\kw{pControl})
  is healthy at the end of the trial.

\begin{displaylisting}{Medical Trial:}
 let medicalTrial nTrial nControl cTrial cControl = 
   let pTrial = random(Beta(1.0,1.0))
   observe (cTrial == random (Binomial(nTrial,pTrial))); 
   let pControl = random(Beta(1.0,1.0))
   observe (cControl == random (Binomial(nControl,pControl)));
   pTrial, pControl
\end{displaylisting}

We can then compare this model to one where the treatment is ineffective, 
that is, where the members of the trial group and the control group 
have the same probability of becoming healthy.
Also here we give a uniform prior to the probability that the treatment is effective;
the posterior distribution of this variable will depend on the Bayesian evidence 
for the different models, that is, the ratio between the probabilities 
of the observed outcome in the two models.
This way of performing model comparison critically depends on the unnormalized nature of discrete observations as filtering.

\begin{displaylisting}{Model Selection:}
 let modelSelection nTrial nControl cTrial cControl = 
   let pEffective = random(Beta(1.0,1.0))
   if random(Bernoulli(pEffective)) then 
     medicalTrial nTrial nControl cTrial cControl
     ()
   else 
     let pAll = random(Beta(1.0,1.0))
     observe (cTrial   == random (Binomial(nTrial,pAll)))
     observe (cControl == random (Binomial(nControl,pAll)))
   pEffective
\end{displaylisting} 

\subsubsection*{Observation of Derived Variable}
\label{sec:failing-observations}
The following example, due to Chung-Chieh Shan, 
highlighted regularity problems with 
our original definition of observation~\cite{fun-esop11}.
\begin{displaylisting}{Observation of Derived Variable:}
 let x = random (Beta(1.0, 1.0)) in let y = x - 0.5 in observe y; x.
\end{displaylisting}   
Intuitively, this program should yield a point mass at \kw{x=0.5}, \kw{y=0}.
In our semantics, if $\mu$ is the measure before the observation (when starting from $\delta_{\Punit}$) we have
\begin{eqnarray*}
  F_0(x,y) &=& 1\text{~if~} x > 0.5\text{~and~}y > 0\\
  F_0(x,y) &=& 0\text{~if~} x < 0.5\text{~or~}y < 0 
\end{eqnarray*}
Otherwise, we have $F_0(x,y)=\inf\{ F_0(x',y')\mid x'\ge x \land y'\ge y \} = 1$ so
$\given\mu y{A}0 = 1$ iff $(0.5,0)\in A$ and otherwise 0; in particular we have
$\given\mu y{      x = 0.5    }0=1$.

The original definition of observation simply applied the limit of 
Equation~(\ref{eq:2}) to any $A$ (not only to rectangles $R_d$). 
Then the density of any null set would be 0, and in particular we would have
$\given\mu y{      x = 0.5    }0=0$. 
This would contradict countable additivity, 
since $\abs{\given\mu y{\cdot}0}=1$ but
$\given\mu y{x_1 < \abs{x-0.5}\le x_2}0=0$ when $0<x_1<x_2$.


\end{FULL}%

\section{Semantics by compilation to \csoft}
\label{sec:semantics-fg}
A naive implementation of the measure transformer semantics of the
previous section would work directly with measures of states, whose size
even in the discrete case could be exponential in the number of variables in scope.
For large models, this becomes intractable.
In this section, we instead give a semantics to \fun programs
%
%
%
by translation to the simple imperative language \imp.
We consider \imp to be a sublanguage of \csoft;
the \csoft program is then evaluated by Infer.NET by constructing a suitable factor graph~\cite{DBLP:journals/tit/KschischangFL01},
whose size will be linear in the size of the program.
The implementation advantage of translating \fsharp to \csoft, over simply
generating factor graphs directly \citep{FACTORIE}, is that
the translation preserves the structure of the input model
(including array processing in our full language),
which can be exploited by the various inference algorithms supported by Infer.NET.
%

\begin{HIDE}
  Our strategy is to first compile \fun to a lower-level imperative
  language \imp (a subset of \csoft), and then to compile \imp to
  factor graphs.  We introduce syntax for \imp and for factor graphs,
  and define the semantics of both using the measure transformer
  semantics of \Sref{sec:coarrows}.
  Theorem~\ref{thm:fun-to-imp-correct} and
  Theorem~\ref{thm:imp-to-fg-correct} establish the correctness of the
  first step, from \fun to \imp, and the second step, from \imp to
  factor graphs.
\end{HIDE}

\subsection{\imp: An Imperative Core Calculus}
\label{sec:core}
\imp is an imperative language, based on the static single assignment (SSA)
intermediate form.  It is a sublanguage of \csoft, the input language of
Infer.NET~\cite{infer.net}.
%
A composite statement $C$ is a sequence of statements,
each of which either stores the result of a primitive operation in a location,
observes the contents of a location to be zero,
or branches on the value of a location.
\imp shares the base types $b$ with \fun, but has no tuples.

\adg{NB removed from $()$ from $L$; as per section 1, $()$ is a constant $c$}
\newcommand{\Local}[1]{\kw{local}\ #1\ \kw{in}\ }
\begin{display}[.35]{Syntax of {\imp}:}
\clause{l,l',\ldots}{location (variable) in global store}\\
\clause{E,F ::= c \mid l \mid (l \otimes l)}{expression}\\
\category{I}{statement}\\
\entry{l \assign E}{assignment}\\
\entry{l \sample D(l_1,\ldots,l_n) }{random assignment}\\
\entry{\kw{observe}_b\ l }{observation}\\
\entry{\kw{if}\ l\ \kw{then}\ C_1\ \kw{else}\ C_2 }{conditional}\\
\entry{\Local{l \ty b} C}{local declaration (scope of $l$ is $C$)}\\
\clause{C ::= \kw{nil} \mid I \mid (C;C)}{composite statement}
\end{display}
When making an observation $\kw{observe}_b$, we make explicit the type $b$ of the observed location.
\begin{SHORT}
In the form $\kw{if}\ l\ \kw{then}_{\Sigma_1}\ C_1\
\kw{else}_{\Sigma_2}\ C_2$, the environments $\Sigma_1$ and $\Sigma_2$ declare the local variables assigned by the \kw{then} branch and the \kw{else} branch, respectively.
These annotations simplify type checking and denotational semantics.
\end{SHORT}
\begin{FULL}%
In a local declaration, $\Local{l \ty b} C$, the location $l$ is bound, with scope $C$.
Next, we derive an extended form of $\kw{local}$, which introduces a sequence of local variables.
%
\begin{display}{Extended Form of \kw{local}:}
\clause{ \Local{\Sigma}C \deq \Local{l_1 \ty b_1}\dots\Local{l_n \ty b_n}C \quad \mbox{where $\Sigma =  \emptyEnv, l_1 \ty b_1, \dots, l_n \ty b_n$} }
\end{display}
\end{FULL}%

The typing rules for \imp are standard.
We consider \imp typing environments $\Sigma$ to be a special case of \fun environments $\Gamma$, where
variables (locations) always map to base types.
\begin{FULL}%
If $\Sigma =  \emptyEnv, l_1 \ty b_1, \dots, l_n \ty b_n$,
we say $\Sigma$ is \emph{well-formed} and write $\Sigma \vdash \diamond$ to mean that the locations $l_i$ are pairwise distinct.
The judgment $\Sigma \vdash E : b$ means that the expression $E$ has type $b$ in the environment $\Sigma$.
\end{FULL}%
The judgment $\Sigma \vdash C : \Sigma'$ means that the composite
statement $C$ is well-typed in the initial environment $\Sigma$, yielding
additional bindings $\Sigma'$.
%
\begin{FULL}%
  \begin{display}{Judgments of the {\imp} Type System:}
  \clause{\Sigma \vdash \diamond}{environment $\Sigma$ is well-formed}\\
  \clause{\Sigma \vdash E : b}{in $\Sigma$, expression $E$ has type $b$}\\
  \clause{\Sigma \vdash C : \Sigma'}{given $\Sigma$, statement $C$ assigns to $\Sigma'$}
  \end{display}
  \begin{display}[.35]{Typing Rules for {\imp} Expressions and Commands:}
    \>
    \staterule{Imp Const}{
      \Sigma\vdash\diamond }{ \Sigma \vdash c : \typeOf{c} } \quad

    \staterule{Imp Loc}{\Sigma\vdash\diamond  \quad (l\tty b) \in \Sigma }{ \Sigma \vdash l : b } \quad


    \staterule{Imp Op}{ \Sigma \vdash l_1 : b_1 \quad \Sigma \vdash  l_2 : b_2 \quad \otimes : b_1,b_2 \to b_3 }{ \Sigma \vdash l_1 \otimes l_2 : b_3 }\\[\GAP]\>

    \staterule{Imp Assign}{ \Sigma \vdash E : b \qquad l \nin
      \dom(\Sigma) }{ \Sigma \vdash l \assign E : (\emptyEnv,l\tty b) } \quad

    \staterule{Imp Random}{
      D : (x_1 \ty b_1, \dots, x_n \ty b_n) \to b \qquad l \nin \dom(\Sigma) \\
     \Sigma \vdash l_1 : b_1 \quad \cdots \quad \Sigma \vdash l_n :
      b_n }{ \Sigma \vdash l \sample D(l_1,\ldots,l_n) : (\emptyEnv,l\tty b) 
    } \\[\GAP]\>

    \staterule{Imp Observe}{ \Sigma \vdash l : b }{ \Sigma \vdash
      \kw{observe}_b\ l : \emptyEnv } \quad

    \staterule{Imp Seq}{ \Sigma \vdash C_1:\Sigma' \qquad
      \Sigma,\Sigma' \vdash C_2:\Sigma'' }{ \Sigma \vdash
      C_1;C_2:\Sigma',\Sigma'' } \quad

    \staterule{Imp Nil}{ \Sigma \vdash \diamond }{ \Sigma \vdash
      \kw{nil}:\emptyEnv } \\[\GAP]\>

   \staterule{Imp If}{
  \Sigma \vdash l : \boolT 
  \quad \Sigma \vdash C_1: \Sigma'
  \quad \Sigma \vdash C_2: \Sigma'
  }{
  \Sigma \vdash \kw{if}\ l\ \kw{then}\ C_1\ \kw{else}\ C_2: \Sigma'
  } \quad

  \staterule{Imp Local}{
  \Sigma \vdash C : \Sigma' \quad (l \ty b) \in \Sigma'
  }{
  \Sigma \vdash \Local{l \ty b}C : (\Sigma' \setminus \Set{l \ty b})
  }
  \end{display}

\ADG{2013 rule Imp Local forces us to use the local variable; is that necessary, as the referee asks?  may bite us in auto-generated uses of local.}
\JB{In our auto-generated uses, we indeed always use the variable, but I don't think it is necessary as such.}
To treat sequences of local variables, we use the \emph{shuffle product} $\Sigma_1 + \Sigma_2$ of two environments,
defined below.
  \begin{display}{Typing Rule for Extended Form of \kw{local}:}
  \>\staterule{Sh Emp}{} {\emptyEnv\in \emptyEnv + \emptyEnv}\quad\;
    
  \staterule{Sh Left}{\Sigma \in \Sigma_1 + \Sigma_2\quad 
    \Sigma,x:b\vdash\diamond} 
  {(\Sigma,x:b) \in (\Sigma_1,x:b) + \Sigma_2}\quad\;
  
  \staterule{Sh Right}{\Sigma \in \Sigma_1 + \Sigma_2\quad 
    \Sigma,x:b\vdash\diamond} 
  {(\Sigma,x:b) \in \Sigma_1 + (\Sigma_2,x:b)}
  \quad\;
  \staterule{Imp Locals}{
  \begin{prog}
    \Sigma \vdash C : \Sigma'_1 \\
    \Sigma'_1 \in \Sigma_1 + \Sigma'
  \end{prog}
  }{
  \Sigma \vdash \Local{\Sigma_1}C : \Sigma'
  }
  \end{display}
\end{FULL}%
\begin{SHORT}
  \begin{display}[.35]{Part of the Type System for {\imp}: $\Sigma \vdash C : \Sigma'$}
\>
\staterule{Imp Seq}{
\Sigma \vdash C_1:\Sigma' \qquad \Sigma,\Sigma' \vdash C_2:\Sigma''
}{
\Sigma \vdash C_1;C_2: (\Sigma',\Sigma'')
}\qquad

\staterule{Imp Nil}{
\Sigma \vdash \diamond
}{
\Sigma \vdash \kw{nil}:\emptyEnv
}\qquad

\staterule{Imp Assign}{
\Sigma \vdash E : b \qquad l \nin \dom(\Sigma)
}{
\Sigma \vdash l \assign E : \emptyEnv,l\tty b
} \\[\GAP]\>

\staterule{Imp Observe}{
\Sigma \vdash l : b
}{
\Sigma \vdash \kw{observe}_b\ l : \emptyEnv
} 
\quad
\staterule{Imp If Locals}{
  \Sigma \vdash l : \boolT 
  \quad \Sigma \vdash C_1: \Sigma'_1
  \quad \Sigma \vdash C_2: \Sigma'_2
  \quad \Set{\Sigma'_i}=\Set{\Sigma_i,\Sigma'}
  }{
\Sigma \vdash \kw{if}\ l\ \kw{then}_{\Sigma_1}\ C_1\ \kw{else}_{\Sigma_2}\ C_2: \Sigma'
}
\end{display}
\end{SHORT}
\begin{FULL}%

\begin{lem}~
\begin{enumerate}
\item If $\Sigma,\Sigma' \vdash \diamond$ then $\dom(\Sigma) \cap \dom(\Sigma') = \emptyset$.
\item If $\Sigma \vdash E : b$ then $\Sigma \vdash \diamond$ and $\fv(E) \subseteq \dom(\Sigma)$.
\item If $\Sigma \vdash C : \Sigma'$ then $\Sigma,\Sigma' \vdash \diamond$.
\end{enumerate}
\end{lem}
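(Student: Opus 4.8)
The plan is to treat the three parts in the stated order, since part (3) will invoke part (2), whereas parts (1) and (2) are self-contained. Part (1) is immediate from the definition of well-formedness: $\Sigma,\Sigma' \vdash \diamond$ asserts that all locations listed in the concatenated environment $\Sigma,\Sigma'$ are pairwise distinct, so in particular no location of $\Sigma$ coincides with a location of $\Sigma'$, which is exactly $\dom(\Sigma) \cap \dom(\Sigma') = \emptyset$.

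For part (2) I would proceed by case analysis on the last rule used to derive $\Sigma \vdash E : b$, equivalently by induction on the structure of $E$, of which there are three cases. In the \textsc{Imp Const} case $E = c$ and the premise is exactly $\Sigma \vdash \diamond$, while $\fv(c) = \emptyset \subseteq \dom(\Sigma)$. In the \textsc{Imp Loc} case $E = l$, the premises give $\Sigma \vdash \diamond$ and $(l \ty b) \in \Sigma$, so $\fv(l) = \{l\} \subseteq \dom(\Sigma)$. In the \textsc{Imp Op} case $E = l_1 \otimes l_2$, the induction hypothesis applied to the two subderivations yields $\Sigma \vdash \diamond$ and $\fv(l_i) \subseteq \dom(\Sigma)$, whence $\fv(E) = \fv(l_1) \cup \fv(l_2) \subseteq \dom(\Sigma)$.

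For part (3) I would do induction on the derivation of $\Sigma \vdash C : \Sigma'$. The leaf cases \textsc{Imp Assign}, \textsc{Imp Random}, and \textsc{Imp Observe} are discharged using part (2): the expression (resp.\ location) premise forces $\Sigma \vdash \diamond$, and then the side condition $l \nin \dom(\Sigma)$ (for assignment and random) extends this to $\Sigma, l \ty b \vdash \diamond$, while for observation the output environment is empty so $\Sigma, \emptyEnv = \Sigma$ suffices. The case \textsc{Imp Nil} is immediate. For \textsc{Imp Seq}, the hypothesis on $C_1$ gives $\Sigma, \Sigma' \vdash \diamond$ and the hypothesis on $C_2$ (typed under $\Sigma, \Sigma'$) gives $\Sigma, \Sigma', \Sigma'' \vdash \diamond$, modulo associativity of concatenation. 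For \textsc{Imp If}, applying the hypothesis to either branch (both produce the same $\Sigma'$) gives $\Sigma, \Sigma' \vdash \diamond$ directly. The remaining cases \textsc{Imp Local} and \textsc{Imp Locals} are those in which the output environment is a sub-list of the one produced by the subderivation; here I would use the elementary fact that deleting bindings from a well-formed environment (and, for \textsc{Imp Locals}, permuting them as dictated by the shuffle product $\Sigma_1 + \Sigma'$) preserves pairwise distinctness, so $\Sigma, \Sigma' \vdash \diamond$ follows from the inductive $\Sigma, \Sigma'_1 \vdash \diamond$.

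The routine steps are just the manipulations of environment concatenation; the only points needing slight care are the \textsc{Local}/\textsc{Locals} cases, where I must argue that discarding or reshuffling the locally introduced bindings keeps the remaining environment well-formed. I would isolate this as a small sublemma: $\Sigma \vdash \diamond$ is preserved under removal and reordering of bindings. I would also flag the harmless edge case of \textsc{Imp Random} with no parameters, which does not actually arise since every distribution $D$ in the signature has at least one argument, but which in any case can be covered by threading $\Sigma \vdash \diamond$ through the rule explicitly.
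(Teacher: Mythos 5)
The paper states this lemma without any proof at all, treating it as a routine sanity check on the type system, and your argument is precisely the routine one the authors leave implicit: part (1) directly from the definition of $\Sigma \vdash \diamond$ as pairwise distinctness, part (2) by induction over the three expression rules, and part (3) by induction on the statement typing derivation, using part (2) at the leaves and the side conditions $l \nin \dom(\Sigma)$ to extend well-formedness. Your two points of extra care are both sound and go beyond what the paper records: the sublemma that deleting or reordering bindings preserves well-formedness (needed for \textsc{Imp Local} and the shuffle-product rule \textsc{Imp Locals}), and the observation that \textsc{Imp Random} only yields $\Sigma \vdash \diamond$ because every distribution signature has at least one parameter, so a nullary distribution would require threading $\Sigma \vdash \diamond$ through the rule explicitly.
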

\subsection{Measure Transformer Semantics of {\imp}}
A compound statement $C$ in \imp has a semantics as a
measure transformer $\impdt{C}$ generated from the set of combinators defined in
Section~\ref{sec:coarrows}.
An \imp program does not return a value, but
is solely a measure transformer on states
$\PArr{\State[\Sigma]} \State[\Sigma,\Sigma']$ (where
$\Sigma$ is a special case of $\Gamma$).
\begin{display}[.5]{Interpretation of Statements: $\impdt{C},\impdt{I}:\PArr{\State[\Sigma]} \State[\Sigma,\Sigma']$}
\clause{\impdt{\kw{nil}}\deq \Aarr \mathtt{id}}\\
\clause{\impdt{C_1;C_2}\deq \impdt{C_1} \Athen \impdt{C_2}}
\\[\GAP]
\clause{\impdt{l \assign c}\deq \Aarr \lambda s.\Padd l\ (s,c)}\\
\clause{\impdt{l \assign l'}\deq \Aarr \lambda s.\Padd l\ (s,\Plookup{l'}s)}\\
\clause{\impdt{l \assign l_1\otimes l_2}\deq \Aarr 
  \lambda s.\Padd l\ (s, {\otimes}(\Plookup{l_1}s, \Plookup{l_2}s)))}\\
\clause{\impdt{l \sample D(l_1,\ldots,l_n)}\deq \Aextend
  (\lambda s. \mu_{D(\Plookup{l_1}s,\ldots, \Plookup{l_n}s)})
  \Athen \Aarr(\Padd l)}\\
\clause{\impdt{\kw{observe}_b\ l}\deq 
  \Aconstrain \lambda s.\Plookup ls} \\
\clause{ \impdt{\kw{if}\ l\ \kw{then}\ C_1\ \kw{else}\ C_2 }\deq \Achoose(\lambda s.\Plookup ls)\ \impdt{C_1}\ \impdt{C_2} }\\
\clause{ \impdt{\Local{l \ty b}C} \deq \impdt{C} \Athen \Aarr (\Pdrop\ \Set l) }
\end{display}
\begin{lem}
If  $\Sigma \vdash C:\Sigma'$ then $\aqq{M} \in \PArr {\State[\Sigma]} {\State[\Sigma,\Sigma']}$.
\end{lem}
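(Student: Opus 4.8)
The plan is to prove, by induction on the derivation of $\Sigma \vdash C : \Sigma'$, that the \imp interpretation $\impdt{C}$ (the $\aqq M$ in the displayed statement is a slip for $\impdt C$) is a well-typed partial measure transformer in $\PArr{\State[\Sigma]}{\State[\Sigma,\Sigma']}$. Each \imp typing rule is matched by exactly one defining clause of $\impdt{\cdot}$, and every clause is assembled from the combinators $\Aarr$, $\Athen$, $\Achoose$, $\Aextend$, $\Aconstrain$ whose type signatures are fixed in Section~\ref{sec:coarrows}. So in each case the work is to instantiate the index types of the combinator with the environments named in the rule, appealing to the identifications $\tqq{\State[\Sigma]} = \tqq{\unitT * t_1 * \dots * t_n}$, $\State[\Sigma,\emptyEnv] = \State[\Sigma]$, and $\State[\Sigma,(\Sigma',\Sigma'')] = \State[\Sigma,\Sigma',\Sigma'']$.

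Concretely, I would proceed rule by rule. For \textsc{Imp Nil} the identity has type $\State[\Sigma]\to\State[\Sigma]$, so $\Aarr\,\mathtt{id} \in \PArr{\State[\Sigma]}{\State[\Sigma]}$, matching $\Sigma'=\emptyEnv$. For \textsc{Imp Assign} and the operator case, the lifted map $\lambda s.\Padd l\,(s,\dots)$ sends $\State[\Sigma]$ to $\State[\Sigma,l\tty b]$, so again $\Aarr$ gives the required type. For \textsc{Imp Random}, $\Aextend$ yields a transformer in $\PArr{\State[\Sigma]}{(\State[\Sigma]*b)}$, and post-composing $\Aarr(\Padd l)$ via $\Athen$ reshapes the codomain to $\State[\Sigma,l\tty b]$. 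For \textsc{Imp Observe} the lookup $\lambda s.\Plookup l s : \State[\Sigma]\to b$ feeds $\Aconstrain$, which stays in $\PArr{\State[\Sigma]}{\State[\Sigma]}$, consistent with $\Sigma'=\emptyEnv$. The inductive cases are \textsc{Imp Seq} (compose the two induction hypotheses with $\Athen$, using $\State[\Sigma,\Sigma']$ as the meeting type), \textsc{Imp If} (apply $\Achoose$ to the branch predicate and the two transformers, which the rule forces to share the same $\Sigma'$), and \textsc{Imp Local} (post-compose $\Aarr(\Pdrop\ \Set l)$ to delete $l$ from $\Sigma'$); the extended local rule then follows by iterating the single-variable case along the shuffle product.

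The genuinely non-trivial part is not this index bookkeeping but discharging the measurability side conditions under which the combinators are actually defined on these spaces. Each use of $\Aarr$ obliges me to check that the lifted state operation is a measurable function between the product measurable spaces $\tqq{\State[\cdot]}$; this holds because $\Padd l$, $\lambda s.\Plookup l s$, $\Pdrop X$, the projections, and each operator $\otimes$ are measurable (the discrete parts trivially, the real-valued operators by continuity). The most delicate case is \textsc{Imp Random}: to invoke $\Aextend$ I must verify its hypothesis that, for every $A\in\mAlg_b$, the map $s \mapsto \mu_{D(\Plookup{l_1}s,\dots,\Plookup{l_n}s)}(A)$ is measurable, non-negative, and bounded from above. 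Exactly as in the \fun semantics, this follows because each standard distribution $\mu_{D(V)}$ is a (sub-)probability measure, hence bounded by $1$ uniformly in its parameters, while those parameters depend measurably on $s$ through the lookups. Finally, since measure transformers are \emph{partial} functions, I need not establish totality: the states at which $\Aconstrain$ (equivalently, the conditional density) fails to be defined are precisely where $\impdt{C}$ is undefined, and the lemma asserts only the domain/codomain typing of this partial map.
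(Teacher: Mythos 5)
Your proposal is correct, but note that the paper itself states this lemma \emph{without} proof (it is presented as a routine consequence of the definitions, immediately after the clauses for $\impdt{C}$), so there is no official argument to compare against. Your structural induction on the derivation of $\Sigma \vdash C : \Sigma'$, matching each typing rule to the corresponding defining clause and instantiating the combinator signatures, is exactly the argument the authors evidently intended; you correctly identify $\aqq{M}$ as a slip for $\impdt{C}$, and you add value by explicitly discharging the side conditions the paper leaves implicit --- measurability of $\Padd$, $\Plookup$, $\Pdrop$ and the operators, the boundedness hypothesis of $\Aextend$ (via the distributions being (sub-)probability measures), and the observation that partiality of measure transformers means no totality claim is owed for $\Aconstrain$.
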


\begin{display}{Semantics of Extended Form of \kw{local}:}
\clause{\impdt{\Local\Sigma C} \deq \impdt{C} \Athen \Aarr (\Pdrop(\dom(\Sigma))) }
\end{display}

\end{FULL}%
\subsection{Translating from {\fun} to {\imp}}
The translation from \fun to \imp is a mostly routine compilation of functional code to imperative code.
The main point of interest is that \imp locations only hold values of base type, while \fun variables may hold tuples.
We rely on \emph{patterns} $p$ and \emph{layouts} $\rho$ to track the \imp locations corresponding to \fun environments.
\begin{display}[.15]{Notations for the Translation from {\fun} to {\imp}:}
\clause{ p ::= l\mid()\mid(p,p) }{pattern: group of \imp locations to represent \fun value}\\
\clause{ \rho ::= (x_i \mapsto p_i)^{i \in 1..n} }{layout: finite map from \fun variables to patterns}\\
\clause{ \Sigma \vdash p : t }{ in environment $\Sigma$, pattern $p$ represents \fun value of type $t$}\\
\clause{ \Sigma \vdash \rho : \Gamma }{in environment $\Sigma$, layout $\rho$ represents environment $\Gamma$}\\
\clause{ \rho \vdash M \Rightarrow C,p }{given $\rho$, expression $M$ translates to $C$ and pattern $p$}
\end{display}
\begin{FULL}%
\begin{display}[.35]{Typing Rules for Patterns $\Sigma \vdash p : t$ and Layouts $\Sigma \vdash \rho : \Gamma$:}
\ruleclause
\staterule{Pat Loc}{
  \begin{prog}
  \Sigma \vdash \diamond \\ (l:t) \in \Sigma
  \end{prog}
}{
\Sigma \vdash l : t
} \quad

\staterule{Pat Unit}{ \Sigma \vdash \diamond
}{
\Sigma \vdash () : \unitT
} \quad

\staterule{Pat Pair}{
\begin{prog}
\Sigma \vdash p_1 : t_1 \\ \Sigma \vdash p_2 : t_2
\end{prog}
}{
\Sigma \vdash (p_1,p_2) : t_1 * t_2
} \quad

\staterule{Layout}{
  \begin{prog}
    \locs(\rho)=\dom(\Sigma)  \\
    \Sigma \vdash \diamond \quad \dom(\rho)=\dom(\Gamma) \\
    \Sigma \vdash \rho(x) : t \quad \forall (x:t) \in \Gamma
  \end{prog}
}{
  \Sigma \vdash \rho : \Gamma 
}
\end{display}
The rule \ref{Pat Loc} represents values of base type by a single location.
The rules \ref{Pat Unit} and \ref{Pat Pair} represent products by a pattern for their corresponding components.
The rule \ref{Layout} asks that each entry in $\Gamma$ is assigned a pattern of suitable type by layout $\rho$.

The translation rules below depend on some additional notations.
We say $p \in \Sigma$ if every location in $p$ is in $\Sigma$.
Let $\locs(\rho) = \bigcup \{\fv(\rho(x)) \mid x \in \dom(\rho)\}$,
and let $\locs(C)$ be the environment listing the set of locations assigned by a command $C$.
\begin{display}[.35]{Rules for Translation: $p\sim p'$ and $p \assign p'$ and $p \vdash M \Rightarrow C,p$}
\> $\kw{()}\sim\kw{()}\qquad\qquad\qquad l\sim l'\qquad\qquad\qquad 
p_1\sim p_1'\land p_2\sim p_2'\implies(p_1,p_2)\sim(p_1',p_2')$\\[\GAP]
\> $\kw{()}\assign\kw{()}\deq\kw{nil}\qquad\qquad\qquad\qquad\qquad
(p_1,p_2)\assign(p_1',p_2')\deq p_1\assign p_1';p_2\assign p_2'$\\[\GAP]

\>\staterule{Trans Var}{
}{
\rho \vdash x \Rightarrow \kw{nil}, \rho(x)
} \quad

\staterule{Trans Const}{
c \neq \kw{()} \quad l \notin \locs(\rho)
}{
\rho \vdash c \Rightarrow (l \assign c), l
} \quad

\staterule{Trans Unit}{
}{
\rho \vdash \kw{()} \Rightarrow \kw{nil}, \kw{()}
} \\[\GAP]\>

\staterule{Trans Operator}{
\rho \vdash V_1 \Rightarrow C_1,l_1 \qquad \rho \vdash V_2 \Rightarrow C_2,l_2 \\
l \notin \locs(\rho) \cup \locs(C_1) \cup \locs(C_2) \qquad
\locs(C_1) \cap \locs(C_2) = \emptyset
}{
\rho \vdash V_1 \otimes V_2 \Rightarrow (C_1;C_2;l \assign l_1 \otimes l_2), l
}
 \\[\GAP]\>

\staterule{Trans Pair}{
\rho \vdash V_1 \Rightarrow C_1,p_1 \quad
\rho \vdash V_2 \Rightarrow C_2,p_2 \quad \locs(C_1) \cap \locs(C_2) = \emptyset
}{
\rho \vdash (V_1,V_2) \Rightarrow (C_1; C_2), (p_1,p_2)
} \\[\GAP]\>

\staterule{Trans Proj1}{
\rho \vdash V \Rightarrow C,(p_1,p_2)
}{
\rho \vdash V.1 \Rightarrow C,p_1
} \quad

\staterule{Trans Proj2}{
\rho \vdash V \Rightarrow C,(p_1,p_2)
}{
\rho \vdash V.2 \Rightarrow C,p_2
} \\[\GAP]\>

\staterule{Trans If}{
\begin{prog}
\rho \vdash V_1 \Rightarrow C_1,l \quad (\locs(\rho) \cup \locs(C_1) \cup \locs(C_2) \cup \locs(C_3)) \cap \fv(p) = \emptyset \\
\rho \vdash M_2 \Rightarrow C_2,p_2 \quad C'_2 = \Local{\locs(C_2)} (C_2;  p \assign p_2) \quad p_2 \sim p \\
\rho \vdash M_3 \Rightarrow C_3,p_3 \quad C'_3 = \Local{\locs(C_3)} (C_3;  p \assign p_3) \quad p_3 \sim p
\end{prog}
}{
\rho \vdash (\Lif{V_1}{M_2}{M_3}) \Rightarrow (C_1; \kw{if}\ l\ \kw{then}\ C'_2\ \kw{else}\ C'_3), p
} \\[\GAP]\>

\staterule{Trans Observe}{
\rho \vdash V \Rightarrow C,l \quad \mbox{$b$ is the type of $V$}
}{
\rho \vdash \kw{observe}\ V \Rightarrow (C;\kw{observe}_b\ l), \kw{()}
} \quad

\staterule{Trans Random}{
\rho \vdash V \Rightarrow C,p \quad l \notin \locs(\rho) \cup \locs(C)
}{
\rho \vdash \Sample{D(V)} \Rightarrow (C; l \sample D(p)), l
} \\[\GAP]\>

\staterule{Trans Let}{
\rho \vdash M_1 \Rightarrow C_1,p_1 \quad
x \notin \dom(\rho) \quad
\rho\{x\mapsto p_1\} \vdash M_2 \Rightarrow C_2,p_2 \quad
}{
\rho \vdash \kw{let}\ x = M_1\ \kw{in}\ M_2 \Rightarrow 
(\Local{(\locs(C_1)\setminus\fv(p_1))} C_1); C_2, p_2
}
\end{display}
In general, a \fun term $M$ translates under a layout $\rho$ to a series of commands $C$ and a pattern $p$.
The commands $C$ mutate the global store so that the locations in $p$ correspond to the value that $M$ returns.
The simplest example of this is in \ref{Trans Const}: the constant expression $c$ translates to an \imp program that writes $c$ into a fresh location $l$.
The pattern that represents this return value is $l$ itself.
The \ref{Trans Var} and \ref{Trans Unit} rules are similar.
In both rules, no commands are run.
For variables, we look up the pattern in the layout $\rho$; for unit, we return the unit location.
Translation of pairs \ref{Trans Pair} builds each of the constituent values and constructs a pair pattern.

More interesting are the projection operators.
Consider \ref{Trans Proj1}; the second projection is translated similarly by \ref{Trans Proj2}.
To find $V.1$, we run the commands to generate $V$, which we know must return a pair pattern $(p_1,p_2)$.
To extract the first element of this pair, we simply need to return $p_1$.
Not only would it not be easy to isolate and run only the commands to generate the values that go in $p_1$, it would be incorrect to do so.
For example, the \fun expressions constructing the second element of $V$ may observe values, and hence have non-local effects.

The translation for conditionals \ref{Trans If} is somewhat subtle.
First, we run the translated branch condition.
The return value of the translated branches is reassigned to a pattern $p$ of fresh locations:
using a shared output pattern allows us to avoid the $\phi$ nodes common in SSA compilers.
We use the \imp derived form where the local variables of the 
\kw{then} and \kw{else} branches of the conditional are restricted.
Instead, both branches write to a fresh shared target $p$,
in order to preserve well-typedness (Proposition~\ref{prop:fun-to-imp-static}).
\adg{We decided to let this slip, but strictly speaking we need to reconstruct the \fun types for these variables, perhaps by including types in the layout.}

The rule \ref{Trans Observe} translates \kw{observe} by running the commands to generate the value for $V$ and then observing the pattern.
(This pattern $l$ can only be a location, and not of the form $\kw{()}$ or $(p_1,p_2)$, as observations are only possible on values of base type.)

The rule \ref{Trans Random} translates random sampling in much the same way.
By $D(p)$, we mean the flattening of $p$ into a list of locations and passing it to the distribution constructor $D$.

Finally, the rule \ref{Trans Let} translates \kw{let} statements by running both expressions in sequence.
We translate $M_2$, the body of the let, with an extended layout, so that $C_2$ knows where to find the values written by $C_1$, in the pattern $p_1$.
Here the local variables of the let-bound expression are restricted using \kw{local}.
\begin{prop}\label{prop:fun-to-imp-static}
Suppose $\Gamma \vdash M :t$ and $\Sigma \vdash \rho : \Gamma$.
\begin{enumerate}
\item There are $C$ and $p$ such that $\rho \vdash M \Rightarrow C,p$.
\item Whenever $\rho \vdash M \Rightarrow C,p$, there is $\Sigma'$ such that $\Sigma \vdash C:\Sigma'$ and $\Sigma,\Sigma' \vdash p : t$.
\end{enumerate}
\end{prop}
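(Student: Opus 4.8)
The plan is to prove the two parts by induction, but in the reverse of the stated order: I would establish the type-preservation claim~(2) first, by induction on the derivation of $\rho \vdash M \Rightarrow C,p$, and then the existence claim~(1), by induction on the structure of $M$, invoking~(2) to pin down the shapes of the patterns produced for subterms. Throughout I would work under the standing SSA/freshness discipline of \imp, so that the locations assigned by sibling subcommands are pairwise distinct; the rules' side conditions (e.g. $\locs(C_1)\cap\locs(C_2)=\emptyset$ in \textsc{Trans Operator} and \textsc{Trans Pair}, and the freshness of $\fv(p)$ in \textsc{Trans If}) together with this convention supply exactly the disjointness needed to legitimize weakening when threading the growing store through sequences and branches. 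I would rely on a few routine auxiliary lemmas: inversion for pattern typing (a pattern of type $t_1*t_2$ has the form $(p_1,p_2)$, one of base type is a single location); weakening for \imp (if $\Sigma \vdash C:\Sigma'$ and the locations of $\Sigma''$ are fresh for $\Sigma,\Sigma'$ then $\Sigma,\Sigma'' \vdash C:\Sigma'$, and likewise for layouts); the fact that the output environment of a well-typed command lists exactly its assigned locations, so $\locs(C)$ coincides with $\Sigma'$ in $\Sigma \vdash C:\Sigma'$; and a typing lemma for the derived copy command, namely that $p \sim p'$ and $\Sigma \vdash p':t$ with $\fv(p)$ fresh imply $\Sigma \vdash (p \assign p'):\Sigma_p$ and $\Sigma,\Sigma_p \vdash p:t$.

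For claim~(2) I would proceed case by case on the last translation rule. In each case I invert $\Gamma \vdash M:t$ using the unique applicable \fun typing rule to recover the typings of the immediate subterms, apply the induction hypothesis to the subderivations (weakening $\Sigma \vdash \rho:\Gamma$ to the enlarged environments produced by earlier subcommands where necessary), and reassemble the conclusion with the matching \imp rules (\textsc{Imp Seq}, \textsc{Imp Assign}, \textsc{Imp Random}, \textsc{Imp Observe}, \textsc{Imp If}, \textsc{Imp Locals}). The base cases \textsc{Trans Var}, \textsc{Trans Const}, \textsc{Trans Unit} are immediate from the layout hypothesis and the pattern typing rules; \textsc{Trans Proj1} and \textsc{Trans Proj2} follow by pattern-typing inversion. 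The case \textsc{Trans Let} requires checking that the extended layout $\rho\{x\mapsto p_1\}$ types against $\Gamma,x:t_1$ in the environment obtained after restricting the non-$p_1$ locals of $C_1$; this uses layout weakening together with $\Sigma,\Sigma_1' \vdash p_1:t_1$ from the inner hypothesis.

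The main obstacle, and the case I would treat most carefully, is \textsc{Trans If}. Here each branch $M_i$ translates to $C_i$ with pattern $p_i \sim p$, and the branch command is $C_i' = \Local{\locs(C_i)}(C_i;\, p \assign p_i)$, where $p$ is a pattern of fresh locations shared by both branches. The crux is to compute that $C_i;\, p \assign p_i$ has output environment $\locs(C_i),\Sigma_p$ (by \textsc{Imp Seq}, the inner hypothesis giving $\Sigma \vdash C_i:\locs(C_i)$ with $\Sigma,\locs(C_i) \vdash p_i:t$, and the copy-command lemma, using $\locs(C_i)\cap \fv(p)=\emptyset$ from the side condition), so that wrapping it in $\Local{\locs(C_i)}$ via the shuffle-product rule \textsc{Imp Locals} restricts away exactly the branch-local bindings and leaves the \emph{same} residual environment $\Sigma_p$ for both branches. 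That common output is precisely what \textsc{Imp If} demands, yielding $\Sigma,\locs(C_1) \vdash (\kw{if}\ l\ \kw{then}\ C_2'\ \kw{else}\ C_3'):\Sigma_p$ and hence $\Sigma \vdash C:\Sigma_p$ with $\Sigma,\Sigma_p \vdash p:t$. Verifying that the restriction removes all and only the locals, and that the freshness condition $(\locs(\rho)\cup\locs(C_1)\cup\locs(C_2)\cup\locs(C_3))\cap \fv(p)=\emptyset$ makes both copy commands and the shared target well-typed, is the heart of the argument.

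For claim~(1) I would strengthen the induction hypothesis to discharge the freshness and disjointness side conditions uniformly: I prove that for every finite set $L$ of locations there are $C,p$ with $\rho \vdash M \Rightarrow C,p$ and $\locs(C)\cap L = \emptyset$. In the compound cases \textsc{Trans Operator}, \textsc{Trans Pair}, and \textsc{Trans If} I translate the subterms against successively enlarged avoidance sets ($L$, then $L \cup \locs(C_1)$, and so on), making the sibling commands assign disjoint locations, and I pick the remaining fresh single locations (and, for \textsc{Trans If}, the shared pattern $p$) outside the union of all these sets. The one place where existence needs more than fresh-name bookkeeping is the projection rules: to apply \textsc{Trans Proj1} or \textsc{Trans Proj2} I must know the subterm's pattern is a pair, which I obtain by applying the already-proved claim~(2) to the subterm's translation and inverting $\Sigma,\Sigma' \vdash p:t_1*t_2$; the same shape information tells me how to build the fresh shared pattern of correct shape in \textsc{Trans If}.
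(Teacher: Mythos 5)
Your proposal is correct, and its substance---the auxiliary lemmas, the case-by-case reasoning, and the identification of the conditional as the crux---coincides with the paper's proof; what differs is the organization of the induction. The paper proves (1) and (2) \emph{simultaneously}, by a single induction on the typing derivation $\Gamma \vdash M : t$ (leaving $\Sigma$ and $\rho$ general), using the part-(2) induction hypothesis inside the part-(1) argument exactly where you invoke your separately-proved claim (2): in the projection cases to learn that the subterm's pattern is a pair, and in the \textsc{Trans If} case to build the fresh shared target pattern (via lemmas of pattern-typing inversion/uniqueness, pattern creation, and pattern assignment, which match your auxiliary lemmas, as does heap/pattern weakening). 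Your staged organization---(2) by induction on the translation derivation, then (1) by structural induction---is equally valid since there is no circularity, and your strengthened existence hypothesis (translation avoiding any given finite set $L$ of locations) is a genuine improvement in rigor: the paper discharges the disjointness side conditions such as $\locs(C_1)\cap\locs(C_2)=\emptyset$ only by an informal appeal to globally fresh locations in its pattern-creation lemma. You are also more careful than the paper in the \textsc{Trans Let} case, where the exact-domain condition $\locs(\rho)=\dom(\Sigma)$ of the layout rule forces the extended layout to be typed against the environment obtained after restricting the non-$p_1$ locals of $C_1$. One small slip to fix: in the \textsc{Trans If} case the output environment of the whole translated command $C_1;\kw{if}\ l\ \kw{then}\ C_2'\ \kw{else}\ C_3'$ is not $\Sigma_p$ but $\locs(C_1),\Sigma_p$, since the condition's command $C_1$ may itself assign locations (e.g.\ when the guard is a constant); the paper accordingly takes $\Sigma' = \Sigma_1,\Sigma_f$ there, and your final weakening step should do the same.
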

  \begin{proof}
    By induction on the typing of $M$ (Appendix~\ref{sec:proof-prop}).\QED
  \end{proof}
We define operations $\metaF{lift}$ and $\metaF{restrict}$ to translate between
\fun variables ($\State[\Gamma]$) and \imp locations ($\State[\Sigma]$).
\[\begin{prog}
\metaF{lift}~\rho \deq \lambda s. \operatorname{flatten}\left\{ \rho(x) \mapsto \vqq{x}s \mid x \in \dom(\rho) \right\} \\
\metaF{restrict}~\rho \deq \lambda s. \left\{ x \mapsto \vqq{\rho(x)}s \mid x \in \dom(\rho) \right\}
\end{prog}\]
We let $\operatorname{flatten}$ take a mapping from patterns to values to a mapping from locations to base values.
Given these notations, we state that the 
compilation of \fun to \imp preserves the measure transformer semantics, 
modulo a pattern $p$ that indicates the locations of the various parts of the return value in the 
typing environment; 
an environment mapping $\rho$, which does the same translation for the initial typing environment;
and superfluous variables, removed by $\metaF{restrict}$.
\begin{thm}\label{thm:fun-to-imp-correct}
If $\Gamma \vdash M :t$ and $\Sigma \vdash \rho : \Gamma$ and $\rho \vdash M \Rightarrow C,p$ then:\\
$\aqq{M} = \Aarr (\metaF{lift}~\rho) \Athen \impdt{C} \Athen \Aarr (\lambda s.~(\metaF{restrict}~\rho~s, \vqq{p}~s)).$
\end{thm}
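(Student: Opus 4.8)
The plan is to argue by induction on the derivation of the translation judgement $\rho \vdash M \Rightarrow C,p$ (equivalently, on the typing derivation $\Gamma \vdash M : t$), keeping $\rho$, $\Gamma$ and $\Sigma$ universally quantified so that the induction hypothesis applies both to subterms and, crucially, to the \emph{extended} layouts that arise when translating binders. Proposition~\ref{prop:fun-to-imp-static} supplies, at each step, the well-typedness facts $\Sigma \vdash C : \Sigma'$ and $\Sigma,\Sigma' \vdash p : t$ together with the disjointness of the fresh locations $\Sigma'$ from $\locs(\rho) = \dom(\Sigma)$; these are exactly the freshness guarantees on which the algebra below relies.

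Before the case analysis I would record a small toolkit of algebraic facts. From $\Aarr f\ \mu\ A = \mu f^{-1}(A)$ and $T \Athen U = U \circ T$ one gets the functor law $\Aarr f \Athen \Aarr g = \Aarr (g \circ f)$ and associativity of $\Athen$, which let me fuse the leading $\Aarr(\metaF{lift}~\rho)$ and trailing $\Aarr(\lambda s.(\metaF{restrict}~\rho~s,\vqq{p}~s))$ with adjacent combinators and normalise each side into a single chain. I would then prove the basic bridging identities: $\metaF{restrict}~\rho \circ \metaF{lift}~\rho = \mathtt{id}$ on $\State[\Gamma]$; that on any \imp state agreeing with $\metaF{lift}~\rho$ on $\locs(\rho)$ we have $\vqq{\rho(x)}s = \vqq{x}(\metaF{restrict}~\rho~s)$; and a compatibility law stating that recording $x \mapsto p$ through the extended layout $\rho\{x\mapsto p\}$ corresponds on the \imp side to retaining the locations of $p$ and on the \fun side to $\Aarr(\Padd x)$. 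A convenient consequence, proved by a sub-induction over the value-translation rules \ref{Trans Var}, \ref{Trans Const}, \ref{Trans Unit}, \ref{Trans Pair}, is a value lemma: when $\rho \vdash V \Rightarrow C,p$ for a \fun value $V$, running $\impdt{C}$ after $\Aarr(\metaF{lift}~\rho)$ leaves the locations of $\rho$ untouched and populates $p$ so that $\vqq{p}~s = \vqq{V}(\metaF{restrict}~\rho~s)$ on the resulting state. This, with the functor law, disposes of the base cases and, applied to sub-values, of the deterministic expression cases \ref{Trans Operator}, \ref{Trans Proj1}, \ref{Trans Proj2}; the freshness side-conditions $l \notin \locs(\rho)\cup\locs(C_1)\cup\locs(C_2)$ and $\locs(C_1)\cap\locs(C_2)=\emptyset$ guarantee that a fresh assignment target never clobbers a location a later lookup depends on.

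For the effectful leaves I would feed the value lemma into the matching combinator. For \ref{Trans Random} both sides reduce, via the hypothesis on the argument value, to $\Aextend(\lambda s.\mu_{D(\cdots)}) \Athen \Aarr(\Padd l)$ with the distribution parameters read from the same state. For \ref{Trans Observe}, $\impdt{\kw{observe}_b\ l} = \Aconstrain(\lambda s.\Plookup{l}{s})$ is matched against $\Aconstrain(\lambda s.\vqq{V}s) \Athen \Aarr(\lambda s.(s,\Punit))$, using the value lemma to identify $\Plookup{l}{s}$ with $\vqq{V}$ of the underlying state and the unit-pattern bookkeeping to supply the trailing $\Aarr$.

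I expect the real work to be in the binder cases \ref{Trans Let} and \ref{Trans If}. For \ref{Trans Let} I would unfold $\aqq{\kw{let}\ x=M_1\ \kw{in}\ M_2} = \aqq{M_1}\Athen\Aarr(\Padd x)\Athen\aqq{M_2}\Athen\Aarr(\lambda(s,y).(\Pdrop{\Set x}\ s,y))$, apply the hypothesis to $M_1$ under $\rho$ and to $M_2$ under $\rho' = \rho\{x\mapsto p_1\}$, and then reconcile the intermediate state spaces: after $\impdt{C_1}$ the \imp store carries $\locs(\rho)$ together with the fresh $\locs(C_1)$, the $\kw{local}$ wrapper on $\locs(C_1)\setminus\fv(p_1)$ drops exactly those fresh locations not naming the result, leaving precisely the locations that $\rho'$ reads, and the compatibility law identifies this drop-then-extend pattern with $\Aarr(\Padd x)$ followed later by $\Aarr(\Pdrop\Set{x})$. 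The \ref{Trans If} case turns on the value lemma (so the guard location $l$ holds $\vqq{V_1}$ and the true/false partition of the measure used by $\Achoose$ coincides on both sides) and on the shared output pattern $p$: the $\kw{local}$ restriction of each branch's locals, combined with the reassignments $p \assign p_i$, ensures both branches deliver their result into the same locations, matching the common-return-type discipline of $\Achoose$ in $\aqq{\cdot}$. The main obstacle throughout is precisely this location/variable bookkeeping — proving that localisation of fresh \imp locations mirrors the discarding of bound \fun variables, and that the extended-layout induction hypothesis composes with the surrounding computation — which is why the freshness side-conditions of the translation rules and the typing guarantees of Proposition~\ref{prop:fun-to-imp-static} must be threaded through every manipulation.
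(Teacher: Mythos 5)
Your proposal follows essentially the same route as the paper's own proof: the paper likewise proceeds by induction on the typing of $M$, and its key auxiliary is exactly your value lemma (Lemma~\ref{lem:valequiv} in the appendix), which states that for a translated value the command denotes a pure $\Aarr f$ with $f$ the identity or a sequence of independent $\Padd$s, and which is then fed into the $\Aextend$, $\Aconstrain$, and $\Achoose$ combinators for the \ref{Fun Random}, \ref{Fun Observe}, and \ref{Fun If} cases, with the \ref{Fun Let} and \ref{Fun If} cases handled by the extended-layout induction hypothesis and the $\kw{local}$/$\Pdrop$ bookkeeping just as you describe. The algebraic toolkit you list (functor law for $\Aarr$, associativity of $\Athen$, the $\metaF{lift}$/$\metaF{restrict}$ identities) is used implicitly in the paper's calculations, so your plan is a faithful blueprint of the actual proof.
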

 \begin{proof}
    By induction on the typing of $M$\ (Appendix~\ref{sec:proof-thms}).\QED
  \end{proof}
\end{FULL}%
\begin{HIDE}
\subsection{Factor Graphs}

A factor graph~\cite{DBLP:journals/tit/KschischangFL01} represents a joint
probability distribution of a set of random variables as a collection of
multiplicative factors.  Factor graphs are an effective means of stating
conditional independence properties between variables, and enable efficient
algebraic inference
techniques~\cite{DBLP:conf/uai/Minka01,DBLP:journals/jmlr/WinnB05} as well
as sampling techniques~\cite[Chapter 12]{koller09:PGM}.  We use factor graphs
with \emph{gates}~\cite{DBLP:conf/nips/MinkaW08} for modelling if-then-else
clauses; gates introduce second-order edges in the graph.
\begin{display}[0.3]{Factor Graphs:}
\smallCategory{G}{\New{\ol{x:b}}\Set{e_1,\dots,e_m}}{graph (variables $\ol{x}$ distinct, bound in $e_1,\dots,e_m$)}\\
\clause{x,y,z,\dots}{nodes (random variables)}\\
\category{e}{edge}\\
\entry{\Fequal{x}{y}}{equality ($x=y$)}\\
\entry{\Fconstant{x}{c}}{constant ($x=c$)}\\
\entry{\Fop{x}{y}{z}}{binary operator ($x=y\otimes z$)}\\
\entry{\Fsample{x}D{y_1,\ldots,y_n}}{sampling ($x\sim D(y_1,\dots,y_n)$)}\\
\entry{\Fgate{x}{G_1}{G_2}}{gate (if $x$ then $G_1$ else $G_2$)}
\end{display}
In a graph $\New{\ol{x:b}}\Set{e_1,\dots,e_m}$, the variables $x_i$ are bound;
graphs are identified up to consistent renaming of bound variables.
We write $\Set{e_1,\dots,e_m}$ for $\New{\emptyEnv}\Set{e_1,\dots,e_m}$.
We write $\fv(G)$ for the variables occurring free in $G$.
\begin{FULL}%
  If $x\nin\Set{\ol{y}}$ we write 
  $\New{x:b'}\New{\ol{y:b}}\Set{e_1,\dots,e_m}$ for the graph $\New{x:b',\ol{y:b}} \Set{e_1,\dots,e_m}$.

As a first example, the coin flipping code in \fun from Section~\ref{sec:fun-syntax} corresponds to the following factor graph:
\begin{display}{Factor Graph for Coin Flipping Example:}
\clause{G_{\mathrm{F}} = \{
  \begin{prog}
    \Fconstant{p}{0.5}, \\
    \Fsample{x}{\kw{Bernoulli}}{p}, \\
    \Fsample{y}{\kw{Bernoulli}}{p}, \\
    \Fop[||]{z}{x}{y},\\
    \Fconstant{z}{\kw{true}} \}
  \end{prog}}
\end{display}
\end{FULL}%
For a second example, we give a factor graph $G_{\mathrm{E}}$
corresponding to the epidemology example of Section~\ref{sec:Epidemiology}
(where $B=\kw{Bernoulli}$).
\begin{display}{Factor Graph for Epidemiology Example:}
\clause{G_{\mathrm{E}}= \{
  \begin{prog}
    \Fconstant{p_d}{0.01},\Fsample{\kw{has_disease}}{B}{p_d}, \\
    \graphF{Gate} (\kw{has_disease}, \\
    \quad \New{p_p : \realT}\{\Fconstant{p_p}{0.8},\Fsample{\kw{positive_result}}{B}{p_p}\}, \\
    \quad \New{p_n : \realT}\{\Fconstant{p_n}{0.096},\Fsample{\kw{positive_result}}{B}{p_n}\}), \\
    \Fconstant{\kw{positive_result}}{\kw{true}}\}
  \end{prog}}
\end{display}
A factor graph typically denotes a probability distribution.
The probability (density) of an assignment of values to variables is
equal to the product of all the factors, averaged over all assignments
to local variables.
Here, we give a slightly more general semantics of factor graphs as
measure transformers; the input measure corresponds to a
prior factor over all variables that it mentions.
Below, we use the Iverson brackets, where $[p]$ is $1$ when $p$ is true and
$0$ otherwise.  We let $\delta(x=y)\deq\delta_0(x-y)$ when $x,y$
denote real numbers, and $[x=y]$ otherwise.  
\begin{display}[.3]{Semantics of Factor Graphs: 
    $\Pqq{G}\in \PArr{\State[\Sigma]}{\State[\Sigma,\Sigma']}$}
\clause{\Pqq{G}\ \mu\ A\deq 
  \int_A\left(\pqq{G}\ s\right)\;d(\mu\times\lambda)(s)}{}\\[\GAP]
\clause{\pqq{\New{\ol{x:b}}\Set{\ol{e}}}s\deq
  \int_{\Vals{*_i b_i}}\prod_{j}(\pqq{e_j}(s,\ol{x}))\,d\lambda(\ol{x})}
\\[\GAP]
\clause{\pqq{\Fequal{l}{l'}}s \deq \delta(\Plookup ls=\Plookup{l'}s)}\\
\clause{\pqq{\Fconstant{l}{c}}s \deq \delta(\Plookup ls=c)}\\
\clause{\pqq{\Fop{l}{w_1}{w_2}}s\deq\delta(\Plookup ls = {\otimes}(\Plookup{w_1}s,\Plookup{w_2}s))}\\
\clause{\pqq{\Fsample{l}D{v_1,\ldots,v_n}}s \deq \mu_{D(\Plookup{v_1}s,\dots,\Plookup{v_n}s)}(\Plookup{l}s)}\\
\clause{\pqq{\Fselect{l}v{y}}s \deq \prod_i{\delta(l=y_i)}^{[v=i]}}\\
\clause{\pqq{\Fgate{v}{G_1}{G_2}}s \deq 
  (\pqq{G_1}s)^{[\Plookup vs]}(\pqq{G_2}s)^{[\neg \Plookup vs]}}
\end{display}
\jb{checked the clause for new; --- changed indexing variable over edges}

\subsection{Factor Graph Semantics for {\imp}}
\label{sec:factor-graph}
An \imp statement has a straightforward semantics as a factor graph.
Here, observation is defined by the value of the variable being the constant $0_b$.
We require that all local variables are declared at top-level, 
i.e.~that in a composite statement $C_1;C_2$, neither $C_1$ nor $C_2$ are of the form
$\kw{local}\ l:b\ \kw{in}\ C$.  Every \imp program can be rewritten in this form 
by pulling local variable declarations to the front.  Moreover, if
 $\rho \vdash M \Rightarrow C,p$ then all local variables of $C$ are declared at top-level.
\jb{noted insertion of singleton sets, to turn everything into a graph}
\begin{display}[.3]{Factor Graph Semantics of \imp: $G = \impfg C$ and $\{e_1,...e_n\} = \impeg C$}
\clause{\impfg{ \Local{l \ty b} C } \deq \New{l \ty b} \impfg{C} }\\
\clause{\impfg{ C } \deq \New{\varepsilon} \impeg{C} }{if $C \neq \kw{local}\ l:b\ \kw{in} C'$. }\\[\GAP]
\clause{\impeg{\kw{nil}}\deq \Femp}\\
\clause{\impeg{C_1;C_2}\deq \impeg{C_1}\cup\impeg{C_2}}\\[\GAP]
\clause{\impeg{l \assign c}\deq \Set{\Fconstant{l}{c}}}\\
\clause{\impeg{l \assign l'}\deq \Set{\Fequal{l}{l'}}}\\
\clause{\impeg{l \assign l_1\ \otimes\ l_2}\deq \Set{\Fop{l}{l_1}{l_2}}}\\
\clause{\impeg{l \sample D(l_1,\ldots,l_n)}\deq \Set{\Fsample{l}D{l_1,\ldots,l_n}} }\\
\clause{\impeg{\kw{observe}_b\ l}\deq \Set{\Fconstrain{l}{b}} }\\
\clause{\impeg{\kw{if}\ l\ \kw{then}\ C_1\ \kw{else}\ C_2 }\deq \Set{\Fgate{l}{\impfg{C_1}}{\impfg{C_2}}}}
\end{display}
\begin{FULL}%
\begin{display}[.3]{Factor Graph Semantics of Derived Forms:}
\clause{\impfg{ \Local\Sigma C } \deq \New{\Sigma} \impfg{C} }\\
\clause{\impfg{\kw{if}\ l\ \kw{then}_{\Sigma_1}\ C_1\ \kw{else}_{\Sigma_2}\ C_2 }\deq
  \Set{\Fgate{l}{\New{\Sigma_1}\impfg{C_1}}{\New{\Sigma_2}\impfg{C_2}}}}
\end{display}
\end{FULL}%
The following theorem asserts that
the two semantics for \imp coincide
for compatible measures, which are defined as follows.
If $T:\PArr tu$ is a measure transformer composed from the combinators
of Section~\ref{sec:coarrows} and $\mu\in\DIST{t}$, we say that $T$ is \emph{compatible} with $\mu$ if
every application of $\Aconstrain f$ to some $\mu'$ in the evaluation of
$T(\mu)$ satisfies either that $f$ is
discrete or that $\mu$ has a continuous density on some $\varepsilon$-neighbourhood of $f^{-1}(0.0)$.  This restriction is needed to ensure that the probabilistic semantics of the factor graph is well-defined.

\begin{SHORT}
The statement of the theorem needs some additional notation.
If $\Sigma \vdash p : t$ and $s \in \State[\Sigma]$, we write $p~s$ for the reconstruction of an element of $\tqq{t}$  by looking up the locations of $p$ in the state $s$.
We define as follows operations $\metaF{lift}$ and $\metaF{restrict}$ to translate between
states consisting of \fun variables ($\State[\Gamma]$) and states consisting of \imp locations ($\State[\Sigma]$),
where $\operatorname{flatten}$ takes a mapping from patterns to values to a mapping from locations to base values.
\[\begin{prog}
\metaF{lift}~\rho \deq \lambda s. \operatorname{flatten}\left\{ \rho(x) \mapsto \vqq{x}s \mid x \in \dom(\rho) \right\} \\
\metaF{restrict}~\rho \deq \lambda s. \left\{ x \mapsto \vqq{\rho(x)}s \mid x \in \dom(\rho) \right\}
\end{prog}\]
\begin{thm}\label{thm:fun-to-imp-correct}
If $\Gamma \vdash M :t$ and $\Sigma \vdash \rho : \Gamma$ and $\rho \vdash M \Rightarrow C,p$
and measure $\mu\in\Dist$ is compatible with $\aqq{M}$ then there exists $\Sigma'$ such that $\Sigma \vdash C : \Sigma'$ and:
\\
$\aqq{M}\ \mu = (\Aarr (\metaF{lift}~\rho) \Athen \Pqq{\impfg{C}}\Athen \Aarr (\lambda s.~(\metaF{restrict}~\rho~s, p~s)))\ \mu$.
\end{thm}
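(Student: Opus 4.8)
The plan is to factor the statement through the measure-transformer semantics of \imp, using the compilation results already available for the first step and establishing the factor-graph correspondence for the second. First I would apply Proposition~\ref{prop:fun-to-imp-static}(2) to the given translation $\rho \vdash M \Rightarrow C,p$: it supplies the required $\Sigma'$ with $\Sigma \vdash C : \Sigma'$ and $\Sigma,\Sigma' \vdash p : t$, so in particular the reconstruction $\vqq{p}~s$ (written $p~s$ in the statement) is well-defined on states over $\Sigma,\Sigma'$. Next I would invoke the \fun-to-\imp correctness result proved above, which is the equality of measure transformers $\aqq{M} = \Aarr(\metaF{lift}~\rho) \Athen \impdt{C} \Athen \Aarr(\lambda s.~(\metaF{restrict}~\rho~s,\vqq{p}~s))$. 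Comparing this with the target equation, it remains only to show that $\impdt{C}$ and $\Pqq{\impfg{C}}$ agree on the measure at hand, that is, that the measure-transformer and factor-graph semantics of \imp coincide on compatible measures.

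The core lemma is therefore the following: if $\Sigma \vdash C : \Sigma'$ and $\mu'\in\Dist[\Sigma]$ is compatible with $\impdt{C}$, then $\impdt{C}~\mu' = \Pqq{\impfg{C}}~\mu'$. I would prove this by induction on the derivation of $\Sigma \vdash C : \Sigma'$, unfolding the definition $\Pqq{G}~\mu'~A = \int_A (\pqq{G}~s)\,d(\mu'\times\lambda)(s)$ together with the edge-wise product that defines $\pqq{\cdot}$. The cases $\kw{nil}$ and $C_1;C_2$ are structural, the latter using that the locally bound variables of $\impfg{C_1}$ and $\impfg{C_2}$ are disjoint so that the iterated integral factors (Tonelli) and matches $\impdt{C_1}\Athen\impdt{C_2}$. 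For the deterministic edges the argument is a routine computation: each of $\Fconstant{l}{c}$, $\Fequal{l}{l'}$ and $\Fop{l}{l_1}{l_2}$ contributes a Dirac factor $\delta(\cdot)$ whose integral over the freshly bound location collapses to exactly the corresponding $\Padd l$ performed by $\impdt{\cdot}$. The sampling edge $\Fsample{l}D{\ol{l}}$ contributes the density $\mu_{D(\dots)}(\Plookup ls)$, which integrates to the $\Aextend \Athen \Aarr(\Padd l)$ of the \imp semantics, and the gate $\Fgate{l}{\impfg{C_1}}{\impfg{C_2}}$ matches $\Achoose$ because its exponents $[\Plookup ls]$ and $[\neg\Plookup ls]$ split the integral into the two restrictions $\mu'\rvert_B$ and $\mu'\rvert_{B^c}$ that $\Achoose$ adds together.

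The main obstacle is the observation edge, which is exactly where the compatibility hypothesis earns its keep. Here $\impeg{\kw{observe}_b\ l} = \Set{\Fconstrain{l}{b}} = \Set{\Fconstant{l}{0_b}}$ contributes the factor $\delta(\Plookup ls = 0_b)$, and I must show that integrating this Dirac factor against the current measure reproduces $\Aconstrain (\lambda s.\Plookup ls)~\mu' = \given{\mu'}{f}{\cdot}{0_b}$, the unnormalized conditional density defined by the limit in Equation~(\ref{eq:2}). When $b$ is discrete, $\delta(\cdot = 0_b)$ is an ordinary indicator and the integral is plain filtering, matching the discrete identity $\given{\mu'}{f}{A}{0_b} = \mu'(A\cap f^{-1}(0_b))$. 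When $b$ is continuous, compatibility guarantees that $\mu'$ has a continuous density on an $\varepsilon$-neighbourhood of $f^{-1}(0.0)$, so the $\delta$-integral is well-defined and equals the density computed by the defining Equations~(\ref{eq:1}) and~(\ref{eq:3}). Verifying that the Dirac-delta integral and the limit-based construction of $\mathcal{D}$ coincide under precisely this regularity condition is the delicate measure-theoretic step, and it is the reason the result is stated for a fixed compatible $\mu$ rather than as an unconditional equality of transformers.

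Finally I would combine the pieces. Compatibility of $\aqq{M}$ with $\mu$ transfers, through the \fun-to-\imp equality, to compatibility of $\impdt{C}$ with $(\Aarr(\metaF{lift}~\rho))~\mu$, since every application of $\Aconstrain$ arising inside $\impdt{C}$ corresponds to one arising inside $\aqq{M}$. Hence the core lemma applies with $\mu' = (\Aarr(\metaF{lift}~\rho))~\mu$, allowing me to replace $\impdt{C}$ by $\Pqq{\impfg{C}}$ within the composite. Substituting this into the \fun-to-\imp equation, applying both sides to $\mu$, and recalling that $\vqq{p}~s = p~s$, yields exactly $\aqq{M}~\mu = (\Aarr(\metaF{lift}~\rho) \Athen \Pqq{\impfg{C}} \Athen \Aarr(\lambda s.~(\metaF{restrict}~\rho~s,p~s)))~\mu$, as required.
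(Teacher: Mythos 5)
Your proposal is correct and takes essentially the same route as the paper: the paper's own proof also factors through the direct measure-transformer semantics of \imp, combining the equality $\aqq{M} = \Aarr (\metaF{lift}~\rho) \Athen \impdt{C} \Athen \Aarr (\lambda s.~(\metaF{restrict}~\rho~s, \vqq{p}~s))$ (established by induction on $\Gamma \vdash M : t$, with Proposition~\ref{prop:fun-to-imp-static} supplying $\Sigma'$) with the coincidence $\impdt{C}\ \mu' = \Pqq{\impfg{C}}\ \mu'$ for measures $\mu'$ compatible with $\impdt{C}$ (established by induction on $\Sigma \vdash C : \Sigma'$). Your ``core lemma'' is precisely that second step, including its case analysis on the observation edge where compatibility is needed, so the decomposition matches the paper's.
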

\begin{proof}
  Via a direct measure transformer semantics for \imp. 
  The proof is by induction on the typing judgments 
  $\Gamma \vdash M :t$ and $\Sigma \vdash C:\Sigma'$.\QED
\end{proof}
\ADG{P1 an example to illustrate the theorem would be nice}
\end{SHORT}

\begin{FULL}%
  \begin{thm}\label{thm:imp-to-fg-correct}
    If $\Sigma\vdash C:\Sigma'$ and $\mu\in\Dist[\Sigma]$ is
    compatible with $\impdt{C}$ then $\impdt{C}\ \mu =
    \Pqq{\impfg{C}}\ \mu$.
  \end{thm}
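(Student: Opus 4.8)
The plan is to argue by structural induction on the typing derivation $\Sigma\vdash C:\Sigma'$, equivalently on the structure of the composite statement $C$, showing in each case that the combinator-built transformer $\impdt{C}$ and the factor-graph transformer $\Pqq{\impfg{C}}$ agree on a compatible $\mu$. Since $\Pqq{\impfg C}\ \mu\ A = \int_A (\pqq{\impfg C}\,s)\;d(\mu\times\lambda)(s)$, every case reduces to checking that integrating the graph density against $\mu\times\lambda$ reproduces the corresponding measure-transformer combinator, with the standard measure $\lambda$ ranging over exactly the freshly introduced locations $\Sigma'$.

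For the deterministic base cases ($\kw{nil}$ and the three assignments $l\assign c$, $l\assign l'$, $l\assign l_1\otimes l_2$), the compiled graph is a single edge whose density is a Dirac factor $\delta(\Plookup ls = E)$ pinning the fresh location $l$ to the value of the expression $E$. Integrating such a factor against the standard measure on $l$ (counting measure for discrete $b$, Lebesgue for real $b$) selects exactly the slice where $l=E$, which is precisely the effect of $\Aarr(\lambda s.\Padd l\ (s,E))$; the case $\kw{nil}$ is the empty product, reproducing $\Aarr\mathtt{id}$. For $l\sample D(l_1,\ldots,l_n)$ I would use that $\mu_{D(V)}$ is defined through its density $D(V)=\dot\mu_{D(V)}$ with respect to $\lambda$, so that integrating the sampling-factor density $\mu_{D(\ldots)}(\Plookup ls)$ over the fresh $l$ reproduces $\Aextend(\lambda s.\mu_{D(\ldots)})\Athen\Aarr(\Padd l)$ directly from the definition of $\Aextend$ as integration.

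The three compound cases follow from the induction hypotheses together with a structural lemma on graph composition. For sequencing, $\impeg{C_1;C_2}=\impeg{C_1}\cup\impeg{C_2}$, so the graph density factorises as a product of the two sub-densities; a Fubini argument — legitimate because we require all local declarations to be pulled to the top level, keeping the fresh variables $\Sigma'_1$ of $C_1$ cleanly separated from those of $C_2$ — yields $\Pqq{\impfg{C_1;C_2}} = \Pqq{\impfg{C_1}}\Athen\Pqq{\impfg{C_2}}$, matching $\impdt{C_1}\Athen\impdt{C_2}$. For $\Local{l \ty b}C$, the binder $\New{l:b}$ integrates $l$ out against $\lambda$, i.e.\ marginalises it, which is exactly the effect of $\Athen\Aarr(\Pdrop\ \Set l)$. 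For the conditional, the gate density $(\pqq{\impfg{C_1}}s)^{[\Plookup ls]}(\pqq{\impfg{C_2}}s)^{[\neg\Plookup ls]}$ selects $C_1$ on states where $l$ is $\kw{true}$ and $C_2$ elsewhere; restricting the integral to these two complementary regions and applying the induction hypotheses to each recovers $\Achoose(\lambda s.\Plookup ls)\ \impdt{C_1}\ \impdt{C_2}$, whose definition splits $\mu$ along the same predicate.

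I expect the genuinely delicate case to be observation, $\kw{observe}_b\ l$, which is where the compatibility hypothesis is indispensable. The compiled factor is $\Fconstrain l b = \Fconstant{l}{0_b}$ with density $\delta(\Plookup ls = 0_b)$, and no fresh variable is introduced, so I must show $\int_A \delta(\Plookup ls = 0_b)\;d\mu(s) = \given\mu{(\lambda s.\Plookup ls)}A{0_b} = \Aconstrain(\lambda s.\Plookup ls)\ \mu\ A$. In the discrete case this is immediate from the filtering characterisation $\given\mu p A c = \mu(A\cap\Set{s\mid p(s)=c})$. The continuous case is the crux: I would appeal to the defining relation~(\ref{eq:1}), $\mu(A\cap p^{-1}(B))=\int_B \given\mu p A x\;d\lambda(x)$, to identify the $\delta_0$-factor integral with the conditional density at $0_b$. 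This is precisely where compatibility is used — it guarantees that $\mu$ has a continuous density on an $\varepsilon$-neighbourhood of $p^{-1}(0.0)$, so that the limit construction~(\ref{eq:2}) yields a well-defined conditional density there and the Dirac factor computes it correctly rather than producing an ill-defined quantity. Reconciling the abstract limit-based definition of $\mathcal{D}\mu$ with the concrete $\delta$-factor integral, under exactly the compatibility condition, is the main obstacle and the site of most of the measure-theoretic work.
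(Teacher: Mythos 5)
Your proposal is correct and takes essentially the same route as the paper: the paper's proof of this theorem is precisely an induction on the typing derivation $\Sigma\vdash C:\Sigma'$, matching each \imp construct's combinator semantics against the integral of the corresponding factor-graph density, with the details relegated to an appendix. Your case analysis—deterministic factors as Dirac slices, Fubini for sequencing (using the top-level-locals restriction), marginalisation for $\kw{local}$, splitting along the gate predicate for conditionals, and in particular isolating $\kw{observe}_b\ l$ as the one place where compatibility is needed to identify the $\delta$-factor integral with the conditional density $\given\mu p A {0_b}$—supplies exactly the content the paper's one-line proof leaves implicit.
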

  \begin{proof}
    By induction on the typing of $C$
    (Appendix~\ref{sec:proof-integrals}).\QED
  \end{proof}
\end{FULL}%
\end{HIDE}

\begin{FULL}%
\section{Adding Arrays and Comprehensions}
\label{sec:arrays}
\adg{If we really need names for these, we could use \fun[] and \imp[]}
To be useful for machine learning, our language must support large datasets.
To this end, we extend \fun and \imp with arrays and comprehensions.
We offer three examples, after which we present the formal semantics,
which is based on unrolling.

\subsection{Comprehension Examples in \fun}
\label{sec:listcomps}

Earlier, we tried to estimate the skill levels of three competitors in head-to-head games.
Using comprehensions, we can model skill levels for an arbitrary number of players and games:
\begin{display}{TrueSkill:}
\>
\begin{lstlisting}
let trueskill (players:int[]) (results:(bool*int*int)[]) =
    let skills = [for p in players -> random (Gaussian(10.0,20.0))]
    for (w,p1,p2) in results do
        let perf1 = random (Gaussian(skills.[p1], 1.0))
        let perf2 = random (Gaussian(skills.[p2], 1.0))
        if w // win?
        then observe (perf1 > perf2) // first player won
        else observe (perf1 = perf2) // draw
    skills
\end{lstlisting}
\end{display}
First, we create a prior distribution for each player: 
we assume that skills are normally distributed around 10.0, with variance 20.0.
Then we look at each of the results---this is the comprehension.
The result of the head-to-head matches is an array of triples: a Boolean and two indexes.
If the Boolean is true, then the first index represents the winner and the second represents the loser.
If the Boolean is false, then the match was a draw between the two players.
The probabilistic program walks over the results, and observes that either 
the first player's performance---normally distributed around their skill level---was greater than the second's performance, or
that the two players' performances were equal.
Returning \kw{skills} after these observations allows us to inspect the posterior distributions.
Our original example can be modelled with $\kw{players} = [0;1;2]$ (IDs for Alice, Bob, and Cyd, respectively) and $\kw{results} = [(\kw{true},0,1);(\kw{true},1,2);(\kw{true},0,2)]$.

As another example, we can generalize the simple Bayesian classifier of Section~\ref{sec:coarrows} to arrays of categories and measurements, as follows: 
\mmg{This isn't {\em exactly} what runs, since the current interpreter requires some massaging: we need an Array.toList on measurements.}
\begin{display}{Bayesian Inference Over Arrays:}
\>
\begin{lstlisting}
let trainF (catIds:int[]) (trainData:(int*real)[]) fMean fVariance = 
   let priors = [for cid in catIds -> random (Gaussian(fMean,fVariance))]
   for (cid,m) in trainData do observe (m - random (Gaussian(priors.[cid],1.0)))
   priors
let catIds:int[] = (* ... *)
let trainingData:(int*real)[] = (* ... *)
\end{lstlisting}
\end{display}
The function \kw{trainF} is a probabilistic program for training a naive
Bayesian classifier on a single feature.
Each category of objects---modelled by the array \kw{catIds}---is given a normally distributed prior on the weight of objects in that category;
we store these in the \kw{priors} array.
Then, for each measurement \kw{m} of some object of category \kw{cid} in
the \kw{trainingData} array, we observe that \kw{m} is normally distributed
according to the prior for that category of object. 
We then return the posterior distributions, which have been
appropriately modified by the observed weights.
We can train using this model by issuing a command such as \kw{trainF catIds trainingData 20.0 5.0},
which runs inference to compute for each category its posterior distribution for this feature.

\begin{FULL}
As a third example, consider the adPredictor component of the Bing search engine, which estimates the click-through rates for particular users on advertisements \citep{Graepel2010.Bing}.
We describe a probabilistic program that models (a small part of) adPredictor.
Without loss of generality, we use only two features to make our prediction: the advertiser's listing and the phrase used for searching.
In the real system, many more (undisclosed) features are used for prediction.

\label{disp:adPredictor}
\begin{display}{adPredictor in {\fsharp}:}
\>
\begin{lstlisting}
let read_lines filename count line = (* ... *)
[<RegisterArray>]
let imps = (* ... *)
[<ReflectedDefinition>]
let probit b x = 
    let y = random (Gaussian(x,1.0))
    observe (b == (y > 0.0))
[<ReflectedDefinition>]
let ad_predictor (listings:int[]) (phrases:int[]) impressions =
    let lws = [for l in listings -> random (Gaussian(0.0,0.33))]
    let pws = [for p in phrases  -> random (Gaussian(0.0,0.33))]
    for (clicked,lid,pid) in Array.toList impressions do
        probit clicked (lws.[lid] + pws.[pid])
    lws,pws
\end{lstlisting}
\end{display}
The \kw{read_lines} function loads data from a file on disk.
The data are formatted as newline-separated records of comma-separated values.
There are three important values in each record:
a field that is 1 if the given impression lead to a click, and a 0 otherwise;
a field that is the database ID of the listing shown;
a field that is the part of the search phrase that led to the selection of the listing.
We preprocess the data in three ways, which are elided in the code above.
First, we convert the 1/0-valued Boolean to a \kw{true}/\kw{false}-valued Boolean.
Second, we normalize the listing IDs so that they begin at $0$, that is, so that we can use them as array indexes.
Third, we collect unique phrases and assign them fresh, $0$-based IDs.
We define \kw{imps}---a list of advertising impressions (a listing ID and a phrase ID) and whether or not the ad was clicked---in terms of this processed data.
The \kw{[<RegisterArray>]} attribute on the definition of \kw{imps} instructs the compiler to simply evaluate this \fsharp expression, yielding a deterministic constant.
Finally, \kw{ad_predictor} defines the model.
We use the \kw{[<ReflectedDefinition>]} attribute on \kw{ad_predictor} to mark it as a probabilistic program, which should be compiled and sent to Infer.NET.
Suppose we have stored the collated listing and phrase IDs in \kw{ls} and \kw{ps}, respectively;
we can train on the impressions by calling \kw{ad_predictor ls ps imps}.
\end{FULL}

\subsection{Formalizing Arrays and Comprehensions in \fun}

We introduce syntax for arrays in \fun,
and give interpretations of this extended syntax in terms of the core languages,
essentially by treating arrays as tuples and by unfolding iterations.
We work with non-empty zero-indexed arrays of statically known size
(representing, for example, statically known experimental data).

\jb{clarified that sizes of ranges are defined statically, and that arrays are non-empty zero-indexed}
There are three array operations: array literals, indexing, and array comprehension.  First, let $\Range$ be a set of {\em ranges} $r$.
Ranges allow us to differentiate arrays of different sizes.
Moreover, limitations in the implementation of Infer.NET disallow
nested iterations on the same range.
Here we disallow nested iterations altogether---they are not needed for
our examples and they would significantly complicate the formalization.
We assign sizes to ranges using the function $\abs\cdot : \Range \rightarrow \Int^{+}$.
In the metalanguage, arrays over range $r$ correspond to tuples of length $\abs r$.
\begin{display}[0.35]{Extended Syntax of \fun:}
\smallCategory{t}{\dots \mid {t[r]}}{type}\\
\extendCategory{M,N}{expression}\\
\entry{ [ V_1 ; \ldots ; V_n ] }{array literal} \\
\entry{ V_1 . [ V_2 ]_r }{indexing} \\
\entry{ [\Lforeach{x}{r}{V}{M}] }{comprehension}
\end{display}
First, we add arrays as a type: $t[r]$ is an array of elements of type $t$ over the range $r$.
In the array type $t[r]$, we require that the type $t$ contains no array type $t'[r']$, that is, we do not consider nested arrays.
Indexing, $V_1.[V_2]_r$, extracts elements out of an array, where the
index $V_2$ is computed modulo the size $|r|$ of the array $V_1$.
A comprehension $[\Lforeach{x}{r}{V}{M}]$ maps over an array $V$, 
producing a new array where each element is determined by evaluating $M$ with the corresponding element of array $V$ bound to $x$. 
To simplify the formalization, we here require that the body $M$ of the comprehension contains neither array literals nor comprehensions.
We attach the range to indexing and comprehensions so that the measure transformer
semantics can be given simply; the range can be inferred easily, and
need not be written by the programmer.  We elide the range in our code
examples.

We here do not distinguish comprehensions that produce values---like
the one that produces $\kw{skills}$---and those that do not---like the one that observes player performances according to $\kw{results}$.
For the sake of efficiency, our implementation does distinguish these two uses.
In some of the code examples, we write $\kw{for}\ x\ \kw{in}\ V\ \kw{do}\ M$ to mean $[\Lforeach{x}{r}{V}{M}]$.
We do so only when $M$ has type $\unitT$ and we intend to ignore the result of the expression.

We encode arrays as tuples.
For all $n>0$, we define $\pi_n(M,N)$ with $M : t^n$ and $N :
\kw{int}$ and if $N\%n=i$ we expect $\pi_n((V_0,\dots,V_{n-1}),N)=
V_i$.

\begin{display}[.7]{Derived Types and Expressions for Arrays in \fun:}
\clause{ \pi_1(M,N) := M }\\
\clause{ \pi_n(M, N) := \kw{if}\ N\kw{\%}n \kw{== 0 then}\ M.1\ \kw{else}\ \pi_{n-1}(M.2, N-1) }{for $n>1$}\\[\GAP]
\clause{t[r] := t^{|r|} \quad \mbox{where $t^1 := t$ and $t^{n+1} := t * t^n$}} \\[\GAP]
\clause{[V_0;...;V_{n-1}] := (V_0, \ldots ,V_{n-1})} \\
\clause{V_1[V_2]_r := \pi_{\abs{r}}(V_1,V_2)} \\
\clause{\Lforeach{x}{r}{V}{M} :=} \\
\clause{\qquad
  \kw{let}\ y_0 = (\kw{let}\ x=\pi_{|r|}(V,0)\ \kw{in}\ M)\ \kw{in}
} \\
\clause{\qquad\cdots
} \\
\clause{\qquad
  \kw{let}\ y_{|r|-1} = (\kw{let}\ x=\pi_{|r|}(V,|r|-1)\ \kw{in}\ M)\ \kw{in}
}\\
\clause{\qquad
  (y_0;\ldots;y_{|r|-1}) 
\quad\text{where $y_1, \ldots, y_{|r|}$ are fresh for $M$ and $V$.}
}
\end{display}
Our derived forms for arrays yield programs whose size grows linearly with the data over
which they compute---we implement $V[i]_r$ with $O(|r|)$ projections.
To avoid this problem, our implementation takes advantage of support
for arrays in the Infer.NET factor graph library (see Section~\ref{sec:arraysInImp}).

The static semantics of these new constructs is straightforward; we
give the derived rules for \ref{Fun Array}, \ref{Fun Index}, and
\ref{Fun For}.
By adding these as derived forms in \fun, we do not need to extend \imp at all.
On the other hand, our formalization does not reflect that our
implementation preserves the structure of array comprehensions when
going to Infer.NET.

\begin{display}[0.35]{Extended Typing Rules for {\fun} Expressions: $\Gamma \vdash M : t$}
\>\staterule{Fun Array}{
\Gamma \vdash V_i : t \quad \forall i \in 0..n-1
}{
\Gamma \vdash [ V_0; \ldots; V_{n-1} ] : t[r_n]
} \quad~~
\staterule{Fun Index}{
\Gamma \vdash V_1 : t[r] \quad \Gamma \vdash V_2 : \intT
}{
\Gamma \vdash V_1[V_2]_r : t
} \quad~~
\staterule{Fun For}{
\Gamma \vdash V : t[r] \quad \Gamma,x:t \vdash M : t'
}{
\Gamma \vdash [\Lforeach{x}{r}{V}{M}] : t'[r]
}
\end{display}

%
The rule \ref{Fun Array} uses the notation $r_n$ for the
\emph{concrete range} of size $n$; we assume there is a unique such
range for each $n>0$.
This rule can be derived using repeated applications of \ref{Fun Pair}.
The rule \ref{Fun Index} checks that the array $V_1$ is non-empty array and the index $V_2$ is an integer;
the actual index is the value of $V_2$ modulo the size of the array, as in the meta-language.
We can derive this rule for a given $n$ by induction on $n$, using
repeated applications of \ref{Fun If}; we use \ref{Fun Proj1} in the
$\kw{then}$ case and \ref{Fun Proj2} in the $\kw{else}$ case.
%
The rule \ref{Fun For} requires
that the source expression $V$ is an array, 
and that the body $M$ is well-typed assuming a suitable type for $x$.
We can derive \ref{Fun For} using repeated applications of \ref{Fun Let},
with \ref{Fun Pair} to type the final result.

\MySubsection{Arrays in \imp}\label{sec:arraysInImp}
We now sketch our structure-preserving implementation strategy.
We work in a version of \imp with arrays and iteration over ranges,
and we extend both the assignment form and expressions to permit array indexing.  
Inside the body of an iteration over a range, the name of the range can be used as an index.
\begin{display}[0.35]{Extended Syntax of \imp:}
\smallCategory{E}{\ldots\mid l[l'] \mid l[r] }{expression} \\
\extendCategory{I}{statement}\\
\entry{l[r] \assign E}{assignment to array item} \\
\entry{\Lfor{r}{C}}{iteration over ranges}
\end{display}
We require that every occurrence of an index $r$ is inside an iteration $\Lfor rC$.
Inside such an iteration, every assignment to an array variable must be at index $r$.
We also extend patterns to include range indexed locations, 
and write $(p_1,p_2)[r]$ for $(p_1[r],p_2[r])$.

Our compiler translates comprehensions over variables of array type as
an iteration over the translation of the body of the comprehension.
We add to $\rho$ the fact that the comprehension variable corresponds to
the array variable indexed by the range.
We invent a fresh array result pattern $p'$, 
and assign the result of the translated body to $p'[r]$.
Finally, we hide the local variables of the translation of the body of the comprehension,
in order to avoid clashes in the unrolling semantics of the loop.
This compilation corresponds to the rule \ref{Trans For} below.
In particular, the sizes of ranges are never needed in our compiler,
so compilation is not data dependent.
\begin{display}[0.35]{Compilation of comprehensions:}
\>\staterule{Trans For}{
\rho\{x\mapsto \rho(z)[r]\} \vdash M \Rightarrow C,p 
\qquad  p[r] \sim p' 
\qquad (\locs(\rho) \cup \locs(C)) \cap \fv(p') = \emptyset 
}{
\rho \vdash [\Lforeach{x}{r}{z}{M}] \Rightarrow \Lfor{r}{\Local{\locs(C)} (C;p'[r] \assign p)},p'
}
\end{display}
\end{FULL} 

\section{Implementation Experience}\label{sec:experience}

We implemented a compiler from \fun to \imp in \fsharp.
We wrote two backends for \imp:
an exact inference algorithm based on a direct implementation of measure transformers for discrete measures,
and an approximating inference algorithm for continuous measures, using Infer.NET~\cite{infer.net}.
The translation of Section~\ref{sec:semantics-fg} formalizes our translation of \fun to \imp. 
Translating \imp to Infer.NET is relatively straightforward, and amounts to a syntax-directed series of calls to Infer.NET's object-oriented API.

\begin{FULL}%
The frontend of our compiler takes (a subset of) actual \fsharp code as its input.
To do so, we make use of \fsharp's {\em reflected definitions}, which allow programmatic access to ASTs.
This implementation strategy is advantageous in several ways.
First, there is no need to design new syntax, or even write a parser.
Second, all inputs to our compiler are typed ASTs of well typed \fsharp programs.
Third, a single file can contain both ordinary \fsharp code as well as reflected definitions.
This allows a single module to both read and process data, and to specify a probabilistic model for inference from the data.

Functions computing array values containing deterministic data are tagged with an attribute \kw{RegisterArray}, to signal to the compiler that they do not need to be interpreted as \fun programs.
Reflected definitions later in the same file are typed with respect to these registered definitions and then run in Infer.NET with the pre-processed data;
we further discuss this idea below.
\end{FULL}%

Below follows some statistics on a few of the examples we have implemented.
The number of  lines of code includes \fsharp code that loads and processes data from disk 
before loading it into Infer.NET.
The times are based on an average of three runs.
All of the runs are on a four-core machine with 4GB of RAM.
The Naive Bayes program is the naive Bayesian classifier of the earlier examples.
The Mixture model is another clustering/classification model.
\begin{FULL}%
TrueSkill and adPredictor were described earlier.
\end{FULL}%
\begin{SHORT}
TrueSkill is a tournament ranking model, and adPredictor is a
simplified version of a model to predict the likelihood
that a display advertisment will be clicked.
\end{SHORT}
TrueSkill spends the majority of its time (64\%) in Infer.NET, performing inference.
AdPredictor spends most of the time in pre-processing (58\%), and only 40\% in inference.
The time spent in our compiler is negligible, never more than a few hundred milliseconds.
\begin{display}{Summary of our Basic Test Suite:}
\quad\begin{tabular}[\linewidth]{c|c|c|c|c}
& LOC & Observations & Variables & Time \\
\hline
Naive Bayes & 28 & 9 & 3 & $<$1s \\
Mixture & 33 & 3 & 3 & $<$1s \\
TrueSkill & 68 & 15,664 & 84 & 6s \\
adPredictor & 78 & 300,752 & 299,594 & 3m30s
\end{tabular}
\end{display}
In summary, our implementation strategy allowed us to build an effective prototype quickly and easily: 
the entire compiler is only 2079 lines of \fsharp; the Infer.NET backend is 600 lines; the discrete backend is 252 lines.
Our implementation, however, is only a prototype, and has limitations.
Our discrete backend is limited to small models using only finite measures.
Infer.NET supports only a limited set of operations on specific combinations of probabilistic and deterministic arguments.
It would be useful in the future to have an enhanced type system able to  detect errors arising from illegal combinations of operators in Infer.NET. 
The reflected definition facility is somewhat limited in \fsharp.
In the adPredictor example on page~\pageref{disp:adPredictor}, a call to \kw{Array.toList} is required because \fsharp does not reflect definitions that contain comprehensions over arrays---only lists.
(The \fsharp to \fun compiler discards this extra call as a no-op, so there is no runtime overhead.)

\section{Related Work}
\label{sec:related}

\adg{responded to reviewer A:
In part, I really liked the paper, but overall I felt that this work
suffered from ignorance of the extant literature on probabilistic programs.
The citations are to work coming from the machine learning community (which
as a whole seems to be unaware of many things in programming languages that
might be relevant to them; the dual also holds of course).
The semantics given is essentially a trivial exercise compared to what has
been known since the 1980s. In 1985 Dexter Kozen developed a
semantics for probabilistic programs including looping (later a simplified
presentation of this same semantics was given by Panangaden circa
1998, see the recent book Labelled Markov Processes); also in the late 1980s and
early 1990s the semantics of probabilistic and nondeterministic programs
based on weakest preconditions was developed by Karen Seidel, Annabelle
McIver, Carroll Morgan and others from PRG Oxford (the best source is the
recent book by McIver and Morgan); this semantics works in a way dual to
the forward semantics (the type you have in this paper). In POPL99,
Gupta, Jagadeesan and Panangaden gave a semantics for probabilistic
concurrent constraint programs also including recursion. This semantics also
rejects runs that fail the constraint. Your own language relies on constraints
in a crucial way and it seems clear to me that what you want is a
constraint programming language rather than having constraints tossed in on top of
a functional language. The measure theory used in this paper is very
elementary and hardly a technical advance; it is when you deal with
recursion that subtle and interesting things happen. The domain theory paper by
Saheb-Djaromi that you cite is just the start of a long and intricate story. I
can only suggest that you take a look at some of the sources cited.
}



\subsubsection*{Formal Semantics of Probabilistic Languages}
There is a long history of formal semantics for probabilistic languages with sampling primitives, often combined with recursive computation.
One of the first semantics is for Probabilistic LCF \citep{DBLP:conf/mfcs/Saheb-Djahromi78},
which augments the core functional language LCF with weighted binary choice, for discrete distributions.
\begin{FULL}(Apart from its inclusion of observations, \bfun is a first-order terminating form of Probabilistic LCF.)\,\end{FULL}
\citeT{Kozen}{DBLP:journals/jcss/Kozen81} develops a probabilistic semantics for while-programs augmented with random assignment.
He develops two provably equivalent semantics; one more operational, and the other a denotational semantics using partially ordered Banach spaces.
\imp is simpler than Kozen's language, as \imp has no unbounded while-statements, so the semantics of \imp need not deal with non-termination.
On the other hand, observations are not present in Kozen's language, although discrete observations can be encoded using possibly non-terminating while loops.

\citeT{Jones and Plotkin}{DBLP:conf/lics/JonesP89} investigate the probability monad, and apply it to languages with discrete probabilistic choice.
\citeT{Ramsey and Pfeffer}{DBLP:conf/popl/RamseyP02} 
give a stochastic $\lambda$-calculus with a measure-theoretic semantics in the probability monad, 
and provide an embedding within Haskell;
they do not consider observations.  
We can generalize the semantics of
\kw{observe} to the stochastic $\lambda$-calculus as filtering in the probability monad 
(yielding what we may call a sub-probability monad), 
as long as the events that are being
observed are discrete.
In their notation, we can augment their language with a failure construct defined by ${\cal P}\qq{ \kw{fail} }\rho = \mu_0$ where we define $\mu_0(A)=0$ for all measurable sets $A$.
Then, we can define $\kw{observe}\ v = (\kw{if}\ v=\kw{true}\ \kw{then}\ ()\ \kw{else}\ \kw{fail})$.
However, as discussed in Section~\ref{sec:discussion},
zero-probability observations of real variables do not translate easily to the probability monad, as the following example shows.
Let $N$ be an expression denoting a continuous distribution, for example, \ls$random (Gaussian(0.0,1.0))$, and let \ls$f x$ = \ls$observe x$.
Suppose there is a semantics for $\qq{\kw{f x}} \{\kw{x} \mapsto r\}$ for real $r$ in the probability monad.
The probability monad semantics of the program  \kw{let x =} $N$ \kw{in f x}
of the stochastic $\lambda$-calculus is
$\qq{N} \gg= \lambda y. \qq{\kw{f x}} \{\kw{x} \mapsto y\}$, 
which yields the measure
$\mu(A)= \int_{\Real} (\mathbf{M}\qq{\qq{\kw{f x}} \{\kw{x} \mapsto y\}})(A)\; d \mathbf{M}[N](y)$.
Here the probability $(\mathbf{M}\qq{\qq{\kw{f\ x}} \{x \mapsto y\}})(A)$ is zero except when $y=0$, 
where it is some real number.
Since the $N$-measure of $y=0$ is zero, the whole integral is zero for all $A$ (in particular $\mu(\Real)=0$),
whereas the intended semantics is that \ls$x$ is constrained to be zero with probability 1 
(so in particular $\mu(\Real)=1$).

The probabilistic concurrent  constraint programming language Probabilistic cc of \citeT{Gupta, Jagadeesan, and Panangaden}{DBLP:conf/popl/GuptaJP99} is also intended for describing probability
distributions using independent sampling and constraints.
Our use of observations loosely corresponds to constraints on random variables in Probabilistic cc.
In the finite case, Probabilistic cc also relies on a sampling semantics with observation (constraints) denoting filtering.
To admit continuous distributions, Probabilistic cc adds general fixpoints and defines the semantics of a program as the limit of finite unrollings of its fixpoints, if defined.
This can lead to surprising results, such as that the distribution resulting from observing that two apparently uniform distributions are equal may not itself be uniform.
In contrast, we work directly with standard distributions
and have a less syntactic semantics of observation that appears to be easier to anticipate.

\citeT{McIver and Morgan}{MM05:AbstractionRefinementProofForProbabilisticSystems}
develop a theory of abstraction and refinement for probabilistic while
programs, based on weakest preconditions. They reject a
subdistribution transformer semantics in order to admit demonic
nondeterminism in the language.

We conjecture that \fun and \imp could in principle be conferred semantics within a probabilistic language supporting general recursion,
by encoding discrete observations by placing the whole program within a conditional sampling loop,
and by encoding Gaussian and other continuous distributions as repeated sampling using recursive functions.
Still, dealing with recursion would be a non-trivial development, and would raise issues of computability.
Ackerman, Freer, and Roy \cite{DBLP:conf/lics/AckermanFR11} show the uncomputability of conditional distributions in general,
establishing limitations on constructive foundations of probabilistic programming.
We chose when formulating the semantics of \fun and \imp to include some distributions as primitive, and to exclude recursion;
compared to encodings within probabilistic languages with recursion, this choice has the advantage of compositionality (rather than relying on a global sampling loop)
and of admitting a direct (if sometimes approximate) implementation (via message-passing algorithms on factor graphs, with efficient implementations of primitive distributions).

\adg{React to referee comment: conjecture ... in principle". This language is understating the work it would take to lift this analysis.
I think it's funny to suggest the use of while loops to implement observation when earlier this same tact in other languages is not conflated with "having observations."
Also, as for the prospect of giving semantics using recursive sampling functions to implement continuous random variables, see Ackerman, Freer, Roy 2010 for inherent limitations to implementing observe continuously and computably in the presence of general recursion. In particular, this conjecture comes with it severe restrictions.}

\adg{Cut for now:
Several of these languages implement a limited form of observations as conditional sampling; this
entails using a loop around the sampling operation, a post-processing step, or a conditional sampling primitive.}

\adg{Since observations can happen at any time and of any expression, they
intuitively correspond to having the whole program inside a conditional
sampling loop.  In our setting observations are local, and we do not need a
fixpoint semantics to express them.
We wrote "Several of these languages implement a limited form of observations
as conditional sampling" in section 7.  
By "limited" we meant that there are some languages that only permit a
sample from a conditional distribution (Gaussian such that...), which seems to
be less general.  On the other hand, recursion-based conditional sampling 
appears to us likely to be more general than Fun.
Still, our semantics uses relatively elementary measure-theory and is
compositional, whereas a postulated derived semantics based on a language with
recursion requires a more sophisticated framework, and depends on a global
loop, so is less compositional.  Developing the latter semantics is
interesting future work, but the direct and relatively elementary semantics of
the present paper remains a new contribution.}

\begin{FULL}%
Recent work on semantics of probabilistic programs within interactive theorem provers
includes the mechanization of measure theory \cite{Hurd01} and Lebesgue integration \cite{MHT10:LebesgueIntegration} in HOL,
and a framework for proofs of randomized algorithms in Coq \cite{DBLP:journals/scp/AudebaudP09} which also allows for discrete observations.

\end{FULL}%

\subsubsection*{Probabilistic Languages for Machine Learning}
\citeT{Koller et al.}{DBLP:conf/aaai/KollerMP97} proposed representing a
probability distribution using first-order functional programs with discrete random choice,
and proposed an inference algorithm for Bayesian networks and stochastic context-free grammars.
Observations happen outside their language, by returning the distributions $\Prob{A\land B}, \Prob{A\land \neg B},\Prob{\neg A}$
which can be used to compute $\Prob{B\mid A}$.
Their work was subsequently developed by Pfeffer into the language IBAL~\cite{pfeffer07:ibal},
which has observations and uses a factor graph semantics, 
but only works with discrete datatypes.

\ADG{P1 Tom points out some of these works are libraries and some are languages, so our classification needs to be clearer.}

\citeT{Park et al.}{DBLP:conf/popl/ParkPT05} propose  $\lambda_\circ$,
the first probabilistic language with formal semantics applied to actual machine learning problems involving continuous distributions.
The formal basis is sampling functions,
which uniformly supports both discrete and continuous probability distributions,
and inference is by Monte Carlo importance sampling methods.
The calculus $\lambda_\circ$ enables conditional sampling via fixpoints and rejection, 
and its implementation allows discrete observations only.

\ADG{2013 the ref asks about discrete observations only: did you check?}
\JB{Yes, changed to talk about their implementation as described in the paper.}


%
%
HANSEI \citep{monolingual2009,KS09:EmbeddedProbabilisticProgramming} 
is an embedding of a probabilistic language as a programming library in OCaml,
based on explicit manipulation of discrete probability distributions as lists,
and sampling algorithms based on coroutines.
HANSEI uses an explicit \texttt{fail} statement, which is
equivalent to \ls{observe false} and so cannot be used for conditioning on
zero probability events.
Infer.NET \citep{infer.net} is a software library that implements the approximate deterministic algorithms
expectation propagation \cite{DBLP:conf/uai/Minka01} and variational message passing \cite{DBLP:journals/jmlr/WinnB05},
as well as Gibbs sampling, a nondeterministic algorithm.
Infer.NET models are written in a probabilistic subset of C\#, known as \csoft  \cite{Csoft:WM09}.
\csoft allows $\kw{observe}$ on zero probability events, 
but does not have a continuous semantics other than as factor graphs 
and is currently only implemented as an internal language of Infer.NET.
This paper gives a higher-level semantics of \csoft (or \imp) programs as distribution transformers.

Although there are many Bayesian modelling languages,
\csoft and IBAL are the only previous languages 
implemented by a compilation to factor graphs.
Probabilistic Scheme~\citep{Radul:2007:RPL:1297081.1297085} is a probabilistic form of the untyped functional language Scheme,
limited to discrete distributions, and with a construct for reifying the distribution induced by a thunk as a value.
Church \citep{DBLP:conf/uai/GoodmanMRBT08} is another probabilistic form of Scheme,
equipped with conditional sampling
and a mechanism of stochastic memoization.
In MIT-Church, queries are implemented using Markov chain Monte Carlo methods.
WinBUGS \citep{winbugs} is a popular implementation of the BUGS language \cite{GTS94:Bugs} 
for explicitly describing distributions suitable for MCMC analysis.

FACTORIE \citep{FACTORIE} is a Scala library for explicitly constructing factor graphs.
Blaise \cite{blaise} is a software library for building MCMC samplers in Java, 
that supports compositional construction of sophisticated probabilistic models,
and decouples the choice of inference algorithm from the specification of the distribution.

A recent paper \cite{modelLearner} based on \fun describes a model-learner pattern which captures common probabilistic programming patterns in machine learning,
including various sorts of mixture models.


%


\subsubsection*{Other Uses of Probabilistic Languages}
Probabilistic languages with formal semantics
find application in many areas apart from machine learning,
including databases~\citep{DBLP:journals/cacm/DalviRS09},
model checking~\citep{DBLP:journals/entcs/KwiatkowskaNP06},
differential privacy~\citep{DBLP:conf/sigmod/McSherry09,cdp},
information flow~\citep{DBLP:conf/csfw/Lowe02},
and cryptography~\citep{DBLP:journals/joc/AbadiR02}.
A recent monograph on semantics for labelled Markov processes~\cite{Panangaden09:LMP}
focuses on bisimulation-based equational reasoning.
The syntax and semantics of \imp is modelled on the probabilistic language pWhile \citep{DBLP:conf/popl/BartheGB09} without observations.
%

\citeT{Erwig and Kollmansberger}{DBLP:journals/jfp/ErwigK06} describe a library for probabilistic functional programming in Haskell.
The library is based on the probability monad, and uses a finite representation suitable for small discrete distributions;
the library would not suffice to provide a semantics for \fun or \imp with their continuous and hybrid distributions.
\begin{FULL}%
Their library has similar functionality to that provided by our combinators for discrete distributions listed in the technical report.
\end{FULL}%





\section{Conclusion}
\label{sec:conc}

\begin{FULL}%
We advocate probabilistic functional programming with observations and comprehensions as a modelling language for Bayesian reasoning.
We developed a system based on the idea,
invented new formal semantics to establish correctness,
and evaluated the system on a series of typical inference problems.
\end{FULL}%

Our direct contribution is a rigorous semantics for a probabilistic
programming language with zero-probability observations on continuous variables.
%
We have shown that probabilistic functional programs with iteration over arrays,
but without the complexities of general recursion,
are a concise representation for complex probability distributions arising in machine learning.
An implication of our work for the machine learning community
is that probabilistic programs can be written directly within an existing declarative language
(\fun---a subset of \fsharp), linked by comprehensions to large datasets,
and compiled down to lower level Bayesian inference engines.
%

For the programming language community, our new semantics suggests some novel directions
for research.
What other primitives are possible---non-generative models, inspection of
distributions, on-line inference on data streams?
Can we verify the transformations performed by machine learning compilers such as Infer.NET compiler for \csoft?
What is the role of type systems for such probabilistic languages?
Avoiding (discrete) zero probability exceptions, 
and ensuring that we only generate \csoft programs suitable for our back-end, 
are two possibilities, but we expect there are more.
\subsubsection*{Acknowledgements}
We gratefully acknowledge discussions with and comments from Ralf Herbrich, Oleg Kiselyov, Tom Minka, Aditya Nori, Robert Simmons, Nikhil Swamy, Dimitrios Vytiniotis and John Winn.
Chung-Chieh Shan highlighted an issue with our original definition of observation.
The comments by the anonymous reviewers were most helpful, 
in particular regarding the definition of conditional density.

\begin{FULL}%
\appendix\clearpage
\section{Detailed Proofs}\label{app:detailed-proofs}

Our proofs are structured as follows.
\begin{iteMize}{$\bullet$}
\item
Appendix~\ref{sec:proof-prop} gives a proof of Proposition~\ref{prop:fun-to-imp-static}.
\item
Appendix~\ref{sec:proof-thms} gives a proof of Theorem~\ref{thm:fun-to-imp-correct}.
\end{iteMize}

\subsection{Proof of Proposition~\ref{prop:fun-to-imp-static}}\label{sec:proof-prop}

We begin with a series of lemmas.

\begin{lem}[Pattern agreement weakening]
  \label{lem:weakenpat}
  If $\Sigma \vdash p : t$ and $\Sigma,\Sigma' \vdash \diamond$, 
  then $\Sigma,\Sigma' \vdash p : t$.
\end{lem}
  \begin{proof}
    By induction on $t$. \QED
  \end{proof}

\begin{lem}[Expression and statement heap weakening]
  \label{lem:weakenSEC}\hfill
  \begin{enumerate}[\em(1)]
  \item If $\Sigma \vdash E : b$ and $\Sigma,\Sigma' \vdash \diamond$, 
    then $\Sigma,\Sigma' \vdash E : b$

  \item If $\Sigma \vdash I : \Sigma'$ and $\Sigma,\Sigma',\Sigma'' \vdash \diamond$,
    then $\Sigma,\Sigma'' \vdash I : \Sigma'$

  \item If $\Sigma \vdash C : \Sigma'$ and $\Sigma,\Sigma',\Sigma'' \vdash \diamond$,
    then $\Sigma,\Sigma'' \vdash C : \Sigma'$.
  \end{enumerate}
\end{lem}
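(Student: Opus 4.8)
The plan is to prove all three parts together, establishing (1) first on its own and then (2) and (3) by a single simultaneous induction. Part (1) goes by induction on the derivation of $\Sigma \vdash E : b$, whose only cases are \ref{Imp Const}, \ref{Imp Loc}, and \ref{Imp Op}. In the first two the extra bindings $\Sigma'$ change nothing, since $\Sigma, \Sigma' \vdash \diamond$ is assumed and membership $(l\tty b) \in \Sigma$ is preserved under the inclusion $\Sigma \subseteq \Sigma, \Sigma'$; in the last case one applies the induction hypothesis to the two location premises and re-applies \ref{Imp Op}.

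For parts (2) and (3) I would induct simultaneously on the typing derivations of $I$ and $C$, appealing to part (1) for every expression or location premise. Part (3) is immediate for \ref{Imp Nil}, delegates to part (2) when $C$ is a single statement, and for \ref{Imp Seq} ($C = C_1;C_2$ with output $\Sigma'_1,\Sigma'_2$) I would split the hypothesis $\Sigma,\Sigma'_1,\Sigma'_2,\Sigma'' \vdash \diamond$ so that the induction hypothesis applies first to $C_1$ (weakening $\Sigma$ by $\Sigma''$, using $\Sigma,\Sigma'_1,\Sigma'' \vdash \diamond$) and then to $C_2$ (weakening $\Sigma,\Sigma'_1$ by $\Sigma''$, using the full hypothesis), before recombining with \ref{Imp Seq}. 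In part (2), \ref{Imp Observe} and the guard of \ref{Imp If} use part (1), while \ref{Imp If} uses the part-(3) induction hypothesis on both branches, which share the common output $\Sigma'$ mentioned in the hypothesis $\Sigma,\Sigma',\Sigma'' \vdash \diamond$.

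The freshness side conditions are the first place where the precise form of the hypothesis matters. For \ref{Imp Assign} the side condition is $l \notin \dom(\Sigma)$, and re-applying the rule with initial environment $\Sigma,\Sigma''$ requires the stronger $l \notin \dom(\Sigma,\Sigma'')$. Here I would use that $l \in \dom(\Sigma')$ together with $\Sigma,\Sigma',\Sigma'' \vdash \diamond$, which by the earlier well-formedness lemma gives $\dom(\Sigma') \cap \dom(\Sigma'') = \emptyset$ and hence $l \notin \dom(\Sigma'')$. This is exactly why the statement weakens by $\Sigma''$ past \emph{both} $\Sigma$ and the output $\Sigma'$, rather than merely asking $\Sigma''$ to be disjoint from $\Sigma$; the same observation handles \ref{Imp Random}.

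I expect the main obstacle to be the local-declaration case \ref{Imp Local}, together with the lower-level environment bookkeeping in \ref{Imp Seq}. In \ref{Imp Local}, $\Local{l \ty b}C$ is derived from $\Sigma \vdash C : \Sigma'_C$ with $(l \ty b) \in \Sigma'_C$ and output $\Sigma' = \Sigma'_C \setminus \{l \ty b\}$; applying the induction hypothesis to $C$ needs $\Sigma,\Sigma'_C,\Sigma'' \vdash \diamond$, which can fail if the bound location $l$ happens to occur in $\Sigma''$. I would resolve this by first $\alpha$-renaming $l$ to a fresh location outside $\dom(\Sigma) \cup \dom(\Sigma') \cup \dom(\Sigma'')$ (legitimate since \imp phrases are identified up to renaming of bound locations), after which the induction hypothesis applies and \ref{Imp Local} can be re-applied. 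For \ref{Imp Seq}, the only remaining care is that the two induction hypotheses deliver $C_2$ typed in $\Sigma,\Sigma'_1,\Sigma''$ whereas \ref{Imp Seq} wants it in $\Sigma,\Sigma'',\Sigma'_1$; this reordering is harmless because all typing rules depend on environments only through $\dom(\cdot)$ and membership, so well-formed environments may be read up to the order of their distinct bindings.
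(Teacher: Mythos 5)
Your proposal is correct and matches the paper's approach: the paper's entire proof of this lemma is the single line ``By induction on $E$, $I$, and $C$, respectively,'' which is exactly the structural/mutual induction you carry out in detail. Your additional observations---that the freshness side conditions of (\textsc{Imp Assign}) and (\textsc{Imp Random}) are recovered from the hypothesis $\Sigma,\Sigma',\Sigma''\vdash\diamond$ (and are the reason the lemma is stated with $\Sigma''$ disjoint from the output $\Sigma'$ as well), that (\textsc{Imp Local}) only goes through if statements are identified up to renaming of bound locations, and that (\textsc{Imp Seq}) needs environments to be read up to reordering---are precisely the details the paper's one-line proof leaves implicit.
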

  \begin{proof}
    By induction on $E$, $I$, and $C$, respectively. \QED
  \end{proof}

\begin{lem}[Pattern agreement uniqueness]
  \label{lem:patunique}
  If $\Sigma \vdash p : t$ and $\Sigma' \vdash p' : t$
  then $p \sim p'$.
\end{lem}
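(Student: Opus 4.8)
The plan is to proceed by induction on the structure of the type $t$, exploiting the fact that $t$ completely determines the top-level shape of any pattern inhabiting it. The crucial observation is that in an \imp environment every location maps to a \emph{base} type; consequently the rule \ref{Pat Loc} can only assign base types, \ref{Pat Unit} only assigns $\unitT$, and \ref{Pat Pair} only assigns product types. Hence for each shape of $t$ there is exactly one pattern-typing rule that could have concluded both $\Sigma \vdash p : t$ and $\Sigma' \vdash p' : t$, so the two derivations are forced into matching forms even though they range over different environments. Note that the relation $\sim$ is defined purely on the syntax of patterns and relates \emph{any} two locations, so the mismatch between $\Sigma$ and $\Sigma'$ is harmless.

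First I would dispatch the base cases. If $t = \unitT$, then neither \ref{Pat Loc} (which yields a base type) nor \ref{Pat Pair} (which yields a product type) can apply, so both derivations must end with \ref{Pat Unit}; thus $p = \kw{()} = p'$ and $\kw{()} \sim \kw{()}$ holds by definition. If $t = b$ is a base type, then \ref{Pat Unit} and \ref{Pat Pair} are excluded for the dual reasons, so both $p$ and $p'$ are single locations, say $p = l$ and $p' = l'$, and $l \sim l'$ holds by the defining clause for locations.

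For the inductive step, suppose $t = t_1 * t_2$. Then only \ref{Pat Pair} can conclude either judgment, so $p = (p_1, p_2)$ and $p' = (p_1', p_2')$ with $\Sigma \vdash p_i : t_i$ and $\Sigma' \vdash p_i' : t_i$ for $i \in \{1,2\}$. Applying the induction hypothesis componentwise gives $p_1 \sim p_1'$ and $p_2 \sim p_2'$, whence $(p_1, p_2) \sim (p_1', p_2')$ by the congruence clause defining $\sim$ on pairs.

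There is no genuine obstacle: the lemma is essentially the remark that the pattern-typing judgment is syntax-directed by the type. The only point requiring care is to record the reliance on locations carrying base types, since that is exactly what rules out \ref{Pat Loc} in the unit and product cases and makes the case analysis unambiguous.
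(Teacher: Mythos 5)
Your proof is correct and takes exactly the approach the paper uses: the paper's proof of this lemma is simply ``By induction on $t$,'' and your argument is a faithful elaboration of that induction, including the key observation that \imp environments map locations only to base types, which is what makes the case analysis on $t$ unambiguous.
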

  \begin{proof}
    By induction on $t$. \QED
  \end{proof}

\begin{lem}[Pattern creation]
  \label{lem:patcreate}
  If $\Sigma \vdash p : t$ then there exists $\Sigma'$ such that $\Sigma,\Sigma' \vdash \diamond$
  and $\Sigma' \vdash p' : t$ and $\dom(\Sigma') = \fv(p')$.
\end{lem}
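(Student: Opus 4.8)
The plan is to argue by induction on the structure of the type $t$, using the fact that the derivation of $\Sigma \vdash p : t$ forces the shape of $p$ to mirror $t$: a base type is matched only by \ref{Pat Loc}, the unit type only by \ref{Pat Unit}, and a product type only by \ref{Pat Pair}. In each case I construct a fresh copy $p'$ of $p$ together with an environment $\Sigma'$ that lists exactly the locations of $p'$, choosing these locations disjoint from $\dom(\Sigma)$. Note that $\Sigma \vdash \diamond$ is available throughout, since it is a premise (directly or inherited) of every pattern judgment.

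First I would dispatch the two base cases. When $t = \unitT$, necessarily $p = ()$, so I take $p' = ()$ and $\Sigma' = \emptyEnv$; then $\Sigma,\Sigma' = \Sigma \vdash \diamond$, the judgment $\emptyEnv \vdash () : \unitT$ holds by \ref{Pat Unit}, and $\dom(\emptyEnv) = \emptyset = \fv(())$. When $t$ is a base type $b$, necessarily $p$ is a single location, so I pick a fresh $l' \notin \dom(\Sigma)$ and take $p' = l'$ with $\Sigma' = \emptyEnv, l' \ty b$; the three required conclusions are then immediate from \ref{Pat Loc}.

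The inductive case $t = t_1 * t_2$ is where the one real subtlety lies, namely keeping the fresh locations of the two components disjoint. Here $p = (p_1, p_2)$ with $\Sigma \vdash p_i : t_i$. I would first apply the induction hypothesis to $p_1$, obtaining $\Sigma_1'$ and $p_1'$ with $\Sigma,\Sigma_1' \vdash \diamond$, $\Sigma_1' \vdash p_1' : t_1$, and $\dom(\Sigma_1') = \fv(p_1')$. To allocate a second block of locations that avoids $\Sigma_1'$, I would weaken $\Sigma \vdash p_2 : t_2$ to $\Sigma,\Sigma_1' \vdash p_2 : t_2$ via Lemma~\ref{lem:weakenpat}, and then apply the induction hypothesis to this weakened judgment, yielding $\Sigma_2'$ and $p_2'$ with $\Sigma,\Sigma_1',\Sigma_2' \vdash \diamond$, $\Sigma_2' \vdash p_2' : t_2$, and $\dom(\Sigma_2') = \fv(p_2')$. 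I then set $\Sigma' = \Sigma_1',\Sigma_2'$ and $p' = (p_1', p_2')$.

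It remains to verify the three conclusions. Well-formedness $\Sigma,\Sigma' \vdash \diamond$ is exactly the side condition delivered by the second use of the hypothesis. For typing, I would weaken $\Sigma_1' \vdash p_1' : t_1$ and $\Sigma_2' \vdash p_2' : t_2$ to the common environment $\Sigma_1',\Sigma_2'$ (again by Lemma~\ref{lem:weakenpat}, using that pattern typing is insensitive to the order of the environment, so a right-weakening followed by a permutation suffices) and conclude $\Sigma' \vdash (p_1', p_2') : t_1 * t_2$ by \ref{Pat Pair}. Finally the domain condition follows by computing $\dom(\Sigma') = \dom(\Sigma_1') \cup \dom(\Sigma_2') = \fv(p_1') \cup \fv(p_2') = \fv(p')$. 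The main obstacle is precisely the bookkeeping that keeps the two freshly allocated blocks of locations apart, which the argument handles by threading $\Sigma_1'$ into the environment before invoking the hypothesis on $p_2$; everything else is routine.
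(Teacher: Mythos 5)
Your proof is correct and follows exactly the route the paper takes: the paper's own proof is the one-line remark ``by induction on $t$, and the assumption that there always exist new, globally fresh locations,'' and your argument is precisely that induction carried out in detail, with the product case handled by allocating the two fresh blocks sequentially (threading $\Sigma_1'$ through a weakening before invoking the hypothesis on $p_2$). The only unstated step is the exchange/permutation property of pattern typing you appeal to when recombining via \ref{Pat Pair}, which is routine since pattern typing depends only on membership and distinctness of locations.
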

  \begin{proof}
    By induction on $t$, and the assumption that there always exist new,
    globally fresh locations. \QED
  \end{proof}

\begin{lem}[Pattern assignment]
  \label{lem:patassign}
  If $\Sigma \vdash p : t$ and $\Sigma' \vdash p' : t$ and
  $\Sigma,\Sigma' \vdash \diamond$, 
  then $\Sigma \vdash p' \assign p : \Sigma''$, where $\Sigma'' \subseteq \Sigma'$.
\end{lem}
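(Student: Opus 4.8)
The plan is to argue by induction on the structure of the common type $t$, using inversion on the two pattern-typing derivations $\Sigma \vdash p : t$ and $\Sigma' \vdash p' : t$ to pin down the shapes of $p$ and $p'$ in each case. Since $\Sigma \vdash p : t$ and $\Sigma' \vdash p' : t$, Lemma~\ref{lem:patunique} gives $p \sim p'$, so the two patterns have matching shape and the command $p' \assign p$ is defined by the clauses for $\assign$; each case of the induction then matches exactly one clause of that definition. I would strengthen the statement to also record that $\dom(\Sigma'') = \fv(p')$, a fact I need in the inductive step and which is preserved by all three cases.

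For the base cases, first take $t = \unitT$: since $\unitT$ is not a base type, neither \ref{Pat Loc} nor \ref{Pat Pair} applies, so $p = p' = \kw{()}$ and $p' \assign p = \kw{nil}$. By \ref{Imp Nil} we get $\Sigma \vdash \kw{nil} : \emptyEnv$ (using $\Sigma \vdash \diamond$, which follows from $\Sigma, \Sigma' \vdash \diamond$), and $\emptyEnv \subseteq \Sigma'$. For $t = b$ a base type, both patterns are single locations, say $p = l$ with $(l : b) \in \Sigma$ and $p' = l'$ with $(l' : b) \in \Sigma'$, so $p' \assign p$ is the \imp assignment $l' \assign l$. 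I would apply \ref{Imp Assign}: its premise $\Sigma \vdash l : b$ holds by \ref{Imp Loc}, and the side condition $l' \notin \dom(\Sigma)$ follows from $l' \in \dom(\Sigma')$ together with $\Sigma, \Sigma' \vdash \diamond$. This yields $\Sigma \vdash l' \assign l : (\emptyEnv, l' : b)$ with $(\emptyEnv, l' : b) \subseteq \Sigma'$, as required.

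The inductive case $t = t_1 * t_2$ is the heart of the argument. Here $p = (p_1, p_2)$ and $p' = (p_1', p_2')$ with $\Sigma \vdash p_i : t_i$ and $\Sigma' \vdash p_i' : t_i$, and $p' \assign p = (p_1' \assign p_1);(p_2' \assign p_2)$. The strengthened induction hypothesis on the first components gives $\Sigma \vdash p_1' \assign p_1 : \Sigma_1''$ with $\Sigma_1'' \subseteq \Sigma'$ and $\dom(\Sigma_1'') = \fv(p_1')$. To handle the second component I would weaken the source $p_2$ into the enlarged environment via Lemma~\ref{lem:weakenpat}, giving $\Sigma, \Sigma_1'' \vdash p_2 : t_2$, and then apply the induction hypothesis once more with source environment $\Sigma, \Sigma_1''$ and with the target $p_2'$ typed in the restriction of $\Sigma'$ to $\fv(p_2')$; this yields $\Sigma, \Sigma_1'' \vdash p_2' \assign p_2 : \Sigma_2''$ with $\Sigma_2'' \subseteq \Sigma'$. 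Composing the two halves with \ref{Imp Seq} gives $\Sigma \vdash p' \assign p : (\Sigma_1'', \Sigma_2'')$, and both summands are contained in $\Sigma'$.

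The one delicate point, and the main obstacle, is the freshness bookkeeping that makes the second inductive call and the concatenation $(\Sigma_1'', \Sigma_2'')$ legal: \ref{Imp Assign} forces every location of $p_2'$ to be fresh for $\Sigma, \Sigma_1''$, i.e.\ for $\dom(\Sigma) \cup \fv(p_1')$. Freshness with respect to $\dom(\Sigma)$ is immediate from $\Sigma, \Sigma' \vdash \diamond$, but freshness with respect to $\fv(p_1')$ amounts to the target pattern $p'$ using each location at most once. This distinctness is the invariant maintained by the single-assignment (SSA) discipline of \imp, since a target such as $(l,l)$ would make the output environment ill-formed, and it holds at every call site because targets are freshly generated, for instance by Lemma~\ref{lem:patcreate}, where $\dom(\Sigma') = \fv(p')$. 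Threading this invariant, together with $\dom(\Sigma_1'') = \fv(p_1')$, through the induction is the only real work; the remaining steps are direct appeals to the \imp typing rules and the weakening lemmas.
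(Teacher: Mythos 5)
Your proof is correct and follows the same overall route as the paper's: induction on $t$, with the unit case giving $\kw{nil}$, the base-type cases discharged by \ref{Imp Loc} and \ref{Imp Assign}, and the pair case by two uses of the induction hypothesis composed with \ref{Imp Seq}. The only genuine difference is the pair-case bookkeeping. The paper first \emph{factors} both $\Sigma$ and $\Sigma'$ into disjoint sub-environments typing the respective components ($\Sigma_1 \vdash p_1 : t_1$, $\Sigma_1' \vdash p_1' : t_1$, and so on), applies the induction hypotheses to those smaller environments, and then implicitly weakens (Lemma~\ref{lem:weakenSEC}) before sequencing; you instead keep the full environments, strengthen the induction hypothesis with $\dom(\Sigma'') = \fv(p')$, and thread the second call through the extended source $\Sigma,\Sigma_1''$ with the target environment restricted to $\fv(p_2')$ (this restriction step uses a small strengthening property of pattern typing that the paper never states, but it is immediate). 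These are two ways of organizing the same argument, and your version makes explicit what the paper's factoring step silently assumes: that the two components of the target pattern use disjoint locations. That assumption is genuinely needed, not just convenient --- for $p' = (l',l')$ with $\Sigma' = (l' : \boolT)$, which \ref{Pat Pair} does type at $\boolT * \boolT$, the command $(l' \assign l_1);(l' \assign l_2)$ is not typeable, since the second \ref{Imp Assign} violates its freshness side condition, so the lemma as literally stated fails for such degenerate targets. Both proofs therefore really establish the lemma only for targets with pairwise-distinct locations, which holds at the sole point of use (the \ref{Fun If} case of Proposition~\ref{prop:fun-to-imp-static}, where $p'$ is produced fresh by Lemma~\ref{lem:patcreate}). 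Your explicit identification and discharge of this invariant is, if anything, more careful than the paper's own proof.
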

  \begin{proof}
    By induction on $t$.

    \begin{iteMize}{$\bullet$}
    \item $(t=\unitT)$ Trivial: $p' \assign p = \kw{nil}$, so $\Sigma''
      = \emptyEnv \subseteq \Sigma'$.

    \item $(t=\boolT)$ $\Sigma \vdash l : \boolT$ and $\Sigma' \vdash l'
      : \boolT$, so $l:\boolT \in \Sigma$ and $l':\boolT \in \Sigma'$. So
      $l:\boolT \vdash l' \assign l : (l':\boolT) \subseteq \Sigma'$.

    \item $(t=\intT)$ Similar.
      
    \item $(t=\realT)$ Similar.
      
    \item $(t=t_1 * t_2)$ $\Sigma \vdash p_1,p_2 : t_1 * t_2$ and
      $\Sigma' \vdash p_1',p_2' : t_1 * t_2$.
      Both $\Sigma$ and $\Sigma'$ factor into contexts that type $p_1$
      and $p_2$ (resp. $p_1'$ and $p_2'$) individually; call them
      $\Sigma_1$ and $\Sigma_2$ (resp. $\Sigma_1'$ and $\Sigma_2'$).
      By the IHs, we have $\Sigma_1 \vdash p_1' \assign p_1 :
      \Sigma_1'' \subseteq \Sigma_1'$ and $\Sigma_2 \vdash p_2' \assign
      p_2 : \Sigma_2'' \subseteq \Sigma_2'$.
      We can then see $\Sigma \vdash p_1' \assign p_1;p_2' \assign p_2
      : \Sigma_1'',\Sigma_2'' \subseteq \Sigma_1',\Sigma_2'$.\qedhere
    \end{iteMize}
  \end{proof}

\noindent The purpose of this subsection is to prove the following.
\begin{restate}{Proposition~\ref{prop:fun-to-imp-static}}
  Suppose $\Gamma \vdash M : t$ and $\Sigma \vdash \rho : \Gamma$.
  \begin{enumerate}[\em(1)]
  \item\label{p1} There are $C$ and $p$ such that $\rho \vdash M \Rightarrow C,p$.
  \item\label{p2} Whenever $\rho \vdash M \Rightarrow C,p$, there is $\Sigma'$ such that $\Sigma \vdash C:\Sigma'$ and $\Sigma,\Sigma' \vdash p : t$.
  \end{enumerate}
\end{restate}
\begin{proof}
  By induction on the typing of $M$, leaving $\Sigma$ and $\rho$ general.
  \begin{enumerate}[\hbox to8 pt{\hfill}] 
  \item\noindent{\hskip-12 pt\ref{Fun Var}} $\Gamma \vdash x : t$. For (\ref{p1}), we have $C = \kw{nil}$
    and $p = \rho(x)$.
    For (\ref{p2}), let $\Sigma' = \emptyEnv$.  By assumption,
    $\Sigma,\Sigma' \vdash \rho(x) : t$ and $\Sigma \vdash \kw{nil} :
    \Sigma'$ immediately.
    
  \item\noindent{\hskip-12 pt\ref{Fun Const}} $\Gamma \vdash c : ty(c)$.  For (\ref{p1}), we have:
    \[\begin{array}{l}
      l \not\in \locs(\rho) \\
      ty(c) = b \text{ for some base type b} \\
      \rho \vdash c \Rightarrow l \assign c,l 
    \end{array}\]
    For (\ref{p2}), let $\Sigma' = l:ty(c)$.  We have $\Sigma,\Sigma'
    \vdash l : ty(c)$ and $\Sigma \vdash l \assign c : \Sigma'$.

  \item\noindent{\hskip-12 pt\ref{Fun Operator}} $\Gamma \vdash V_1 \otimes V_2 : b_3$, where $\otimes$ has type $b_1*b_2 \rightarrow b_3$.  By inversion and the IH:
    \[\begin{array}{lr} 
      \Gamma \vdash V_1 : b_1 & \\
      \rho \vdash V_1 \Rightarrow C_1,l_1 & (IH_1) \\
      \exists \Sigma_1 & (IH_2) \\
      \quad \Sigma,\Sigma_1 \vdash l_1 : b_1 &\\
      \quad \Sigma \vdash C_1 : \Sigma_1 &\\
      \Gamma \vdash V_2 : b_2 &\\
      \rho \vdash V_2 \Rightarrow C_2,l_2 & (IH_2) \\
      \exists \Sigma_2 & (IH_2) \\
      \quad \Sigma,\Sigma_2 \vdash l_2 : b_2 &\\
      \quad \Sigma \vdash C_2 : \Sigma_2 &
    \end{array}\]
    We have for (\ref{p1}), by \ref{Trans Operator}: $\rho \vdash V_1
    \otimes V_2 \Rightarrow C_1;C_2;l \assign l_1 \otimes l_2,l$.
    Let $\Sigma' = \Sigma_1,\Sigma_2,l:b_3 \vdash \diamond$.
    By weakening we find for (\ref{p2}): $\Sigma,\Sigma' \vdash l :
    b_3$ and $\Sigma \vdash C_1; C_2; l \assign l_1 \otimes l_2 :
    \Sigma'$.

  \item\noindent{\hskip-12 pt\ref{Fun Pair}} $\Gamma \vdash (M_1, M_2) : t_1 * t_2$. By inversion and the IH:
    \[\begin{array}{lr}
      \Gamma \vdash M_1 : t_1 & \\
      \rho \vdash M_1 \Rightarrow C_{M_1},p_1 & (IH_1)\\
      \exists \Sigma_1 & (IH_2) \\
      \quad \Sigma,\Sigma_1 \vdash p_1 : t_1 &\\
      \quad \Sigma \vdash C_{M_1} : \Sigma_1 &\\
      \Gamma \vdash M_2 : t_2 &\\
      \rho \vdash M_2 \Rightarrow C_{M_2},p_2 & (IH_1)\\
      \exists \Sigma_2 & (IH_2) \\
      \quad \Sigma,\Sigma_2 \vdash p_2 : t_2 &\\
      \quad \Sigma \vdash C_{M_2} : \Sigma_2 &
    \end{array}\]
    We have for (\ref{p1}):
    $\rho \vdash (M_1,M_2) \Rightarrow C_{M_1}; C_{M_2}, (p_1,p_2)$.
    Let $\Sigma' = \Sigma_1,\Sigma_2 \vdash \diamond$.
    By weakening we find for (\ref{p2}): $\Sigma,\Sigma' \vdash
    (p_1,p_2) : t_1 * t_2$ and $\Sigma \vdash C_{M_1}; C_{M_2} :
    \Sigma'$.

  \item\noindent{\hskip-12 pt\ref{Fun Proj1}} $\Gamma \vdash M.1 : t_1$. By inversion and the IH:
    \[\begin{array}{lr}
      \Gamma \vdash M : t_1 * t_2 &\\
      \rho \vdash M \Rightarrow C_M,p & (IH_1) \\
      \exists \Sigma' & (IH_2) \\
      \quad \Sigma,\Sigma' \vdash p : t_1 * t_2 &\\
      \quad \Sigma \vdash M : S' &
    \end{array}\]
    By inversion, $p = (p_1,p_2)$, such that $\Sigma,\Sigma' \vdash
    p_1 : t_1$ and $\Sigma,\Sigma' \vdash p_2 : t_2$.
    We now have $\rho \vdash M.1 \Rightarrow C_M, p_1$ for (\ref{p1}).
    We use $\Sigma'$ to show $\Sigma,\Sigma' \vdash p_1 : t_1$ and
    $\Sigma \vdash C_M : \Sigma'$ for (\ref{p2}).

  \item\noindent{\hskip-12 pt\ref{Fun Proj2}} $\Gamma \vdash M.2 : t_2$. Analogous to the previous case.

  \item\noindent{\hskip-12 pt\ref{Fun If}} $\Gamma \vdash \Lif{M_1}{M_2}{M3} : t$.  We have:
    \[\begin{array}{lr}
      \Gamma \vdash M_1 : \boolT &\\
      \rho \vdash M_1 \Rightarrow C_{M_1},p_1 & (IH_1) \\
      \exists \Sigma_1 & (IH_2) \\
      \quad \Sigma,\Sigma_1 \vdash p_1 : \boolT &\\
      \quad \Sigma \vdash C_{M_1} : \Sigma_1 &\\
      \Gamma \vdash M_2 : t &\\
      \rho\{x \mapsto p_l\} \vdash M_2 \Rightarrow C_{M_2}, p_2 & (IH_1) \\
      \exists \Sigma_2 & (IH_2) \\
      \quad \Sigma,\Sigma_2 \vdash p_2 : t &\\
      \quad \Sigma \vdash C_{M_2} : \Sigma_2 &\\
      \Gamma \vdash M_3 : t &\\
      \rho\{x \mapsto p_r \} \vdash M_3 \Rightarrow C_{M_3}, p_3 & (IH_1)\\
      \exists \Sigma_3 & (IH_2) \\
      \quad \Sigma,\Sigma_3 \vdash p_3 : t &\\
      \quad \Sigma \vdash C_{M_3} : \Sigma_3 &\\
    \end{array}\]
    By inversion, $p_1 = l$ and $\Sigma,\Sigma_1 \vdash l : \boolT$.
    By pattern agreement uniqueness (Lemma~\ref{lem:patunique}),
    $p_2 \sim p_3$.
    Let $\Sigma_{p'} \vdash p' : t$, for $\dom(\Sigma_{p'})=fv(p)$ (by
    Lemma~\ref{lem:patcreate}). 
    We have $(\locs(\rho) \cup \locs(C_1) \cup \locs(C_2) \cup
    \locs(C_3)) \cap fv(p) = \emptyset$.
    We also have $p' \sim p_2$ and $p' \sim p_3$.
    We now have for (\ref{p1}):
    \[\begin{array}{l}
      \rho \vdash \Lif{M_1}{M_2}{M_3} \Rightarrow \\
      C_{M_1}; 
      \Lif{l}{
        \Local{\locs(C_2)} C_{M_2}; [[p' \assign p_2]]}{
        \Local{\locs(C_3)} C_{M_3}; [[p' \assign p_3]]}, p' 
    \end{array}\]
    Finally, let $\Sigma_f = \Sigma_2 \cap \Sigma_3 \cap \Sigma_{p'}
    \vdash \diamond$ and $\Sigma' = \Sigma_1,\Sigma_f \vdash
    \diamond$.
    By pattern assignment, we can see $\Sigma_f \vdash [[p' \assign
    p_2]]$ and $\Sigma_f \vdash [[p' \assign p_3]]$.
    By weakening (Lemmas~\ref{lem:weakenpat}, and~\ref{lem:weakenSEC})
    we have what we need for (\ref{p2}):
    \[\begin{array}{l}
      \Sigma,\Sigma' \vdash p' : t \\
      \Sigma \vdash C_{M_1}; \Lif{l}{...}{...} : \Sigma'
     \end{array}\]

  \item\noindent{\hskip-12 pt\ref{Fun Let}} $\Gamma \vdash \kw{let}~x = M_1~\kw{in}~M_2 : t_2$. We have:
    \[\begin{array}{lr}
      \Gamma \vdash M_1 : t_1 &\\
      \rho \vdash M_1 \Rightarrow C_{M_1},p_1 &(IH_1) \\
      \exists \Sigma_1 & (IH_2) \\
      \quad \Sigma,\Sigma_1 \vdash p_1 : t_1 &\\
      \quad \Sigma \vdash C_{M_1} : \Sigma_1 &\\
      \Gamma,x:T_1 \vdash M_2 : t_2 &
    \end{array}\]
    Next, note that $\Sigma,\Sigma_1 \vdash \rho\{x \mapsto p_1\} :
    \Gamma,x:T_1$.
    We can now apply the IH to $M_2$'s typing derivation to see:
    \[\begin{array}{lr}
      \quad \rho\{x \mapsto p_1\} \vdash M_2 \Rightarrow C_{M_2},p_2 & (IH_1)\\
      \exists \Sigma_2 & (IH_2) \\
      \quad \Sigma,\Sigma_2 \vdash p_2 : t_2 &\\
      \quad \Sigma \vdash C_{M_2} : \Sigma_2 &
    \end{array}\]
    First, we have:
    $\rho \vdash \kw{let}~x = M_1~\kw{in}~M_2 \Rightarrow
    (\Local{(\locs(C_{M_1})\setminus\fv(p_1))} C_{M_1}); C_{M_2}, p_2$ for (\ref{p1}).
    For (\ref{p2}), let $\Sigma'_1=\Sigma_1|_{\fv(p_1)}$ and 
    $\Sigma' = \Sigma'_1,\Sigma_2 \vdash \diamond$.
    By weakening, we find $\Sigma,\Sigma' \vdash p_2 : t_2$ and
    $\Sigma \vdash     (\Local{(\locs(C_{M_1})\setminus\fv(p_1))} C_{M_1}); C_{M_2} : \Sigma'$.
    
  \item\noindent{\hskip-12 pt\ref{Fun Observe}} $\Gamma \vdash \kw{observe}_b~ E : \unitT$.  By the IH, with
    $\Sigma' = \emptyEnv$ from $IH_2$.

  \item\noindent{\hskip-12 pt\ref{Fun Random}} $\Gamma \vdash \kw{random} (D(V)) : b_{n+1}$.  We have:
    \[\begin{array}{l}
      D : (x_1:b_1 * ... * x_n:b_n) \rightarrow b_{n+1} \\
      \Gamma \vdash V : (b_1*...*b_n)
    \end{array}\]
    We have, by the IH:
    \[\begin{array}{lr}
      \rho \vdash V \Rightarrow C,p & (IH_1)\\
      \exists \Sigma' & (IH_2) \\
      \quad \Sigma,\Sigma' \vdash p : t & (*) \\
      \quad \Sigma \vdash C : \Sigma' &
    \end{array}\]
    So $\rho \vdash \kw{random} (D(V)) \Rightarrow C;l \sample D(p),l$, for
    (\ref{p1}).
    We find (\ref{p2}) by (*) and by (Imp Seq), (Imp Random), and the IH
    $\Sigma \vdash C;l : \Sigma',l$, where $\Sigma',l \vdash l :
    b_{n+1}$.\qedhere    
  \end{enumerate}
\end{proof}

\subsection{Proof of Theorem~\ref{thm:fun-to-imp-correct}}\label{sec:proof-thms}

We use the following lemma.
\begin{lem}[Value equivalence]
  \label{lem:valequiv}
  If $\Gamma \vdash V : t$ and $\Sigma \vdash \rho : \Gamma$ and $\rho
  \vdash V \Rightarrow C,p$ 
  then $\impdt{C} = \Aarr f$, where $f$ is either $id$ or a series of
  (independent) calls to $\Padd$:
  \[f = \lambda s.~\Padd l_1 (\Padd l_2 (... (\Padd l_n (s,c_n))
  ...,c_2) ,c_1) \] 
  where each of the $l_i$ are distinct, and \[\aqq{V} = \Aarr
  (\metaF{lift}~\rho) \Athen \impdt{C} \Athen \Aarr (\lambda s.~
  \metaF{restrict}~\rho~ s, \vqq p s)\]
\end{lem}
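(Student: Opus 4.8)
The plan is to prove both conclusions simultaneously by induction on the derivation of $\rho \vdash V \Rightarrow C,p$. Since $V$ is a value, this derivation can only end in one of \ref{Trans Var}, \ref{Trans Unit}, \ref{Trans Const}, or \ref{Trans Pair}. The structural fact I would lean on throughout is the functoriality of $\Aarr$: unwinding the definitions $T \Athen U \deq U \circ T$ and $(\Aarr f)\,\mu\,A \deq \mu f^{-1}(A)$ gives $\Aarr f \Athen \Aarr g = \Aarr (g \circ f)$, and the \imp semantics supplies $\impdt{\kw{nil}} = \Aarr \mathtt{id}$ and $\impdt{C_1;C_2} = \impdt{C_1} \Athen \impdt{C_2}$. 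All the state-transforming functions in play are measurable, so $\Aarr$ applies to each.

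For the first conclusion (the shape of $\impdt C$) I would read $\impdt C$ off in each case. In \ref{Trans Var} and \ref{Trans Unit} we have $C = \kw{nil}$, so $\impdt C = \Aarr \mathtt{id}$ and $f = \mathtt{id}$. In \ref{Trans Const} we have $C = (l \assign c)$ with $l$ fresh, so $\impdt C = \Aarr(\lambda s.\Padd l\,(s,c))$, a single constant $\Padd$. In \ref{Trans Pair}, $C = (C_1;C_2)$ and the induction hypotheses give $\impdt{C_1} = \Aarr f_1$ and $\impdt{C_2} = \Aarr f_2$ with $f_1,f_2$ each a sequence of constant $\Padd$'s; by functoriality $\impdt C = \Aarr(f_2 \circ f_1)$, again such a sequence. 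The point is that the only assignments a value can generate are the constant assignments of \ref{Trans Const}, never $l \assign l'$ or $l \assign l_1 \otimes l_2$, which is exactly why every $\Padd$ adds a constant. Distinctness of the $l_i$ follows from the freshness condition $l \notin \locs(\rho)$ of \ref{Trans Const} together with $\locs(C_1) \cap \locs(C_2) = \emptyset$ in \ref{Trans Pair}.

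For the second conclusion I would first invoke the first to collapse the right-hand side: functoriality rewrites the composite as a single $\Aarr g$ with $g = (\lambda s.(\metaF{restrict}~\rho~s,\vqq{p} s)) \circ f \circ (\metaF{lift}~\rho)$. Since $\aqq V = \Aarr(\lambda s.(s,\vqq{V} s))$, it then suffices to establish the pointwise equality $g(s) = (s,\vqq{V} s)$, which splits into (a) $\metaF{restrict}~\rho\,(f(\metaF{lift}~\rho~s)) = s$ and (b) $\vqq{p}\,(f(\metaF{lift}~\rho~s)) = \vqq{V} s$. Both follow by the same induction. Fact (a) holds because $f$ writes only to the fresh locations $\locs(C)$, which are disjoint from $\locs(\rho)$, so each $\rho(x)$ is reconstructed unchanged and the flatten/reconstruct round-trip returns $s(x)$. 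Fact (b) is immediate in the leaf cases.

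The framing needed in the pair case is the step I expect to be the main obstacle, and I would isolate it as a small auxiliary observation: if a sequence of constant $\Padd$'s touches only locations in a set $L$ and $\fv(p') \cap L = \emptyset$, then evaluating $p'$ is insensitive to it, and two such sequences over disjoint location sets add their constants independently. Applied in \ref{Trans Pair} with $f = f_2 \circ f_1$, the inclusion $\fv(p_1) \subseteq \locs(\rho) \cup \locs(C_1)$ is disjoint from $\locs(C_2)$, giving $\vqq{p_1}(f_2(f_1(\cdots))) = \vqq{p_1}(f_1(\cdots)) = \vqq{V_1} s$, while independence of the constant writes gives $\vqq{p_2}(f_2(f_1(\cdots))) = \vqq{p_2}(f_2(\cdots)) = \vqq{V_2} s$. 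Combining yields $\vqq{(p_1,p_2)}(f(\metaF{lift}~\rho~s)) = (\vqq{V_1}s,\vqq{V_2}s) = \vqq{(V_1,V_2)} s$. With (a) and (b) in hand, $g(s) = (s,\vqq{V} s)$, hence $\Aarr g = \aqq V$, which is the second conclusion and, as a special case, the value case of Theorem~\ref{thm:fun-to-imp-correct}.
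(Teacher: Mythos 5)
Your proof is correct and takes essentially the same route as the paper's: induction over the value cases (Var/Unit/Const/Pair), reading off $\impdt{C}$ as $\Aarr$ applied to a composition of constant $\Padd$ writes via functoriality of $\Aarr$, and then reducing the measure-transformer equation to pointwise facts about the lift/restrict round-trip and pattern evaluation. The framing observation you isolate for the pair case is precisely the step the paper's derivation invokes as ``weakening/independence,'' so your write-up simply makes that implicit appeal explicit.
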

  \proof
    By induction on the derivation of $\Gamma \vdash V : t$.
    \begin{enumerate}[\hbox to8 pt{\hfill}] 
      
    \item\noindent{\hskip-12 pt\ref{Fun Var}} $\Gamma \vdash x : t$, so $x:t \in \Gamma$ and $\Sigma
      \vdash \rho(x) : t$.
      We have $\rho \vdash x \Rightarrow \kw{nil},\rho(x)$, so $f =
      id$.
      \[\begin{array}{rl}
         & \aqq{x} \\
        = & \Aarr (\lambda s.~ (s, \vqq{x} s)) \\
        = & \Aarr (\lambda s.~ (s, \Plookup{x} s)) \\
        = & \Aarr (\lambda s.~ 
              (\metaF{restrict}~ \rho (\metaF{lift}~ \rho), 
               \vqq p (\metaF{lift}~ \rho~ s))) \\
        = & \metaF{lift}~ \rho \Athen (\lambda s.~ (\metaF{restrict}~ \rho~ s, \vqq p s)) \\
        = & \metaF{lift}~ \rho \Athen \Aarr id \Athen 
              (\lambda s.~ (\metaF{restrict}~ \rho~ s, \vqq p s)) \\
        = & \metaF{lift}~ \rho \Athen \aqq{x} \Athen
              (\lambda s.~ (\metaF{restrict}~ \rho~ s, \vqq p s))
        \end{array}\]
        
      \item\noindent{\hskip-12 pt\ref{Fun Const}} $\Gamma \vdash c : ty(c)$.  We have $\rho \vdash c
        \Rightarrow l \assign c, l$, so $f = \lambda s.~ \Padd{l}~ (s,c)$.
        \[\begin{array}{rl}
           & \aqq{c} \\
          = & \Aarr (\lambda s.~ s, c) \\
          = & \Aarr (\lambda s.~ 
                \metaF{restrict}~ \rho (\metaF{lift}~ \rho~ s), 
                \vqq l (\Padd{l}~ (\metaF{lift}~ \rho~ s,c))) \\
          = & \Aarr (\metaF{lift}~ \rho) \Athen \Aarr 
                (\lambda s.~ \metaF{restrict}~ \rho~ s, \vqq l (\Padd{l}~ (s,c))) \\
          = & \Aarr (\metaF{lift}~ \rho) \Athen \Aarr (\lambda s.~ \Padd{l}~ (s,c)) \Athen 
                \Aarr (\lambda s.~ \metaF{restrict}~ \rho~ s, \vqq l s) \\
          = & \Aarr (\metaF{lift}~ \rho) \Athen \impdt{l \assign c} \Athen 
                \Aarr (\lambda s.~ \metaF{restrict}~ \rho~ s, \vqq l s)
        \end{array}\]
        
      \item\noindent{\hskip-12 pt\ref{Fun Pair}} $\Gamma \vdash (V_1, V_2) : t_1 * t_2$.  We have $\rho
        \vdash V_1,V_2 \Rightarrow C_1;C_2,(p_1,p_2)$.
        By the IH, $\impdt{C_1} = \Aarr f_1$ and $\impdt{C_2} = \Aarr
        f_2$, where $f_1$ and $f_2$ are either $id$ or $\Padd{}$s.
        We also have:
        \[\begin{array}{rl}
           & \aqq{Vi} \\
          = & \Aarr (\lambda s.~ s, \vqq{Vi}s) \\
          = & \Aarr (\metaF{lift}~ \rho) \Athen \impdt{C_i} \Athen
            \Aarr (\lambda s.~ \metaF{restrict}~ \rho~ s, \vqq{p_i} s) \\
          = & \Aarr (\metaF{lift}~ \rho) \Athen \Aarr f_i \Athen
            \Aarr (\lambda s.~ \metaF{restrict}~ \rho~ s, \vqq{p_i} s) \\
          = & \Aarr (\lambda s.~ 
            \metaF{restrict}~ \rho (f_i (\metaF{lift}~ \rho~ s)),
            \vqq{p_i} (f_i ~(\metaF{lift}~ \rho~ s))) \\
          = & \Aarr (\lambda s.~ s, \vqq{p_i} (f_i~ (\metaF{lift}~ \rho~ s)))
        \end{array}\]
        So $\vqq{Vi}s = \vqq{p_i} (f_i (\metaF{lift}~ \rho~ s))$.
        Let $f = f_1;f_2$.  We derive:
        \[\begin{array}{rl}
           & \aqq{V_1,V_2}  \\
          = & \Aarr (\lambda s.~ s, (\vqq{V_1}s,\vqq{V_2}s)) \\
          = & \Aarr (\lambda s.~ s, 
            (\vqq{p_1} (f_1~ (\metaF{lift}~ \rho~ s)),
             \vqq{p_2} (f_2~ (\metaF{lift}~ \rho~ s))) \text{\quad by weakening/independence} \\
          = & \Aarr (\lambda s.~ s, 
            (\vqq{p_1} ((f_1;f_2) (\metaF{lift}~ \rho~ s)), 
             \vqq{p_2} ((f_1;f_2) (\metaF{lift}~ \rho~ s))) \\ 

          = & \Aarr (\lambda s.~ \metaF{restrict}~ \rho~ (f_1;f_2 (\metaF{lift}~ \rho~ s)), \\
          &  \qquad (\vqq{p_1} ((f_1;f_2) (\metaF{lift}~ \rho~ s)), 
             \vqq{p_2} ((f_1;f_2) (\metaF{lift}~ \rho~ s))) \\ 

          = & \Aarr (\metaF{lift}~ \rho) \Athen \Aarr (f_1;f_2) \Athen 
              \Aarr (\lambda s.~ \metaF{restrict}~ \rho~ s, (\vqq{p_1} s, \vqq{p_2} s)) \\

          = & \Aarr (\metaF{lift}~ \rho) \Athen \impdt{C_1} \Athen \impdt{C_2} \Athen 
              \Aarr (\lambda s.~ \metaF{restrict}~ \rho~ s, \vqq{(p_1,p_2)} s) \\

          = & \Aarr (\metaF{lift}~ \rho) \Athen \impdt{C_1;C_2} \Athen 
              \Aarr (\lambda s.~ \metaF{restrict}~ \rho~ s, \vqq{(p_1,p_2)} s)
          \rlap{\hbox to 78 pt{\hfill\qEd}}
          \end{array}\]
      \end{enumerate}

\begin{restate}{Theorem~\ref{thm:fun-to-imp-correct}}
$\Gamma \vdash M : t$ and $\Sigma \vdash \rho : \Gamma$ and $\rho \vdash M \Rightarrow C,p$ then:
\[\aqq{M} = \Aarr (\metaF{lift}~\rho) \Athen \impdt{C} \Athen \Aarr (\lambda s.~(\metaF{restrict}~\rho~s, \vqq ps))\]
\end{restate}

\proof
    By induction on $\Gamma \vdash M : t$.
    \begin{enumerate}[\hbox to8 pt{\hfill}] 

    \item\noindent{\hskip-12 pt\ref{Fun Var}} By the value lemma.
    \item\noindent{\hskip-12 pt\ref{Fun Const}} By the value lemma.
    \item\noindent{\hskip-12 pt\ref{Fun Pair}} By the value lemma.
    \item\noindent{\hskip-12 pt\ref{Fun Operator}} $\Gamma \vdash V_1 \otimes V_2 : b_3$ and $\rho \vdash V_1
      \otimes V_2 \Rightarrow (C_1;C_2; l \assign l_1 \otimes l_2),l$.
      We have $\aqq{V_1 \otimes V_2} = \Aarr (\lambda s.~ s,
      {\otimes}(\vqq{V_1}s, \vqq{V_2}s))$.
      By the value lemma (Lemma~\ref{lem:valequiv}):
      \[\begin{array}{rl}
        & \aqq{V_i} \\ 
        = & \Aarr (\lambda s.~ s, \vqq{V_i}s) \\
        = & \Aarr (\metaF{lift}~ \rho) \Athen \impdt{C_i} \Athen
          \Aarr (\lambda s.~ \metaF{restrict}~ \rho~ s, \vqq{l_i} s) \\
        = & \Aarr (\metaF{lift}~ \rho) \Athen \Aarr f_i \Athen 
          \Aarr (\lambda s.~ \metaF{restrict}~ \rho~ s, \vqq{l_i} s) \\
        =  & \Aarr (\lambda s.~ 
          \metaF{restrict}~ \rho~ (f_i~ (\metaF{lift}~ \rho~ s)), 
          \vqq{l_i} (f_i~ (\metaF{lift}~ \rho~ s)))  \\
        = & \Aarr (\lambda s.~ s, \vqq{l_i} (f_i (\metaF{lift}~ \rho~ s))) \\
        = & \Aarr (\lambda s.~ s, \vqq{l_i} ((f_1;f_2) (\metaF{lift}~ \rho~ s)) \qquad \text{by weakening/independence}
      \end{array}\]
      So $\vqq{V_i}s = \vqq{l_i} ((f_1;f_2) (\metaF{lift}~ \rho~ s))$.
      We derive:
      \[\begin{array}{rl}
        & \aqq{V_1 \otimes V_2} \\
        = & \Aarr (\lambda s.~ s, \vqq{V_1}s \otimes \vqq{V_2}s) \\
        = & \Aarr (\lambda s.~ s, {\otimes} (
          \vqq{l_1} ((f_1;f_2) (\metaF{lift}~ \rho~ s)), 
          \vqq{l_2} ((f_1;f_2) (\metaF{lift}~ \rho~ s)))) \\
        = & \Aarr (\metaF{lift}~ \rho) \Athen \Aarr (f_1;f_2) \Athen 
          \Aarr (\lambda s.~ \metaF{restrict}~ \rho~ s, {\otimes}(\vqq{l_1} s, \vqq{l_2} s))))\\
        = & \Aarr (\metaF{lift}~ \rho) \Athen \impdt{C_1} \Athen \impdt{C_2} \Athen 
          \Aarr (\lambda s.~ \metaF{restrict}~ \rho~ s, {\otimes}(\vqq{l_1} s, \vqq{l_2} s))))\\
        = & \Aarr (\metaF{lift}~ \rho) \Athen \impdt{C_1} \Athen \impdt{C_2} \Athen \impdt{l\assign l_1\otimes l_2} \Athen
          \Aarr (\lambda s.~ \metaF{restrict}~ \rho~ s, \vqq{l} s)))\\
        = & \Aarr (\metaF{lift}~ \rho) \Athen \impdt{C_1;C_2;l\assign l_1\otimes l_2} \Athen 
          \Aarr (\lambda s.~ \metaF{restrict}~ \rho~ s, \vqq{l} s)))
      \end{array}\]

    \item\noindent{\hskip-12 pt\ref{Fun Proj1}} $\Gamma \vdash V.1 : t_1$ and $\Gamma \vdash V :
      t_1 * t_2$.  We have $\rho \vdash V \Rightarrow C, (p_1,p_2)$
      and $\rho \vdash V.1 \Rightarrow C, p_1$.
      By the value lemma as before, we can conclude $\vqq{V}s =
      \vqq{(p_1,p_2)} (f~ (\metaF{lift}~ \rho~ s))$.  Therefore:
      \[\begin{array}{rl}
        & \aqq{V.1}  \\ 
        = & \Aarr (\lambda s.~ s, \mathtt{fst}~ \vqq{V}s) \\
        = & \Aarr (\lambda s.~ s, \mathtt{fst}~ (\vqq{(p_1,p_2)}(f~ (\metaF{lift}~ \rho~ s))) \\
        = & \Aarr (\lambda s.~ s, \vqq{p_1} (f~ (\metaF{lift}~ \rho~ s)) \\
        = & \Aarr (\metaF{lift}~ \rho) \Athen \Aarr f \Athen 
          \Aarr (\lambda s.~ \metaF{restrict}~ \rho~ s, \vqq{p_1} s) \\
        = & \Aarr (\metaF{lift}~ \rho) \Athen \impdt{C} \Athen 
          \Aarr (\lambda s.~ \metaF{restrict}~ \rho~ s, \vqq{p_1} s) 
      \end{array}\]

    \item\noindent{\hskip-12 pt\ref{Fun Proj2}} Symmetric to Proj1.

    \item\noindent{\hskip-12 pt\ref{Fun If}} $\Gamma \vdash \Lif{V_1}{M_2}{M_3} : t$.  
      We have:
      \[\rho \vdash ... \Rightarrow C_1; 
      \Lif{l_1}{\Local{\locs(C_2)} C_2; p \assign 2}{\Local{\locs(C_3)}
      C_3; p \assign p_3}, p\]
      Our IHs are: $\aqq{M_i} = \Aarr (\metaF{lift}~ \rho) \Athen
      \impdt{C_i} \Athen \Aarr (\lambda s.~ \metaF{restrict}~ \rho~ s,
      \vqq{p_i} s)$.
      By the value lemma we have 
      $\impdt{V_1} = \Aarr f_1$ for some $f_1$ such that $\vqq{V_1} s = §\vqq{l_1} (f_1 (\metaF{lift}~ \rho~ s))$.
      We now calculate (at length):
      \[\begin{array}{rl}
        & \aqq{\Lif{V_1}{M_2}{M_3}} \\
        = & \Achoose (\lambda s.~ \vqq{V_1}s)\ \aqq{M_2}\ \aqq{M_3} \\

        & \\

        = & \Achoose (\lambda s.~\vqq{l_1} (f_1~ (\metaF{lift}~ \rho~ s))) \\
        & \qquad (\Aarr (\metaF{lift}~ \rho) \Athen \impdt{C_2} \Athen 
          \Aarr (\lambda s.~ \metaF{restrict}~ \rho~ s, \vqq{p_2} s)) \\
        & \qquad (\Aarr (\metaF{lift}~ \rho) \Athen \impdt{C_3} \Athen 
         \Aarr (\lambda s.~ \metaF{restrict}~ \rho~ s, \vqq{p_3} s)) \\

        & \\

        = & \Aarr (\metaF{lift}~ \rho) \Athen \Achoose (\lambda s.\vqq{l_1} (f_1~ s)) \\
        & \qquad (\impdt{C_2} \Athen \Aarr (\lambda s.~ \metaF{restrict}~ \rho~ s, \vqq{p_2} s)) \\
        & \qquad (\impdt{C_3} \Athen \Aarr (\lambda s.~ \metaF{restrict}~ \rho~ s, \vqq{p_3} s))\\

        & \\

        = & \Aarr (\metaF{lift}~ \rho) \Athen \Achoose (\lambda s.\vqq{l_1} (f_1~ s)) \\
        & \qquad (\impdt{C_2} \Athen \impdt{p \assign p_2} \Athen 
            \Aarr (\lambda s.~ \metaF{restrict}~ \rho~ s, \vqq p s)) \\
        & \qquad (\impdt{C_3} \Athen \impdt{p \assign p_3} \Athen 
            \Aarr (\lambda s.~ \metaF{restrict}~ \rho~ s, \vqq p s))  \\

        & \\            

        = & \Aarr (\metaF{lift}~ \rho) \Athen \Achoose (\lambda s.\vqq{l_1} (f_1~ s)) \\
        & \qquad (\impdt{C_2} \Athen \impdt{p \assign p_2} \Athen \Aarr (\Pdrop \locs(C_2)) 
            \Athen \Aarr (\lambda s.~ \metaF{restrict}~ \rho~ s, \vqq p s)) \\
        & \qquad (\aqq{C_3} \Athen \aqq{p \assign p_3} \Athen \Aarr (\Pdrop \locs(C_3))
            \Athen \Aarr (\lambda s.~ \metaF{restrict}~ \rho~ s, \vqq p s)) \\
  
        & \\

        = & \Aarr (\metaF{lift}~ \rho) \Athen (\Achoose (\lambda s.\vqq{l_1} (f_1~ s)) \\
        & \qquad (\aqq{C_2} \Athen \aqq{p \assign p_2} \Athen \Aarr (\Pdrop \locs(C_2)))  \\
        & \qquad (\aqq{C_3} \Athen \aqq{p \assign p_3} \Athen \Aarr (\Pdrop \locs(C_3)))) \Athen \\
        & \Aarr (\lambda s.~ \metaF{restrict}~ \rho~ s, \vqq p s) \\
       
        & \\

        = & \Aarr (\metaF{lift}~ \rho) \Athen \aqq{C_1} \Athen
          (\Achoose (\lambda s.\vqq{l_1} s) \\
        & \qquad (\aqq{C_2;p \assign p_2} \Athen \Aarr (\Pdrop \locs(C_2))) \\
        & \qquad (\aqq{C_3;p \assign p_3} \Athen \Aarr (\Pdrop \locs(C_3)))) \Athen \\
        & \Aarr (\lambda s.~ \metaF{restrict}~ \rho~ s, \vqq p s) \\
 
        = & \Aarr (\metaF{lift}~ \rho) \Athen \aqq{C_1} \Athen
          (\Achoose (\lambda s.\vqq{l_1} s) \\
        & \qquad (\aqq{\Local{\locs(C_2)} C_2;p \assign p_2}) \\
        & \qquad (\aqq{\Local{\locs(C_3)} C_3;p \assign p_3})) \Athen\\
        & \Aarr (\lambda s.~ \metaF{restrict}~ \rho~ s, \vqq p s)
      \end{array}\]

    \item\noindent{\hskip-12 pt\ref{Fun Let}} $\Gamma \vdash \kw{let}~ x = M_1 ~\kw{in}~ M_2 :
      t_2$; by inversion, $\Gamma \vdash M_1 : t_1$ and $\Gamma,x:t_1
      \vdash M_2 : t_2$.

      Let $\rho' = \rho\{x \mapsto p_1\}$ and $\Sigma_1=(\locs(C_1)\setminus\fv(p_1))$.  We have:
      \[\begin{array}{l}
        \rho \vdash M_1 \Rightarrow C_1, p_1 \\
        \rho' \vdash M_2 \Rightarrow C_2, p_2 \\
        \rho \vdash \kw{let}~ x = M_1 ~\kw{in}~ M_2 \Rightarrow 
            (\Local{\Sigma_1} C_1);C_2, p_2
      \end{array}\]
      As our IHs:
      \begin{eqnarray*}
        \aqq{M_1} &=& \Aarr (\metaF{lift}~ \rho) \Athen \aqq{C_1} \Athen 
              \Aarr (\lambda s.~ \metaF{restrict}~ \rho~ s, \vqq{p_1} s) \\
        \aqq{M_2} &=& \Aarr (\metaF{lift}~ \rho') \Athen \aqq{C_2} \Athen 
              \Aarr (\lambda s.~ \metaF{restrict}~ \rho' s, \vqq{p_2} s)
      \end{eqnarray*}

      We derive:
      \[\begin{array}{rl}
        & \aqq{\kw{let}~ x = M_1 ~\kw{in}~ M_2} \\

        = & \aqq{M_1} \Athen \Aarr (\Padd x) \Athen \aqq{M_2} \Athen 
        \Aarr (\lambda s,y.~ \Pdrop x~ s,y) \\

        = & \Aarr (\metaF{lift}~ \rho) \Athen \aqq{C_1} \Athen 
        \Aarr (\lambda s.~ \metaF{restrict}~ \rho~ s, \vqq{p_1} s) \Athen 
        \Aarr (\Padd x) \Athen\\
        & \aqq{M_2} \Athen \Aarr (\lambda s,y.~ \Pdrop x~ s,y) \\

        = &  \Aarr (\metaF{lift}~ \rho) \Athen \aqq{C_1} \Athen 
        \Aarr (\lambda s.~ \metaF{restrict}~ \rho~ s, \vqq{p_1} s) \Athen \\
        & \Aarr (\Padd x) \Athen \Aarr (\metaF{lift}~ \rho') \Athen \\
        & \aqq{C_2} \Athen \Aarr (\lambda s.~ \metaF{restrict}~ \rho' s, \vqq{p_2} s) \Athen  \Aarr (\lambda s,y.~ \Pdrop x~ s,y) \\

        = & \Aarr (\metaF{lift}~ \rho) \Athen \aqq{C_1} \Athen \Aarr (\Pdrop(\dom(\Sigma_1))) \Athen 
        \\ &\aqq{C_2} \Athen \Aarr (\lambda s.~ \metaF{restrict}~ \rho' s, \vqq{p_2} s) \Athen \Aarr (\lambda s,y.~ \Pdrop x~ s,y) \\

        = & \Aarr (\metaF{lift}~ \rho) \Athen \aqq{C_1} \Athen \Aarr (\Pdrop(\dom(\Sigma_1)))\Athen \\ &\aqq{C_2} \Athen
        \Aarr (\lambda s.~ \metaF{restrict}~ \rho~ s, \vqq{p_2} s) \\

        = & \Aarr (\metaF{lift}~ \rho) \Athen \aqq{(\Local{\Sigma_1} C_1);C_2} \Athen
        \Aarr (\lambda s.~ \metaF{restrict}~ \rho~ s, \vqq{p_2} s)
      \end{array}\]

    \item\noindent{\hskip-12 pt\ref{Fun Random}} $\Gamma \vdash \kw{random} (D(V)) : b$, where $D :
      (b_1,...,b_n) \to b_{n+1}, \Gamma \vdash V : (b_1,...,b_n)$.  We have
      $\rho \vdash V \Rightarrow C,p$ and $\rho \vdash D(V)
      \Rightarrow C;l \assign D(p),l$.
      By the value lemma, $\aqq{C} = \Aarr f$ and $\vqq{V}s = \vqq p (f~
      (\metaF{lift}~ \rho~ s))$.
      We derive:
      \[\begin{array}{rl}
        & \aqq{\kw{random} (D(V))} \\
        = & \Aextend (\lambda s.~\mu_{D(\vqq{V}s)}) \\
        = & \Aextend (\lambda s.~\mu_{D(p (f (\metaF{lift}~ \rho~ s)))}) \\
        = & \Aarr (\metaF{lift}~ \rho) \Athen \Aextend (\lambda s.~\mu_{D(p (f s))}) \Athen 
        \Aarr (\lambda s,v.~ \metaF{restrict}~ \rho~ s,v) \\
        = & \Aarr (\metaF{lift}~ \rho) \Athen \Aarr f \Athen \Aextend (\lambda s.~\mu_{D(\vqq p s)}) 
        \Athen \Aarr (\lambda s,v.~ \metaF{restrict}~ \rho~ s,v) \\
        = & \Aarr (\metaF{lift}~ \rho) \Athen \aqq{C} \Athen \Aextend (\lambda s.~\mu_{D(\vqq p s)}) 
        \Athen \Aarr (\lambda s,v.~ \metaF{restrict}~ \rho~ s,v) \\
        = & \Aarr (\metaF{lift}~ \rho) \Athen \aqq{C} \Athen \Aextend (\lambda s.~\mu_{D(\vqq p s)}) \Athen \\
        & \Aarr (\Padd l) \Athen \Aarr (\lambda s.~ \metaF{restrict}~ \rho~ s,\vqq l s) \\
        = & \Aarr (\metaF{lift}~ \rho) \Athen \aqq{C;l \assign D(p)} 
        \Athen \Aarr (\lambda s.~ \metaF{restrict}~ \rho~ s, \vqq l s)
      \end{array}\]

    \item\noindent{\hskip-12 pt\ref{Fun Observe}} $\Gamma \vdash \kw{observe}~ V : \unitT$ and
      $\Gamma \vdash V : b$ for some base type $b$.
      We have $\rho \vdash V \Rightarrow C,l$.
      By the value lemma: $\aqq{C} = \Aarr f$ and $\vqq{V}s = \vqq l (f~
      (\metaF{lift}~ \rho~ s))$.
      \[\begin{array}{rl}
         & \aqq{\kw{observe}~ V} \\ 
        = & \Aobserve (\lambda s.~\vqq{V}s) \Athen \Aarr (\lambda s.~(s,()) \\
        = & \Aobserve (\lambda s.~l (f (\metaF{lift}~ \rho~ s))) \Athen \Aarr (\lambda s.~s,()) \\
        = & \Aarr (\metaF{lift}~ \rho) \Athen \Aobserve (\lambda s.~ \vqq l (f~ s))
        \Athen \Aarr (\lambda s.~\metaF{restrict}~ \rho~ s,()~ s) \\
        = & \Aarr (\metaF{lift}~ \rho) \Athen \Aarr f \Athen \Aobserve (\lambda s.~ \vqq l s) 
        \Athen \Aarr (\lambda s.~\metaF{restrict}~ \rho~ s,()~ s) \\
        = & \Aarr (\metaF{lift}~ \rho) \Athen \aqq{C} \Athen \Aobserve (\lambda s.~ \vqq l s) 
        \Athen \Aarr (\lambda s.~\metaF{restrict}~ \rho~ s,()~ s) \\
        = & \Aarr (\metaF{lift}~ \rho) \Athen \aqq{C;\Aobserve l} 
        \Athen \Aarr (\lambda s.~\metaF{restrict}~ \rho~ s,()~ s)
        \rlap{\hbox to 83 pt{\hfill\qEd}}
      \end{array}\]
    \end{enumerate}
\pagebreak
\end{FULL} 

\end{document}
